\DeclareFontFamily{U}{matha}{\hyphenchar\font45}
\DeclareFontShape{U}{matha}{m}{n}{
      <5> <6> <7> <8> <9> <10> gen * matha
      <10.95> matha10 <12> <14.4> <17.28> <20.74> <24.88> matha12
      }{}
\DeclareSymbolFont{matha}{U}{matha}{m}{n}
\DeclareFontFamily{U}{mathb}{\hyphenchar\font45}
\DeclareFontShape{U}{mathb}{m}{n}{
      <5> <6> <7> <8> <9> <10> gen * mathb
      <10.95> mathb10 <12> <14.4> <17.28> <20.74> <24.88> mathb12
      }{}
\DeclareSymbolFont{mathb}{U}{mathb}{m}{n}
\DeclareFontFamily{U}{mathx}{}
\DeclareFontShape{U}{mathx}{m}{n}{<-> mathx10}{}
\DeclareSymbolFont{mathx}{U}{mathx}{m}{n}
\DeclareMathDelimiter{(}{\mathopen}{matha}{"70}{mathx}{"00}
\DeclareMathDelimiter{)}{\mathclose}{matha}{"71}{mathx}{"08}
\DeclareMathDelimiter{[}{\mathopen}{matha}{"72}{mathx}{"10}
\DeclareMathDelimiter{]}{\mathclose}{matha}{"73}{mathx}{"18}
\DeclareMathDelimiter{\lcurl}{\mathopen}{matha}{"74}{mathx}{"20}
\DeclareMathDelimiter{\rcurl}{\mathclose}{matha}{"75}{mathx}{"28}
\DeclareMathDelimiter{\lbbrac}{\mathopen}{matha}{"76}{mathx}{"30}
\DeclareMathDelimiter{\rbbrac}{\mathopen}{matha}{"77}{mathx}{"38}
\DeclareMathDelimiter{\langle}{\mathopen}{matha}{"78}{mathx}{"40}
\DeclareMathDelimiter{\rangle}{\mathclose}{matha}{"79}{mathx}{"44}
\DeclareMathSymbol{\dagger}{2}{matha}{"3A}
\title{The subcritical finite-volume massive sine-Gordon model}
\author{Jaka Pelaič}
\affil{Mathematical Institute, University of Oxford}
\date{July 2025}
\newcommand{\dd}{\mathrm{d}}
\newcommand{\sqrtbeta}{\sqrt{\hspace{-1.16pt}\beta\hspace{1.16pt}}}
\newcommand{\sqrtbetae}{\sqrt{\hspace{-.5pt}\beta\hspace{.5pt}}}
\newtheorem{thm}{Theorem}[section]
\newtheorem{defn}[thm]{Definition}
\newtheorem{lem}[thm]{Lemma}
\newtheorem{cor}[thm]{Corollary}
\newtheorem{prop}[thm]{Proposition}
\begin{document}

\maketitle
\begin{abstract}
    We present a construction of the finite-volume massive sine-Gordon model in the UV subcritical regime using a renormalization group method. The resulting measure has Gaussian tails, respects toroidal symmetries and is reflection-positive.
\end{abstract}

\begin{frame}{}
\begin{multicols}{2}
  \tableofcontents
  \end{multicols}
\end{frame}

\section{Introduction}

\subsection{Initial remarks}

The sine-Gordon model is quite popular in the statistical mechanics and quantum field theory literature. For an early study see \cite{Coleman1975}. The first rigorous renormalization group study of the model began with the Roman school in the series of papers \cite{Benfatto1982, Nicol1983, Nicol1986, Renn1986}. Soon thereafter \cite{Brydges1987} studied the model by a direct analysis of the Polchinski equation, but could not cover the full subcritical regime. With \cite{BrydgesYau} the foundation of an alternative method was laid. \cite{Dimock1993} was the first attempt at controlling the full subcritical regime in this approach, but unfortunately an error spoiled the results. The later \cite{Dimock2000} also contains an error.\footnote{This was only spotted recently by the author. The issue is that (56) does not follow from Lemma 6 as claimed. It is probably possible to fix the proof by switching back to the more local norms used in \cite{Dimock1993}, but we are not yet in a position to adjudicate whether this would cause any other problems. The present work is an alternative solution which avoids the problematic cluster expansion step altogether. This does come at a price: without a cluster expansion, it is not clear how to implement the usual leveraging of the mass for good control of the infinite volume behavior.} The next substantial development was the injection into the RG framework of the finite-range decomposition \cite{Mitter2000, Brydges2003}. Since then there has been further investigation of this and closely related models, especially in the critical phase (cf. e.g. \cite{Falco1,Falco2,Bauerschmidt2024I,Bauerschmidt2024II}).

The present work is primarily a technical note: to the best of our knowledge, in the available literature there is currently no complete construction of the model with Brydges-Yau-type methods in the full UV-subcritical (or superrenormalizable) regime, even in finite volume (this might be considered a folklore result). We also take this opportunity to establish some additional properties of the limiting measures. It should be possible, perhaps with substantial modifications of the method, to extend (at least some of) our results to infinite volume, which question we hope to revisit in the future.

Our main references were \cite{Dimock2000, ParkCity, Falco1}. If anything is ever unclear, it is a good bet that it is covered in one of these works.

\subsection{Structural remarks}

This paper is designed to be read as follows. The main numbered sections constitute a minority of the text, since there are many technical notions and results that we have relegated to the appendices. When we think it is time for the reader to familiarize themselves with the contents of a given appendix, we make a note of this in the body of the text. From that point onward, it is assumed that the reader understands the given appendix and can independently refer to it when they feel the need to refresh their memory.

\subsection{The model}

We study the sine-Gordon measures. Let $\beta\in[4\pi,8\pi)$ and $\zeta$ be a bounded Borel function on the unit torus $\mathbb{T}^2$.\footnote{The restriction on $\beta$ away from $0$ is not necessary, we have made it to avoid constants depending on $\beta^{-1}$ which seems to only be a distraction. The interested reader can easily modify the results herein to cover this case, which can anyway be treated by other methods giving stronger results.} Furthermore pick $\varepsilon > 0$ and then let $\gamma[C_\varepsilon]$ be the law of the mollified unit mass\footnote{We put in a mass to keep things as simple as possible, primarily to avoid issues with the zero mode which arise in the massless model. Setting it equal to $1$ is convenient because it equals the width of the torus. Of course, this is not a \enquote{generic} choice, but the additional complications in case the mass is not fixed to the torus size did not seem illuminating enough to warrant investigation in this note.} GFF $\chi_\varepsilon * \Phi$ ($\chi_\varepsilon$ is a not-too-smooth radially symmetric compactly supported mollifier with integral $1$, see Appendix \ref{AppMolly}). Put
\begin{equation}
    \mu_{\mathrm{sG}}[\varepsilon,\zeta](\dd\phi) = \exp\!\bigg(\int_{\mathbb{T}^2}\zeta(x) \big\lbbrac\!\cos\!\big(\sqrtbeta\phi(x)\big)\hspace{-1pt}\big\rbbrac_{\varepsilon}\dd x + E[\varepsilon,\zeta]\bigg)\gamma[\mathrm{C}_\varepsilon](\dd\phi).
\end{equation}
$E[\varepsilon,\zeta]$ is chosen to make $\mu_{\mathrm{sG}}$ a probability measure and the double brackets denote Wick ordering:
\begin{equation}
\big\lbbrac\!\cos\!\big(\sqrtbeta\phi(x)\big)\hspace{-1pt}\big\rbbrac_{\varepsilon}\, = e^{\frac{1}{2}\beta \mathrm{C}_\varepsilon(0)}\cos\!\big(\sqrtbeta\phi(x)\big).
\end{equation}
$\mu_{\mathrm{sG}}$ does make sense because $\mathrm{C}_\varepsilon$ is a UV-mollified GFF covariance with cutoff $\varepsilon$ so that the Gaussian measure $\gamma[\mathrm{C}_\varepsilon]$ is at least $\mathcal{C}^0$-valued. If $\mathrm{T}$ is a symmetry of the torus,
\begin{equation}
    \mathrm{T}_*\mu_\mathrm{sG}[\varepsilon,\zeta] = \mu_\mathrm{sG}[\varepsilon,\mathrm{T}\zeta].\label{EuclideanCovariance}
\end{equation}

At this point we fix some constants. First, $\sigma \coloneqq 1-\frac{\beta}{8\pi} \in (0,\frac{1}{2}]$ is the distance to criticality. $\mathfrak{d}$ is an integer of order (say) $10$ and determines how many derivatives $\chi_{\varepsilon}$ has; equivalently, its decay in momentum space. $\delta$ is a small positive number which could be fixed at, say, $\delta = \frac{1}{100}$ throughout, $\ell \gtrsim_{\sigma^{-1},\delta^{-1}}1$ is an integer, $L =\ell^{\mathfrak{m}}$ where $\mathfrak{m} = \ell^2$, $A\gtrsim_L 1$, and $\zeta_*\lesssim_{A^{-1}}1$. We also introduce $\mathfrak{c}^{(1)},\mathfrak{c}^{(2)},\mathfrak{c}^{(3)},\mathfrak{c}^{(4)}$ with $\mathfrak{c}^{(2)}\lesssim_1 1\lesssim_1 \mathfrak{c}^{(1)}\wedge\mathfrak{c}^{(3)}$, $\mathfrak{c}^{(4)}\gtrsim_1\mathfrak{c}^{(3)}$, and $\mathfrak{h}_*\gtrsim_1 1$, $\kappa_* = \mathfrak{c}^{(4)} \log L$. $\zeta_\bullet$ is a completely arbitrary positive number and will represent the upper bound on the size of the coupling $\zeta$ as measured in $\mathrm{L}^\infty$. $j_\bullet$ is a negative integer defined as
\begin{equation}
    j_\bullet = \bigg\lfloor \sigma^{-1}\log_L\!\bigg(\frac{\zeta_*}{\zeta_\bullet}\bigg)\bigg\rfloor \wedge 0
\end{equation}
and will serve as the terminus for the UV RG iteration; it is chosen so that if $\Vert\zeta\Vert_{\mathrm{L}^\infty} \leqslant \zeta_\bullet$, the effective coupling at scale $j_\bullet$ is bounded by $\zeta_*$. Write $\mathrm{B}_\bullet=\zeta_\bullet\,\overline{\mathrm{Ball}}(\mathrm{L}^\infty)$ topologized as a subset of $\mathrm{L}^\wp$ where $\wp = 5\sigma^{-1}$ (anything strictly larger than $4\sigma^{-1}$ would work); it is a closed subset. Finally, we introduce a maximal mollification scale
\begin{equation}
    \varepsilon_\bullet = \frac{1}{6}L^{j_\bullet-1}.
\end{equation}

An important convention we will follow is that $\mathcal{O}_{x_1,\ldots,x_n}$ denotes some constant depending on the specified parameters and this can change from line to line, which spares us from naming the various transient quantities that appear in estimates and cannot be comfortably absorbed with the $\lesssim$ notation.

The Laplace transforms for $J\in\mathcal{C}^\infty(\mathbb{T}^2)$ are
\begin{equation}
    \mathrm{Z}[\varepsilon,\zeta](J) = \int e^{\langle J,\phi\rangle}\mu_{\mathrm{sG}}[\varepsilon,\phi](\dd\phi).
\end{equation}
These are Fr\'echet-analytic functionals on $\mathcal{C}^\infty(\mathbb{T}^2)$. To measure their size, we introduce the space $\mathfrak{C}$ of continuous functionals on $\mathcal{C}^\infty(\mathbb{T}^2)$ with finite norm
\begin{equation}
    \Vert F\Vert_{\mathfrak{C}} = \sup_{J\in\mathcal{C}^\infty(\mathbb{T}^2)} e^{-\gamma\Vert J\Vert^2_*}\vert F(J)\vert;
\end{equation}
here $\gamma$ is $\mathcal{O}_1$ and $\Vert\!\cdot\!\Vert_*$ is a Sobolev norm with sufficiently large regularity (e.g. $10$ derivatives will certainly suffice). We also define the subset $\mathfrak{L}\subset\mathfrak{C}$ of those functionals which are Laplace transforms of a probability measure. It turns out that $\mathfrak{L}$ is closed, convergence in $\mathfrak{L}$ implies weak convergence of the corresponding probability measures on $\mathcal{D}'(\mathbb{T}^2)$, and these measures all have Gaussian tails (see Appendix \ref{AppLaplace}). Now we have everything necessary to state the central result.

\begin{prop}\label{mainThrm}
    The map $\mathrm{Z}[\cdot,\cdot]: (0,\varepsilon_\bullet] \times \mathrm{B}_\bullet \longrightarrow \mathfrak{C}$ is Lipschitz.
\end{prop}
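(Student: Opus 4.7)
The strategy is to construct $\mathrm{Z}[\varepsilon,\zeta]$ via a multi-scale renormalization group (RG) iteration based on a finite-range decomposition of $C_\varepsilon$ into scale covariances $\Gamma_j$ of range $\mathcal{O}(L^j)$, with $j$ running from the UV scale $j_\bullet$ to the IR scale $0$ (the torus size, which coincides with the inverse mass). After each partial Gaussian integration, the Laplace transform is written as a Gaussian expectation against $\gamma\bigl[\sum_{k>j}\Gamma_k\bigr]$ of a polymer activity $\mathrm{e}^{-V_j}(1+K_j)$, where $V_j$ carries the running coupling $z_j$ in front of the Wick-ordered cosine and $K_j$ is an irrelevant remainder in a suitable Banach space of polymer activities. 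After all scales have been integrated out, $\mathrm{Z}[\varepsilon,\zeta](J)$ is read off from the terminal effective action $U_0$.

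Concretely, I would proceed as follows. \emph{Initial condition at scale $j_\bullet$}: rewrite the Wick-exponentiated cosine as $\mathrm{e}^{-V_{j_\bullet}}(1+K_{j_\bullet})$ with $V_{j_\bullet}$ linear in $\zeta$ and $K_{j_\bullet}=0$. The definition of $\varepsilon_\bullet$ places the mollifier entirely below the finest scale of the decomposition, so the initial covariance $C_\varepsilon$ and Wick constant $C_\varepsilon(0)$ depend smoothly on $\varepsilon$ alone. \emph{RG map}: convolve with $\Gamma_{j+1}$ and reorganize into the relevant/irrelevant split, extracting the new running coupling $z_{j+1}$ and packaging the residue as $K_{j+1}$. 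The key estimates are that this map is uniformly Lipschitz on $\{|z_j|\leqslant \zeta_*\}\times\{\Vert K_j\Vert\leqslant \kappa_*\}$, and contractive in the $K$-direction by a factor of order $L^{-\sigma\delta'}$ per scale, which is where subcriticality $\sigma>0$ enters. \emph{Terminal step}: after $|j_\bullet|$ iterations $U_0$ is small enough to evaluate $\mathrm{Z}[\varepsilon,\zeta](J)$ explicitly, verify it lies in $\mathfrak{C}$, and extract the Lipschitz modulus.

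Lipschitz continuity of $\mathrm{Z}$ then follows by running the flow in parallel for two inputs $(\varepsilon,\zeta)$ and $(\varepsilon',\zeta')$ and propagating differences scale by scale. Lipschitz dependence in $\zeta$ is immediate from the linear dependence of $V_{j_\bullet}$ on $\zeta$ together with the Lipschitz RG map; here the choice to topologize $\mathrm{B}_\bullet$ by $\mathrm{L}^\wp$ with $\wp>4\sigma^{-1}$ is exactly what is needed to control Wick squares of differences and absorb the renormalization constants $\mathrm{e}^{\frac{1}{2}\beta C_\varepsilon(0)}$. Lipschitz dependence in $\varepsilon$ exploits $\varepsilon\leqslant\varepsilon_\bullet$: the $\varepsilon$-dependence is confined to the initial Gaussian covariance and Wick constant, both smooth in $\varepsilon$, so the change in the initial datum is a tame perturbation that passes through the flow.

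The main analytic obstacle, and what occupies the bulk of the paper, is establishing the uniform RG bounds in the \emph{full} subcritical range $\beta\in[4\pi,8\pi)$. As $\sigma\downarrow 0$ the Wick-ordered cosine becomes progressively more singular, and one needs a stable representation of large-field regions together with fine charge-cancellation estimates (in the spirit of Dimock and Falco) to keep all Lipschitz constants finite. Once those scale-uniform bounds are in hand, the Lipschitz property of $\mathrm{Z}$ on $(0,\varepsilon_\bullet]\times\mathrm{B}_\bullet$ is essentially a bookkeeping corollary of the Lipschitz RG flow, the smoothness of the initial data in $(\varepsilon,\zeta)$, and the continuity properties of $\mathfrak{C}$ established in Appendix \ref{AppLaplace}.
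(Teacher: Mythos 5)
There are two genuine gaps. First, you have the geometry of the RG iteration wrong. The flow does not start at $j_\bullet$; it starts at the $\varepsilon$-dependent UV scale $-\mathsf{N}-1$ (defined by $\varepsilon = \epsilon L^{-\mathsf{N}-1}/6$) and is iterated up to the \emph{fixed} intermediate scale $j_\bullet$, after which only a bounded number $|j_\bullet|$ of remaining steps take us to scale $0$. The number $\mathsf{N}$ grows without bound as $\varepsilon\to 0$, and the main technical weight of the paper --- Propositions \ref{MainRgEstimates}, \ref{MainFlowEstimates} and the Corollary after them --- lies in obtaining estimates uniform over this variable-length stretch. Your claim that ``the $\varepsilon$-dependence is confined to the initial Gaussian covariance and Wick constant'' does not hold: $\chi$ is compactly supported in real space and only polynomially decaying in momentum, so the mollification touches $\Gamma_{\varepsilon;j}$ at \emph{every} scale $j$. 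The paper tracks this through the factor $L^{-j}|\varepsilon-\dot\varepsilon|$ appearing in (\ref{CovChangeBoundGamma})--(\ref{couplingControl2}) and threading through every difference estimate in Proposition \ref{MainRgEstimates}, and the crucial structural observation (made explicitly after Proposition \ref{MainFlowEstimates}) is that no term mixes a $K$-difference in both $\varepsilon$ and $\zeta$, which is what permits the bootstrap from the $\varepsilon=\dot\varepsilon$ case to the general one. Starting the iteration at $j_\bullet$ with $K_{j_\bullet}=0$ discards the entire segment over which the two flows at cutoffs $\varepsilon\neq\dot\varepsilon$ must actually be compared, so the comparison has no chance of closing.

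Second, you gloss over the positivity of the normalization. After the flow, $\mathrm{Z}(J)$ is written as a ratio whose denominator is $\mathbb{E}\big[\mathcal{Z}^{\mathrm{red}}_{j_\bullet}\big(\sum_{k}\mathrm{S}_{L^{k-j_\bullet}}\Psi_{\varepsilon;k}\big)\big]$, and the Lipschitz constant for $\mathrm{Z}$ is proportional to $m_\bullet^{-2}$, so one needs a uniform lower bound $m_\bullet>0$ for this quantity over $(\varepsilon,\zeta)\in(0,\varepsilon_\bullet]\times\mathrm{B}_\bullet$. This is not automatic, since the individual $K_j(X,\phi)$ are allowed to be negative. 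The paper devotes Section 3 to establishing it: one exponentiates the polymer sum via a Mayer expansion and a spanning-tree pruning bound, restricted to the small-field set $U_\bullet=\{G_{\kappa_*}(\mathbb{T}^2_{j_\bullet},\cdot)\leqslant 2\}$ where the expansion converges, and then uses that a Gaussian measure charges every neighborhood of the origin. Without an argument of this kind, your final step ``extract the Lipschitz modulus'' does not go through.
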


The point of this proposition is that $\mathrm{Z}[\cdot,\cdot]$ automatically extend to the completion $[0,\varepsilon_\bullet] \times \mathrm{B}_\bullet$, i.e. we get a UV limit. We also write $\mathrm{Z}[0,\zeta]$ for the corresponding limiting objects. Note that $\zeta_\bullet$ was arbitrary, so we in fact get this result for all couplings, but the Lipschitz constant is not uniform, so we have found this the most natural formulation of the result.

The analysis we do should also carry through to small complex coupling. This would show that $\mathrm{Z}[0,\zeta]$ have analytic extensions in $\zeta$ around $\zeta = 0$, which implies that the usual perturbation theory is convergent, a non-trivial result (though we emphasize that it is much less clear if in the complex case, we get corresponding complex-valued measures; this is because Levy's theorem for complex measures is more delicate). We have elected to restrict just to real coupling to avoid more complications regarding the finer points of infinite-dimensional holomorphy.

The \enquote{toroidal} OS axioms follow:

\begin{cor}
Let $\zeta$ be constant. The probability measure $\mu_\mathrm{sG}[0,\zeta]$ is reflection-positive and invariant under toroidal symmetries. If $0<\zeta \leqslant \zeta_*$, it is also non-Gaussian.
\end{cor}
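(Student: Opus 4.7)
The plan is to pull all three properties back from the finite-$\varepsilon$ measures via Proposition~\ref{mainThrm}. For toroidal invariance: constant $\zeta$ satisfies $\mathrm{T}\zeta=\zeta$ for any toroidal symmetry $\mathrm{T}$, so \eqref{EuclideanCovariance} gives $\mathrm{T}_*\mu_\mathrm{sG}[\varepsilon,\zeta]=\mu_\mathrm{sG}[\varepsilon,\zeta]$; equivalently $\mathrm{Z}[\varepsilon,\zeta](J\circ\mathrm{T})=\mathrm{Z}[\varepsilon,\zeta](J)$, and this identity is preserved by $\mathfrak{C}$-convergence since $\mathrm{T}$ is a $\Vert\cdot\Vert_*$-isometry. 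For reflection positivity across a coordinate hyperplane $H\subset\mathbb{T}^2$ with reflection $\theta$: $\gamma[\mathrm{C}]$ is RP by Nelson's Markov property for $(-\Delta+1)^{-1}$, and the radially symmetric $\chi_\varepsilon$ commutes with $\theta$, so $\mathrm{C}_\varepsilon=\chi_\varepsilon*\mathrm{C}*\chi_\varepsilon$ is RP against functions supported at distance $>\varepsilon$ from $H$. With constant $\zeta$ the interaction $V=\zeta\int\lbbrac\cos\sqrtbeta\phi\rbbrac_\varepsilon\dd x$ splits as $V_+ + \theta V_+$ (the hyperplane has Lebesgue measure zero), and the standard OS argument yields $\mathbb{E}_{\mu_\mathrm{sG}[\varepsilon,\zeta]}[F\cdot\theta\overline F]\geq 0$ for $F$ polynomial in $\phi(f_i)$ with each $f_i$ supported in one half at distance $>\varepsilon$ from $H$. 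Moment convergence from the Laplace-transform Gaussian tails (Appendix~\ref{AppLaplace}) carries this to $\mu_\mathrm{sG}[0,\zeta]$; an approximation covers functions with support up to $H$.

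\textbf{Non-Gaussianity.} I would pick a smooth $f\geq 0$ with $f\not\equiv 0$ and show the fourth cumulant of $\phi(f)$ is nonzero for small $\zeta>0$. For $\varepsilon>0$, $\mathrm{Z}[\varepsilon,\zeta]$ is entire analytic in $\zeta$; the Girsanov shift $\phi\mapsto\phi+\mathrm{C}_\varepsilon J$ together with $\mathbb{E}[\lbbrac\cos\sqrtbeta\phi\rbbrac_\varepsilon]=1$, $\mathbb{E}[\lbbrac\sin\sqrtbeta\phi\rbbrac_\varepsilon]=0$ and expansion to first order give
\begin{equation*}
\log\mathrm{Z}[\varepsilon,\zeta](J) = \tfrac{1}{2}(J,\mathrm{C}_\varepsilon J) + \zeta\Big[\int_{\mathbb{T}^2}\cos\!\big(\sqrtbeta(\mathrm{C}_\varepsilon J)(x)\big)\dd x - 1\Big] + O_\varepsilon(\zeta^2),
\end{equation*}
whose fourth $t$-Taylor coefficient in the direction $J=tf$ at $t=0$ is the fourth cumulant $\kappa_4^{(\varepsilon,\zeta)}(f) = \zeta\beta^2\int_{\mathbb{T}^2}(\mathrm{C}_\varepsilon f)(y)^4\dd y + O_\varepsilon(\zeta^2)$. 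The leading integral tends to $\beta^2\int(\mathrm{C}f)^4>0$ as $\varepsilon\to 0$, the positivity coming from $\mathrm{C}$ having a strictly positive kernel (torus periodization of $\tfrac{1}{2\pi}K_0(\vert\cdot\vert)$). Convergence of $\mathrm{Z}[\varepsilon,\zeta]$ in $\mathfrak{C}$ combined with the exponential weight $e^{\gamma\Vert J\Vert_*^2}$ gives uniform convergence on compact complex $t$-disks, hence convergence of all $t$-Taylor coefficients at $t=0$, hence $\kappa_4^{(\varepsilon,\zeta)}(f)\to \kappa_4^{(0,\zeta)}(f)$ for each fixed $\zeta$.

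\textbf{Main obstacle.} The hard part will be the interchange of the $\varepsilon\to 0$ and $\zeta\to 0$ limits: Proposition~\ref{mainThrm} only gives Lipschitz (first-order) control in $\zeta$, whereas I need a uniform-in-$\varepsilon$ bound on the remainder $O_\varepsilon(\zeta^2)$. The cleanest resolution is to invoke the complex-$\zeta$ extension of the RG construction flagged in the remarks immediately after Proposition~\ref{mainThrm}: uniform-in-$\varepsilon$ analyticity in a disk $\vert\zeta\vert<r_0$ transfers to $\mathrm{Z}[0,\cdot]$ via Cauchy estimates, yielding $\kappa_4^{(0,\zeta)}(f)=\zeta\beta^2\int(\mathrm{C}f)^4+O(\zeta^2)\neq 0$ for $0<\zeta<r_0$ and hence non-Gaussianity on this small interval. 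Promoting this to the full $(0,\zeta_*]$ is then analytic continuation: $\zeta\mapsto\kappa_4^{(0,\zeta)}(f)$ is real-analytic and not identically zero, so has at most isolated zeros in $(0,\zeta_*]$, and at any such zero non-Gaussianity is witnessed by a different test function or a higher cumulant (simultaneous vanishing of all $\kappa_{n\geq 3}$ on every test function at an isolated positive $\zeta_0$ would be inconsistent with the nontrivial small-$\zeta$ expansion on the disk around $0$).
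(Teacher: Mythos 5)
Your toroidal-invariance argument matches the paper's. The reflection-positivity step, however, has a gap. You propose to split $V=V_+ + \vartheta V_+$ and run the standard OS argument, but $V_+$ (the Wick-ordered cosine interaction integrated over the upper half) is not measurable with respect to the field on one side of the hyperplane $H$: the mollification forces the integrand at points $x$ within distance $\varepsilon$ of $H$ to depend on the underlying field $\Phi$ on \emph{both} sides, or equivalently $V_+$ does not lie in the restricted class of observables for which $\mathrm{C}_\varepsilon$ is RP, since it integrates all the way up to $H$. Thus the factorization $e^V = e^{V_+}\,\vartheta(e^{V_+})$ does not yield a positive kernel, and $\mu_\mathrm{sG}[\varepsilon,\zeta]$ for $\varepsilon>0$ is simply not RP in the form you need. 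The paper fixes this by first restricting the \emph{coupling}: with $\zeta_n=\zeta\,\mathbb{I}_{A_n\cup\vartheta A_n}$ and $A_n$ at distance $\geqslant n^{-1}$ from $H$, the interaction becomes a genuine $\mathcal{F}_+$-measurable factor once $\varepsilon<n^{-1}$; one proves RP of $\mu_\mathrm{sG}[\varepsilon,\zeta_n]$ against observables away from $H$, transfers to $\varepsilon=0$ via $\langle\vartheta F,F\rangle=\sum_{i,j}c_i\overline{c_j}\,\mathrm{Z}[\varepsilon,\zeta_n](J_i+\vartheta J_j)$ (exponential $F$, which reduces the transfer to $\mathfrak{C}$-convergence directly and sidesteps the moment argument you invoke), relaxes the support restriction on $F$ by continuity, and finally sends $n\to\infty$ using $\zeta_n\to\zeta$ in $\mathrm{L}^\wp$.

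Your non-Gaussianity argument takes a genuinely different route (perturbative $\kappa_4$ at $\varepsilon>0$, uniform control via complex-$\zeta$ analyticity, then real-analytic continuation), but its final step is wrong. Real-analyticity and nonvanishing near $0$ only exclude Gaussianity outside a discrete set of $\zeta$-values; at an isolated zero $\zeta_0$ the measure could perfectly well be Gaussian, and the parenthetical claim that simultaneous vanishing of all $\kappa_{n\geqslant 3}$ at $\zeta_0$ would contradict the small-$\zeta$ expansion is not an argument --- a one-parameter family of measures can touch the Gaussian locus at an interior point of its domain without being identically Gaussian. The paper does not face this difficulty: for $\zeta_\bullet=\zeta_*$ the RG terminates at $j_\bullet=0$ with the explicit
\begin{equation*}
    \mathrm{Z}[0,\zeta](J) = e^{\frac{1}{2}\langle \mathrm{C}J,J\rangle +V(\mathrm{C} J)}\,\frac{1+ K^V_1(\mathrm{C} J)}{1+ K^V_1(0)},\qquad \Vert K_1^V\Vert_{\mathfrak{h}_*,\kappa_*/2}\lesssim_1\Vert\zeta\Vert_\wp^{2-\delta},
\end{equation*}
from which $\vert\mathfrak{S}^{\mathrm{c},\mathrm{t}}_4(1,1,1,1)\vert\geqslant\beta^2\vert\zeta\vert-\mathcal{O}_1\vert\zeta\vert^{2-\delta}>0$ holds uniformly over the whole interval $(0,\zeta_*]$, because $\zeta_*\lesssim_{A^{-1}}1$ is chosen small. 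If you ran your complexification argument with this RG bound in hand, the analyticity radius would already be $\zeta_*$ and the error would be $O(\zeta^{2-\delta})$ uniformly; your continuation step is not merely flawed but unnecessary.
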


\begin{proof}
    The covariance property (\ref{EuclideanCovariance}) transfers by weak convergence to the limiting measures and so
    \begin{equation}
        \mathrm{T}_*\mu_\mathrm{sG}[0,\zeta] = \mu_\mathrm{sG}[0,\mathrm{T}\zeta] = \mu_\mathrm{sG}[0,\zeta].
    \end{equation}
    For reflection positivity, view $\mathbb{T}$ as $(-1/2, 1/2]^2$ and define $A_n$ to be the subset $(-1/2,1/2]\times[n^{-1},1/2-n^{-1}]$. Let $\zeta_n = \zeta\,\mathbb{I}_{A_n \cup \vartheta A_n}$ be the coupling which equals $\zeta$ on $A_n \cup \vartheta A_n$ ($\vartheta$ being the reflection) and vanishes otherwise. As $\zeta_n\rightarrow\zeta$ in $\mathrm{L}^\wp$, it suffices to establish RP for the approximating measures $\mu_\mathrm{sG}[0,\zeta_n]$. Let $F(\phi) = \sum_i c_i e^{\langle J_i,\phi\rangle}$ be an exponential functional which depends only on the field values in $(-1/2,1/2]\times[\varepsilon',1-\varepsilon']$ (i.e. $\mathrm{supp}(J_i)\subset\mathbb{R}\times(-1/2,1/2]\times[\varepsilon',1-\varepsilon']$). Then if $\varepsilon<\varepsilon'$,
    \begin{align}
        \langle\vartheta F,F\rangle&_{\mathrm{L}^2(\mu_{\mathrm{sG}}[\varepsilon,\zeta_n])} = \mathbb{E}\bigg[\overline{F(\vartheta\Phi_{\hspace{-.5pt}\varepsilon})}F(\Phi_{\hspace{-.5pt}\varepsilon})\exp\!\bigg(\int_{\mathbb{T}^2}\zeta_n(x) 
        \big\lbbrac\!\cos\!\big(\sqrtbeta\Phi_{\hspace{-.5pt}\varepsilon}(x)\big)\hspace{-1pt}\big\rbbrac_{\varepsilon}\dd x + E[\varepsilon,\zeta]\bigg)\bigg]\nonumber\\
        &=e^{E[\varepsilon,\zeta_n]}\mathbb{E}\!\left[\overline{F(\vartheta\Phi_{\hspace{-.5pt}\varepsilon})\exp\!\bigg(\zeta\int_{A_n} \big\lbbrac\!\cos\!\big(\sqrtbeta\vartheta\Phi_{\hspace{-.5pt}\varepsilon}\big)\hspace{-1pt}\big\rbbrac_{\varepsilon}\!\bigg)}F(\Phi_{\hspace{-.5pt}\varepsilon})\exp\!\bigg(\zeta\int_{A_n} \big\lbbrac\!\cos\!\big(\sqrtbeta\Phi_{\hspace{-.5pt}\varepsilon}\big)\hspace{-1pt}\big\rbbrac_{\varepsilon}\!\bigg)\right]\nonumber\\
        &\geqslant 0
    \end{align}
    where the bound follows by the reflection positivity of $\mathrm{C}$. By taking $\varepsilon\rightarrow 0$, since
    \begin{equation}
        \langle\vartheta F,F\rangle_{\mathrm{L}^2(\mu_{\mathrm{sG}}[\varepsilon,\zeta_n])} = \sum_{i,j}c_i\overline{c_j}\,\mathrm{Z}[\varepsilon,\zeta_n](J_i+\vartheta J_j)
    \end{equation}
    and $\varepsilon'$ was arbitrary, we obtain $\langle\vartheta F,F\rangle_{\mathrm{L}^2(\mu_{\mathrm{sG}}[0,\zeta_n])}\geqslant 0$ for all $F$ which only depend on the field values in the upper half of $\mathbb{T}^2$ at some positive distance from the circles $(-1/2,1/2]\times \lcurl 0,1/2\rcurl$. But $\mathrm{Z}[0,\zeta_n]$ is continuous in $J$, so the positivity immediately transfers to all $F$ supported on the upper half of $\mathbb{T}^2$.

    Non-Gaussianity follows from the detailed structure of the Laplace transform, see Section \ref{RemSteps}.
\end{proof}

\section{Renormalization group flow}

\subsection{Progressive integration}

We explicate a method to iteratively compute the integral that defines $\mathrm{Z}[\varepsilon,\zeta]$, giving us the ability to analyze these Laplace transforms sufficiently accurately to obtain the stated results. In this section, we just sketch the basic ideas, deferring to later sections for precise justification.

$\mathrm{C}_\varepsilon$ admits a scale decomposition into $\Gamma_{\hspace{-1pt}\varepsilon;j}$ with corresponding fields $\Psi_{\hspace{-1pt}\varepsilon;j}$. Refer to Appendix \ref{AppScale}. Therefore by shifting the integration,
\begin{align}
    \mathrm{Z}&[\varepsilon,\zeta](J) \nonumber\\
    &= e^{\frac{1}{2}\langle\mathrm{C}_\varepsilon J,J\rangle+E[\varepsilon,\zeta]}\mathbb{E}\bigg[\exp\!\bigg(\int_{\mathbb{T}^2}\zeta
    \big\lbbrac\!\cos\!\big(\sqrtbeta(\mathrm{S}_{L^{-\mathsf{N}-1}}\Psi_{\hspace{-1pt}\varepsilon;-\mathsf{N}-1}+\ldots+\mathrm{S}_{L^{-1}}\Psi_{\hspace{-1pt}\varepsilon;-1}+\Psi_{\hspace{-1pt}\varepsilon;0}+\mathrm{C}_\varepsilon J)\big)\hspace{-1pt}\big\rbbrac_{\varepsilon}\!\bigg)\bigg].
\end{align}
This is legitimate because $\mathrm{C}_{\varepsilon}$ has polynomial decay in momentum space, hence any smooth $J$ is in its Cameron-Martin space. Now we define
\begin{equation}
    \zeta_{\varepsilon;j} = L^{2j}e^{\frac{1}{2}\beta\sum_{k=j}^0\Gamma_{\hspace{-1pt}\varepsilon;k}(0)}\mathrm{S}_{L^{-j}}\zeta,\quad\text{i.e.}\quad \zeta_{\varepsilon;j+1} = L^2e^{-\frac{1}{2}\beta\Gamma_{\hspace{-1pt}\varepsilon;j}(0)}\mathrm{S}_{L^{-1}}\zeta_{\varepsilon;j}.\label{zetajdef}
\end{equation}
The flow equation for $j\leqslant -1$ is
\begin{align}
    \mathcal{Z}_{-\mathsf{N}-1}[\varepsilon,\zeta](\phi) &= \exp\!\bigg(\int_{\mathbb{T}_{-\mathsf{N}-1}^2}\zeta_{\varepsilon;-\mathsf{N}-1}\cos\!\big(\sqrtbeta\phi\big)\bigg),\nonumber\\
    \mathcal{Z}_{j+1}[\varepsilon,\zeta](\phi) &= \mathbb{E}\big[\mathcal{Z}_{j}(\mathrm{S}_L\phi + \Psi_{\hspace{-1pt}\varepsilon;j})\big].
\end{align}
By induction, it is easily verified that
\begin{equation}
    \mathrm{Z}[\varepsilon,\zeta](J) = e^{\frac{1}{2}\langle\mathrm{C}_{\varepsilon}J,J\rangle+E[\varepsilon,\zeta]}\mathbb{E}\bigg[\mathcal{Z}_j[\varepsilon,\zeta]\bigg(\sum_{k=j}^0\mathrm{S}_{L^{k-j}}\Psi_{\hspace{-1pt}\varepsilon;k}+\mathrm{S}_{L^{-j}}\mathrm{C}_\varepsilon J\bigg)\bigg].
\end{equation}

\subsection{The ansatz}

To actually solve the iteration, we need a good ansatz, which is provided by polymer activities. Refer to Appendices \ref{AppPolyBasic}, \ref{AppCrReg}, \ref{AppPolymerAct}. At this point we do the algebraic manipulations which however are not yet justified since we do not know if the objects involved have enough integrability, so we simply assume that they do. This will be verified once we introduce appropriate norms. Anyway, with $K_j$ a polymer activity,
\begin{align}
    \mathcal{Z}_j[\varepsilon,\zeta](\phi)&=e^{\mathcal{E}_j[\varepsilon,\zeta]}\sum_{X\in\mathcal{P}_j}e^{V_j[\varepsilon,\zeta](X^c,\phi)}\prod_{Y\in\mathcal{C}(X)}K_j[\varepsilon,\zeta](Y,\phi),\nonumber\\
    V_j(X,\phi)&= \int_{\mathbb{T}_j^2}\zeta_{\varepsilon;j} \cos\!\big(\sqrtbeta\phi\big).
\end{align}
Now we compute the RG map. That is, we want to compute $\mathbb{E}\big[\mathcal{Z}_j[\varepsilon,\zeta](\mathrm{S}_L\phi + \Psi_{\hspace{-1pt}\varepsilon;j})\big]$. We wish to represent $\mathcal{Z}_{j+1}$ as
\begin{equation}
    \mathcal{Z}_{j+1}[\varepsilon,\zeta] = e^{\mathcal{E}_{j+1}[\varepsilon,\zeta]}\sum_{X\in\mathcal{P}_{j+1}}e^{V_{j+1}[\varepsilon,\zeta](X^c,\phi)}\prod_{Y\in\mathcal{C}(X)}K_{j+1}[\varepsilon,\zeta](Y,\phi).
\end{equation}
First put
\begin{equation}
    \tilde{K}_j[\varepsilon,\zeta](X,\phi,\psi) = \mathbb{I}_{X\in\mathcal
    {P}_c}\sum_{Y\in\mathcal{P}_j}^{\bar{Y}^L = X}e^{V_j[\varepsilon,\zeta](L X-Y,\mathrm{S}_L\phi+\psi)}\prod_{Z\in\mathcal{C}(Y)}K_{j+1}[\varepsilon,\zeta](Z,\mathrm{S}_L\phi+\psi).
\end{equation}
An important point is that there is some ambiguity in the representation of $\mathcal{Z}_j$: terms can be moved out of $K_j$ and into $\mathcal{E}_j$. We will need to make use of this ambiguity to track some \enquote{bad} terms by removing them from $K_j$ before evaluating the expectation, otherwise $K_j$ would grow uncontrollably along the flow. This step is usually called extraction. To perform it, we take an activity $Q_j[\varepsilon,\zeta](X)$ which is nonzero only on small $X$ and does not depend on the field.\footnote{Field dependence is the generic situation in other models; it poses no further issues by itself. The resulting fact that the counterterms in this model are field-independent is nice because it does not spoil the duality with a Yukawa or Coulomb gas.} We also write $\delta\mathcal{E}_j[\varepsilon,\zeta](X) = \sum_{B\in\mathcal{B}_j(X)}\delta\mathcal{E}_j[\varepsilon,\zeta](B)$ for an activity which will correct $\mathcal{E}_j$ in a way which compensates the subtraction of $Q_j$. Finally, as an intermediate quantity, define for any $Y\in\mathcal{S}_{j+1}$ and $B\in\mathcal{B}_{j+1}(Y)$
\begin{equation}
    J_j[\varepsilon,\zeta](Y,B) = \sum_{D\in\mathcal{B}_j(L B),X\in\,\mathcal{S}_j}^{D\subset X,\bar{X}^L = Y}\frac{1}{\vert X\vert}Q_j[\varepsilon,\zeta](X) - \mathbb{I}_{Y=B}\sum_{Y'\in\mathcal{S}}\sum_{D\in\mathcal{B}_{j}(L B),X\in\,\mathcal{S}_j}^{D\subset X,\bar{X}^L = Y'}\frac{1}{\vert X\vert}Q_j[\varepsilon,\zeta](X).
\end{equation}
Note that this vanishes unless $B\subset Y\in\mathcal{S}$. The second term is only there to ensure
\begin{equation}
    \sum_{Y\in\mathcal{S}}J_j[\varepsilon,\zeta](Y,B) = 0\label{Jcancellation}.
\end{equation}
Another identity that $J_j$ satisfies is
\begin{equation}
    \sum_{B\in\mathcal{B}_{j+1}(Y)}J_j[\varepsilon,\zeta](Y,B)=\sum_{X\in\mathcal{S}_j}^{\bar{X}^L=Y}Q_j[\varepsilon,\zeta](X)-\mathbb{I}_{\vert Y\vert = 1}\sum_{D\in\mathcal{B}_j(L Y),X\in\mathcal{S}_j}^{D\subset X}\frac{1}{\vert X\vert}Q_j[\varepsilon,\zeta](X).
\end{equation}
For notational convenience, if $(Z,\mathfrak{B})$ is a pair of a polymer and a choice $\mathfrak{B}=(\mathfrak{B}_Y)_{Y\in\mathcal{C}(Z)}$ of a square $\mathfrak{B}_Y\in\mathcal{B}_{j+1}(Y)$ in every connected component of $Z$, we write
\begin{equation}
    J_j^{\mathfrak{B}\rightarrow Z}[\varepsilon,\zeta]=\prod_{Y\in\mathcal{C}(Z)}J_j[\varepsilon,\zeta](Y,\mathfrak{B}_Y).
\end{equation}
Finally, for $X\in\mathcal{P}_{j+1}$ define
\begin{align}
    P_j^X[\varepsilon,\zeta](\phi,\psi) &= \prod_{B\in\mathcal{B}_{j+1}(X)}\big(e^{V_j[\varepsilon,\zeta](L B,\mathrm{S}_L\phi+\psi)}-e^{V_{j+1}[\varepsilon,\zeta](B,\phi)+\delta\mathcal{E}_j[\varepsilon,\zeta](L B)}\big),\\
    R_j^X[\varepsilon,\zeta](\phi,\psi)&=\prod_{Y\in\mathcal{C}(X)}\bigg(\tilde{K}_j[\varepsilon,\zeta](Y,\phi,\psi)-\sum_{B\in\mathcal{B}_{j+1}(Y)}J_j[\varepsilon,\zeta](Y,B)\bigg).
\end{align}
From now we omit the arguments $\varepsilon,\zeta$ if context permits. Plugging all these definitions into the formula for $\mathcal{Z}_{j+1}$ gives
\begin{equation}
    \mathcal{Z}_{j+1}(\phi) = e^{\mathcal{E}_j}\sum_{X_1,X_2,Y_1,\mathfrak{B}\rightarrow Y_2}^*e^{V_{j+1}(X_1,\phi)+\delta\mathcal{E}_j(LX_1)}\mathbb{E}\big[P_j^{X_2}(\phi,\Psi_{\hspace{-1pt}j})R_j^{Y_1}(\phi,\Psi_{\hspace{-1pt}j})\big]J_j^{\mathfrak{B}\rightarrow Y_2}.
\end{equation}
The $*$-sum is over quadruples in $\mathcal{P}_{j+1}^{\times 3}\times(\mathfrak{B}\rightarrow\mathcal{P}_{j+1})$ for which $X_1\cup X_2\cup Y_1 \cup Y_2=\mathbb{T}_{j+1}^2$, all of these sets have disjoint interiors and $Y_1,Y_2$ are completely disjoint. We make one last adjustment to exhibit linear cancellations. Define $X = X_2\cup Y_1\cup\bigcup\mathfrak{B}^*$. Then $X_1 = X^c\cup(X - X_2\cup Y_1\cup Y_2)$ and we find the final form
\begin{align}
    \mathcal{Z}_{j+1}(\phi) = e&^{\mathcal{E}_j + \delta\mathcal{E}_j(\mathbb{T}_j^2)}\sum_{X\in\mathcal{P}_{j+1}}e^{V_{j+1}(X^c,\phi)}\cdot\\
    &\prod_{X'\in\mathcal{C}(X)}\sum_{(Z,Y_1,\mathfrak{B}\rightarrow Y_2)\rightarrow X'}e^{V_{j+1}(X'-Z\cup Y_1\cup Y_2,\phi)-\delta\mathcal{E}_j(L(Z\cup Y_1\cup Y_2))}\mathbb{E}\big[P_j^{Z}(\phi,\Psi_{\hspace{-1pt}j})R_j^{Y_1}(\phi,\Psi_{\hspace{-1pt}j})\big]J_j^{\mathfrak{B}\rightarrow Y_2}.\nonumber
\end{align}
The expectations have partially factorized due to the finite-range property. The sum in the second line is over triples in $\mathcal{P}_{j+1}^{\times 2}\times(\mathfrak{B}\rightarrow\mathcal{P}_{j+1})$ with $Z,Y_1,Y_2$ disjoint and $X = Z\cup Y_1\cup\bigcup\mathfrak{B}^*$. We remark that $Y_2 = X$ is impossible since it would force $X\subsetneq \mathfrak{B}_W^*$ for at least one $W\in\mathcal{C}(Y_2)$. Anyway, we now pin down the definition of the RG flow.

\begin{defn}
   Let $(\varepsilon,\zeta)\in(0,\varepsilon_0]\times\mathrm{B}_{\mathrm{cs}}(\zeta_*)$. The RG flow is the dynamical system trajectory
   \begin{equation}
       (\zeta_j,K_j,\mathcal{E}_j)_{j=-\mathsf{N}-1}^0
   \end{equation}
   with evolution ($X$ connected)
   \begin{align}
       \zeta_{j+1} &= L^2e^{-\frac{1}{2}\beta\Gamma_{\hspace{-1pt}j}(0)}\mathrm{S}_{L^{-1}}\zeta_j,\\
       K_{j+1}(X,\phi) &= \sum_{(Z,Y_1,\mathfrak{B}\rightarrow Y_2)\rightarrow X}e^{V_{j+1}(X-Z\cup Y_1\cup Y_2,\phi)-\delta\mathcal{E}_j(L(Z\cup Y_1\cup Y_2))}\mathbb{E}\big[P_j^{Z}(\phi,\Psi_{\hspace{-1pt}j})R_j^{Y_1}(\phi,\Psi_{\hspace{-1pt}j})\big]J_j^{\mathfrak{B}\rightarrow Y_2},\label{Kflow}\\
       \mathcal{E}_{j+1} &= \mathcal{E}_{j} +\delta\mathcal{E}_j(\mathbb{T}_j^2).
   \end{align}
\end{defn}

We remark that, because $Q_j$ does not depend on $\phi$, $K_j$ depends only on $\phi$ within $X$. Also, the initial step simplifies drastically:
\begin{align}
       K_{-\mathsf{N}}(X,\phi) &= \mathbb{E}\big[P_{-\mathsf{N}-1}^{X}(\phi,\Psi_{\hspace{-1pt}-\mathsf{N}-1})\big],\\
       \mathcal{E}_{-\mathsf{N}} &= 0.
   \end{align}
Anyway, to better understand what this map is doing, we look at the linearization around the trivial RG flow when $\zeta = 0$ (this is the flow of the GFF). We assume that $\delta\mathcal{E}_j$ and $Q_j$ will be linear in $K_j$. Then the linearization is
\begin{equation}
    \mathcal{L}K_j(X,\phi) = \sum_{Y\in\mathcal{P}_{\mathrm{c};j}}^{\bar{Y}^L=X}\big(\mathbb{E}\big[K_j(Y,\mathrm{S}_L\phi+\Psi_{\hspace{-1pt}j}\big]-Q_j(Y)\big) + \mathbb{I}_{\vert X\vert = 1}\sum_{B\in\mathcal{B}_j(L X)}\bigg(\sum_{Y\in\mathcal{S}_j}^{B\subset Y}\frac{1}{\vert Y\vert}Q_j(Y)-\delta\mathcal{E}_j(B)\bigg).
\end{equation}
To obtain this, we have used (\ref{Jcancellation}) and also the identity $\mathbb{E}\big[V_j(L X,\mathrm{S}_L+\Psi_{\hspace{-1pt}j})\big] = V_{j+1}(X,\phi)$, which is the reason for our choice of $\zeta_j$ in the first place. We can define $\delta\mathcal{E}_j$ to remove the last term and then we may choose $Q_j$ to cancel some terms in $K_j$.

To identify these problematic terms, we first do a Fourier analysis in the zero mode of the field. That is, note that $K_j(X,\phi + 2\pi\beta^{-1/2}) = K_j(X,\phi)$ (this is clearly preserved by the flow); then put
\begin{equation}
    k_j^{(q)}(X,\phi) = \frac{1}{2\pi}\int_{-\pi}^\pi e^{-i q x}K_j(X,\phi + \beta^{-1/2}x)\,\dd x.
\end{equation}
We will see that the kinds of norms preserved by the flow in fact imply that $K_j(X)$ is analytic on a strip (refer to Appendix \ref{AppComplexification}), so $k_j^{(q)}$ are well-decaying in $\vert q\vert$ and we also have
\begin{align}
    K_j(X,\phi) &= \sum_{q\in\mathbb{Z}}k_j^{(q)}(X,\phi),\\
    k_j^{(q)}(X,\phi+x) &= e^{i\sqrtbetae qx}k_j^{(q)}(X,\phi)
\end{align}
with the second making sense even for complex $x$ with $\mathrm{Im}(x)$ not too large. In any case, the extraction $Q_j$ is taken from the neutral ($q = 0$) terms:
\begin{equation}
    Q_j(X) = \mathbb{I}_{X\in\mathcal{S}}\mathbb{E}\big[k_j^{(0)}(X,\Psi_{\hspace{-1pt}j})\big].
\end{equation}
Having thus retained control of the RG up to the last step, we are left with handling $\Gamma_{\hspace{-1pt}0}$ which is slightly different but ultimately easier.

This concludes the basic exposition of the setup. In the next section we start working up to analytic estimates on the RG flow.

\subsection{Analytic control}

To study the RG flow, we wish to control
\begin{equation}
    \Vert K_j[\varepsilon,\zeta]-K_j[\dot\varepsilon,\dot\zeta]\Vert_{\mathfrak{h},\kappa,A}.
\end{equation}
That is, we are comparing the fluctuation step where we allow the fluctuation covariance to change; this is important because we want to compare different flows to obtain convergence. 

We will always assume $\dot\varepsilon\leqslant\varepsilon$ with corresponding $\dot{\mathsf{N}}\geqslant\mathsf{N}$, and we will mostly deal with $j\geqslant \mathsf{-N}$. Since the case $j = -\mathsf{N}-1$ has slightly worse estimates, we need to treat it separately in principle, but we have tried to make the presentation as seamless in this regard as we could manage; there are only some qualifications in the propositions below to deal with this border case at the specific parameter values with which it actually arises.

Since there will be a lot of estimates, we make some notational abbreviations. We write
\begin{align}
    V_j = V_j[\varepsilon,\zeta],&\quad\dot{V}_j = V_j[\dot\varepsilon,\dot\zeta],\nonumber\\
    K_j = K_j[\varepsilon,\zeta],&\quad\dot{K}_j = K_j[\dot\varepsilon,\dot\zeta],
\end{align}
and similarly for $J_j[\varepsilon,\zeta]$, $P^X_j[\varepsilon,\zeta]$, etc. In case a mixed argument appears, such as $K_j[\varepsilon,\dot\zeta]$, we will not use an abbreviation. Regarding norms, if $x_1,\ldots,x_n$ belong to the same normed space, write
\begin{equation}
    \Vert x_1,\ldots,x_n\Vert = \max_{i=1,\ldots,n}\Vert x_i\Vert.
\end{equation}
Regarding the various parameters, we assume $\mathfrak{h}\in[\mathfrak{h}_*,2\mathfrak{h}_*]$ and  $\kappa \in[\kappa_*,2\kappa_*]$.

As a general rule of thumb, an estimate on a quantity such as $\Vert K_j-\dot{K}_j\Vert_{\ldots}$ should reduce to an estimate of the form $\Vert K_j\Vert_{\ldots}$ simply by taking $\dot\zeta = 0,\dot\varepsilon = \varepsilon$, which kills $\dot K_j$ and any possible dependence on $\varepsilon$. This is not quite correct because in the general difference estimates, the right-hand side will sometimes involve $2\mathfrak{h}$ whereas the left involves only $\mathfrak{h}$, while this is not the case in the non-difference estimate, so speaking precisely, one has to derive the latter separately; but we will not write this out, since the derivation is always a simpler version of the difference estimate. As a sanity check, the reader can consider the next proposition and reason whether the difference estimates do reduce to their non-difference versions in this way.

\begin{prop}\label{MainRgEstimates}
Take $-\mathsf{N}\leqslant j<j_\bullet$. The potential satisfies
\begin{align}
    \Vert V_j-\dot{V}_j\Vert_{\mathfrak{h};\langle X,\phi\rangle}&\lesssim_1\vert X\vert\big(\Vert\zeta_{0;j}-\dot{\zeta}_{0;j}\Vert_{\wp} + L^{-j}\vert \varepsilon-\dot{\varepsilon}\vert \Vert\zeta_{0;j},\dot{\zeta}_{0;j}\Vert_{\wp}\big),\\
    \Vert e^{V_j}-e^{\dot{V}_j}\Vert_{\mathfrak{h};\langle X,\phi\rangle}&\lesssim_1 2^{\vert X\vert}\big(\Vert\zeta_{0;j}-\dot{\zeta}_{0;j}\Vert_{\wp} + L^{-j}\vert \varepsilon-\dot{\varepsilon}\vert \Vert\zeta_{0;j},\dot{\zeta}_{0;j}\Vert_{\wp}\big).
\end{align}
Assuming furthermore that we have inductively established $K_j,\dot{K}_j\in\mathbb{B}^{j,1}_{2\mathfrak{h},2\kappa,A}$,
\begin{align}
    \vert Q_j(X)-\dot{Q}_j(X)\vert &\lesssim_1 \Vert K_j - \dot{K}_j\Vert_{\mathfrak{h},\kappa;\langle X\rangle} + L^{-j}\vert\varepsilon - \dot\varepsilon\vert\Vert K_j,\dot{K}_j\Vert_{\mathfrak{h},\kappa;\langle X\rangle},\\
    \vert\delta\mathcal{E}_j(B)-\delta\dot{\mathcal{E}}_j(B)\vert&\lesssim_1 \max_{B\subset X\in\mathcal{S}}\big(\Vert K_j - \dot{K}_j\Vert_{\mathfrak{h},\kappa;\langle X\rangle} + L^{-j}\vert\varepsilon - \dot\varepsilon\vert\Vert K_j,\dot{K}_j\Vert_{\mathfrak{h},\kappa;\langle X\rangle}\big),\\
    \vert J_j^{\mathfrak{B}\rightarrow X}-\dot{J}_j^{\mathfrak{B}\rightarrow X}\vert&\leqslant \mathcal{O}_A^{\vert\mathcal{C}(X)\vert}A^{-2\vert X^*\vert}\Vert K_j,\dot{K}_j\Vert^{\vert\mathcal{C}(X)\vert - 1}_{\mathfrak{h},\kappa,A}\cdot\nonumber\\
    &\hspace{20pt}\big(\Vert K_j - \dot{K}_j\Vert_{\mathfrak{h},\kappa, A} + L^{-j}\vert\varepsilon - \dot\varepsilon\vert\Vert K_j,\dot{K}_j\Vert_{\mathfrak{h},\kappa,A}\big),\\
    \Vert P_j^{(\cdot)}(\cdot,\psi)\Vert_{\mathfrak{h};\langle X,\phi\rangle} &\leqslant \mathcal{O}_L^{\vert X\vert}\big(\Vert \zeta_{0;j}\Vert_{\wp} + \Vert K_j\Vert_{\mathfrak{h},\kappa,A}\big)^{\!\vert X\vert},\\
    \Vert P^{(\cdot)}_j(\cdot,\psi)-\dot{P}^{(\cdot)}_j(\cdot,\dot{\psi})\Vert_{\mathfrak{h};\langle X,\phi\rangle}&\leqslant\mathcal{O}_L^{\vert X\vert}\big(\Vert \zeta_{0;j},\dot\zeta_{0;j}\Vert_{\wp} + \Vert K_j,\dot K_j\Vert_{\mathfrak{h},\kappa,A}\big)^{\!\vert X\vert-1}\cdot\nonumber\\
    &\hspace{20pt}\Big(\Vert\zeta_{0;j}-\dot{\zeta}_{0;j}\Vert_{\wp} + \Vert K_j-\dot{K}_j\Vert_{\mathfrak{h},\kappa;\langle(LX)^*\rangle}+\nonumber\\
    &\hspace{30pt} L^{-j}\vert\varepsilon-\dot\varepsilon\vert\big(\Vert \zeta_{0;j},\dot\zeta_{0;j}\Vert_{\wp} + \Vert K_j,\dot K_j\Vert_{\mathfrak{h},\kappa,A}\big)+\nonumber\\
    &\hspace{40pt}\Vert\psi-\dot\psi\Vert_{\mathcal{C}^2(L X)}\Vert\zeta_{0;j},\dot{\zeta}_{0;j}\Vert_{\wp}\Big),\\
    \Vert R_j^{(\cdot)}(\cdot,\psi)\Vert_{\mathfrak{h};\langle X,\phi\rangle}&\leqslant\mathcal{O}_A^{\vert\mathcal
    {C}(X)\vert}A^{-(1+\eta/3)\vert X\vert}G_\kappa(LX,\mathrm{S}_L\phi+\psi)\Vert K_j\Vert_{\mathfrak{h},\kappa,A}^{\vert\mathcal{C}(X)\vert},\\
    \Vert R_j^{(\cdot)}(\cdot,\psi)-\dot{R}_j^{(\cdot)}(\cdot,\dot\psi)\Vert_{\mathfrak{h};\langle X,\phi\rangle}&\leqslant\mathcal{O}_A^{\vert\mathcal{C}(X)\vert}A^{-(1+\eta/3)\vert X\vert}\sup_{t\in[0,1]}\!G_\kappa(LX,\mathrm{S}_L\phi+(1-t)\psi+t\dot\psi)\cdot\nonumber\\
    &\hspace{-50pt}\Vert K_j,\dot{K}_j\Vert_{\mathfrak{h},\kappa,A}^{\vert\mathcal{C}(X)\vert-1}\Big(\big(\Vert K_j,\dot{K}_j\Vert_{\mathfrak{h},\kappa,A}\Vert\zeta_{0;j}-\dot{\zeta}_{0;j}\Vert_{\wp}+\Vert K_j-\dot{K}_j\Vert_{\mathfrak{h},\kappa,A}\big)+\nonumber\\
    &\hspace{70pt}\Vert K_j,\dot{K}_j\Vert_{2\mathfrak{h},\kappa,A}\big(L^{-j}\vert\varepsilon-\dot{\varepsilon}\vert+\Vert\psi-\dot\psi\Vert_{\mathcal{C}^2(L X)}\big)\Big).
\end{align}
The bounds for $P^{(\cdot)}_j$ also hold in the case $\varepsilon = \dot\varepsilon$, $\psi = \dot\psi$, and $j = -\mathsf{N}-1$.  
\end{prop}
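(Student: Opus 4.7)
I treat the eight estimates in turn, each largely reducing to the previous ones plus standard polymer-activity manipulations.

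The potential bounds come by writing $V_j(X,\phi) = \int_X \zeta_{\varepsilon;j}\cos(\sqrtbeta\phi)\,\dd x$ and absorbing $\cos(\sqrtbeta\phi)$ in a strip of width $\sqrtbeta\,\mathfrak{h}$ into the exponential weight of the $\Vert\cdot\Vert_{\mathfrak{h};\langle X,\phi\rangle}$ norm, reducing everything to an $\mathrm{L}^1 \supset \mathrm{L}^\wp$ bound on $\zeta_{\varepsilon;j}$. The scaling relation (\ref{zetajdef}) then converts this to $\Vert\zeta_{0;j}\Vert_\wp$ modulo a Wick constant $e^{\frac{1}{2}\beta\sum_k \Gamma_{\varepsilon;k}(0)}$. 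For the difference, $\zeta_{\varepsilon;j} - \dot\zeta_{\dot\varepsilon;j}$ splits into the underlying-coupling difference and a Wick-constant difference, the latter controlled by $L^{-j}\vert\varepsilon-\dot\varepsilon\vert$ via the finite-range decomposition bound $\vert\Gamma_{\varepsilon;k}(0) - \Gamma_{\dot\varepsilon;k}(0)\vert \lesssim L^{-j}\vert\varepsilon-\dot\varepsilon\vert$ from Appendix \ref{AppScale}. For $e^{V_j} - e^{\dot V_j}$, I interpolate $e^{V_j} - e^{\dot V_j} = \int_0^1 (V_j - \dot V_j)e^{tV_j + (1-t)\dot V_j}\,\dd t$; the interpolating exponential factorizes over $j$-blocks of $X$ and each factor is stable (bounded by $2$ in the $\mathfrak{h}$-norm since $\Vert\zeta_{0;j}\Vert_\wp \lesssim \zeta_*$), yielding the $2^{\vert X\vert}$ factor.

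Next, $Q_j(X) = \mathbb{I}_{X\in\mathcal{S}}\mathbb{E}[k_j^{(0)}(X,\Psi_{\hspace{-1pt}j})]$ by definition; since $k_j^{(0)}$ is the Fourier zero-mode of $K_j$ in $\phi$, it is pointwise controlled by $\Vert K_j\Vert_{\mathfrak{h};\langle X,\phi\rangle}$ using strip analyticity (Appendix \ref{AppComplexification}), with the expectation absorbed by the $G_\kappa$-weighted moment bound. The difference $Q_j - \dot Q_j$ splits into a $(K_j - \dot K_j)$ contribution and a covariance-difference contribution $(\mathbb{E}_{\Gamma_j} - \mathbb{E}_{\dot\Gamma_j})\dot K_j$, the latter bounded via Gaussian interpolation using $\Vert\Gamma_j - \dot\Gamma_j\Vert \lesssim L^{-j}\vert\varepsilon-\dot\varepsilon\vert$. $\delta\mathcal{E}_j$ is handled identically, and the product $J_j^{\mathfrak{B}\rightarrow X} = \prod_{Y\in\mathcal{C}(X)} J_j(Y,\mathfrak{B}_Y)$ follows by component-wise telescoping, with the $A^{-2\vert X^*\vert}$ factor arising from the small-set weights of the $\mathbb{B}^{j,1}_{\mathfrak{h},\kappa,A}$-norm. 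The $P_j^X$ estimates are similar: $P_j^X$ is a product over $j{+}1$-blocks of $e^{V_j(LB,\mathrm{S}_L\phi+\psi)} - e^{V_{j+1}(B,\phi)+\delta\mathcal{E}_j(LB)}$, so both the pointwise and difference bounds follow by the triangle inequality and product-telescoping (borrowing Step 1 for each factor), with the $\Vert\psi - \dot\psi\Vert_{\mathcal{C}^2(LX)}$ term generated by Taylor expansion of $\cos(\sqrtbeta(\mathrm{S}_L\phi + \psi))$ in $\psi$. The initial-step extension ($j = -\mathsf{N}-1$, $\varepsilon = \dot\varepsilon$, $\psi = \dot\psi$) follows because $P_j$ collapses to pure $V_j$-differences there.

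The main obstacle is the $R_j^X$ estimate, which is the raison d'\^etre of the extraction. Each factor $\tilde K_j(Y,\phi,\psi) - \sum_{B}J_j(Y,B)$ has had its Fourier zero-charge content removed by construction of $Q_j$, leaving a contribution supported on $\vert q\vert \geq 1$ modes. These modes carry the improved dimensional weight which realizes the $A^{-(1+\eta/3)}$ gain per $j{+}1$-block: this is where one must carefully track the strip analyticity in $\phi$ to extract the improved scaling from the cancellation of relevant/marginal counterterms in the neutral sector. Taking the product over connected components yields the $A^{-(1+\eta/3)\vert X\vert}$ factor, with the $G_\kappa$ tracking the $\psi$-dependence of $\tilde K_j$. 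The difference bound $R_j - \dot R_j$ then follows by telescoping analogously to the $P_j$ case, with the $2\mathfrak{h}$-norm appearing on the right for the $L^{-j}\vert\varepsilon-\dot\varepsilon\vert$ and $\Vert\psi-\dot\psi\Vert_{\mathcal{C}^2(LX)}$ terms reflecting the loss incurred when passing a $\phi$-derivative through the Wick exponential during Gaussian interpolation. Once this extraction mechanism is set up, the remaining estimates amount to polymer-activity bookkeeping.
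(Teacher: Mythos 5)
Your outline of the $V_j$, $e^{V_j}$, $Q_j$, $\delta\mathcal{E}_j$, $J_j$, and $P_j$ bounds is in the right spirit, though note two smaller points: (i) the bound on $Q_j$ does not require strip analyticity -- the zero-mode $k_j^{(0)}(X,\phi) = \frac{1}{2\pi}\int_{-\pi}^\pi K_j(X,\phi+\beta^{-1/2}x)\,\dd x$ is immediately dominated by $\sup_x\vert K_j(X,\phi+\beta^{-1/2}x)\vert$, hence by $\Vert K_j\Vert_{\mathfrak{h},\kappa;\langle X\rangle}G_\kappa$; analyticity is only needed for the \emph{charged} modes, which do not appear in this proposition; and (ii) the $j=-\mathsf{N}-1$ qualification is really about absorbing the $L$-dependent constant $\varrho^{(2)}_{\varepsilon;-\mathsf{N}-1}$ into $\mathcal{O}_L^{\vert X\vert}$, not about $P_j$ ``collapsing.''

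The genuine gap is in your explanation of the $R_j$ bound. You attribute the gain $A^{-(1+\eta/3)\vert X\vert}$ to removal of the neutral ($q=0$) sector and claim the improved weight comes from tracking strip analyticity and ``cancellation of relevant/marginal counterterms in the neutral sector.'' This is not the mechanism. In this proposition the extraction $Q_j$/$J_j$ has not yet produced any contraction; the bound on $R_j$ is proved by expanding $R_j^X$ over subsets $\mathcal{A}\subset\mathcal{C}(X)$ into $\hat K_j$-products and $J_j$-products, and the gain $A^{-(1+\eta/3)\vert X\vert}$ in the $\hat K_j$ factor is purely \emph{geometric}. Concretely, bounding $\tilde K_j(X,\cdot,\psi)$ requires summing over preimages $Y\in\mathcal{P}_j$ with $\bar Y^L = X$, each carrying weight $A^{-\vert Y\vert}$ from the $\Vert K_j\Vert_{\mathfrak{h},\kappa,A}$-norm; the reblocking inequality (\ref{disconnectedGrowth}) gives $A^{-\vert Y\vert}\leqslant A^{-(1+\eta/2)\vert X\vert + 9\vert\mathcal{C}(Y)\vert}$, and the slack between $\eta/2$ and $\eta/3$ absorbs the entropy of the sum. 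No use is made of the charge decomposition or the complexification map here; those belong to the proof of the contractivity of the \emph{linearized} RG map in Proposition \ref{MainFlowEstimates} (via (\ref{chargedScaling}) and (\ref{Rem0scaling})), a separate mechanism at a different stage of the argument. As written, your proof proposal has no valid derivation of the $A^{-(1+\eta/3)\vert X\vert}$ factor.
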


Note that we have merely assumed $K_j,\dot{K}_j\in\mathbb{B}^{j,1}_{2\mathfrak{h},2\kappa,A}$, though we will in fact be able to get $K_j,\dot{K}_j\in\mathbb{B}^{j,2-\delta}_{2\mathfrak{h},2\kappa,A}$. Since the proofs involved are somewhat lengthy, we move them to Appendix \ref{AppProofs}.

Now we assemble all these estimates into a bound on $\Vert K_j-\dot{K}_j\Vert_{\mathfrak{h},\kappa,A}$. We also move this proof into Appendix \ref{AppProofs}.

\begin{prop}\label{MainFlowEstimates}
Take $-\mathsf{N}\leqslant j<j_\bullet$. Suppose $K_j\in\mathbb{B}^{j,1}_{\mathfrak{h},\kappa,A}$. Then
\begin{equation}
    \Vert K_{j+1}\Vert_{\mathfrak{h},\kappa,A}\leqslant \mathcal{O}_1L^{2\sigma}\Vert K_j\Vert_{\mathfrak{h},\kappa,A} + \mathcal{O}_A\Vert\zeta_{0;j}\Vert^2_{\wp}.
\end{equation}
Now suppose $K_j\in\mathbb{B}^{j,1}_{2\mathfrak{h},2\kappa,A}$. Then
\begin{align}
    \Vert K_{j+1}&-\dot{K}_{j+1}\Vert_{\mathfrak{h},\kappa,A} \leqslant\nonumber\\
    &\mathcal{O}_1L^{2\sigma}\Vert K_j[\varepsilon,\zeta]-K_j[\varepsilon,\dot{\zeta}],K_j[\dot\varepsilon,\zeta]-K_j[\dot\varepsilon,\dot\zeta]\Vert_{\mathfrak{h},\kappa,A}+\nonumber\\
    &\mathcal{O}_A\Vert\zeta_{0;j},\dot{\zeta}_{0;j}\Vert_{\wp}\big(\Vert\zeta_{0;j}-\dot{\zeta}_{0;j}\Vert_{\wp}+\Vert K_j[\varepsilon,\zeta]-K_j[\varepsilon,\dot{\zeta}],K_j[\dot\varepsilon,\zeta]-K_j[\dot\varepsilon,\dot\zeta]\Vert_{\mathfrak{h},\kappa,A}\big)+\nonumber\\
    &\mathcal{O}_AL^{-j}\vert\varepsilon-\dot\varepsilon\vert\big(\Vert\zeta_{0;j},\dot\zeta_{0;j}\Vert_{\wp}+\Vert K_j[\varepsilon,\zeta],K_j[\varepsilon,\dot\zeta],K_j[\dot\varepsilon,\zeta],K_j[\dot\varepsilon,\zeta]\Vert_{2\mathfrak{h},2\kappa,A}\big).
\end{align}
Both bounds still hold in the case $\varepsilon = \dot\varepsilon$ and $j = -\mathsf{N}-1$.
\end{prop}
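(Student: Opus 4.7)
The plan is to substitute the per-factor bounds from Proposition~\ref{MainRgEstimates} into the recursion~(\ref{Kflow}) and then resum over the polymer configurations $(Z,Y_1,\mathfrak{B}\to Y_2)\to X$. Each summand decomposes as a product of four factors: $e^{V_{j+1}(X-Z\cup Y_1\cup Y_2,\phi)}$, $e^{-\delta\mathcal{E}_j(L(Z\cup Y_1\cup Y_2))}$, $\mathbb{E}\big[P_j^{Z}R_j^{Y_1}\big]$, and $J_j^{\mathfrak{B}\to Y_2}$. Taking the $\Vert\cdot\Vert_{\mathfrak{h},\kappa,A}$-norm and applying the corresponding Proposition~\ref{MainRgEstimates} estimate to each factor, the exponential prefactors contribute only $\mathcal{O}_1^{|X|}$ growth that is absorbed below; the large-field weight $G_\kappa(LX,\mathrm{S}_L\phi+\psi)$ produced by the $R_j$ estimate is integrated against $\Psi_j$ and controlled against the ambient $G_\kappa$ in $\Vert\cdot\Vert_{\mathfrak{h},\kappa,A}$, which is the reason for the choice $\kappa_*=\mathfrak{c}^{(4)}\log L$. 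The sum over polymer configurations is driven by the decay factors $A^{-(1+\eta/3)|Y_1|}A^{-2|Y_2^*|}$ from the $R_j$ and $J_j$ bounds; after rebalancing against the $A^{|X|}$ implicit in $\Vert K_{j+1}\Vert_{\mathfrak{h},\kappa,A}$, enough decay survives per connected component to make the polymer sum $\mathcal{O}_1$ uniformly in $|X|$.

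For the non-difference bound, the key step is to isolate the linear-in-$K_j$ contributions, which arise from configurations that produce exactly one $K_j$-factor in the expansion of the $R_j$ and $J_j$ products (with all other factors set to lowest order in $\zeta_{0;j}$). On these the definition of $R_j$ together with the extraction $Q_j$ is designed precisely to cancel the neutral ($q=0$) Fourier component of $\mathbb{E}[K_j(Y,\mathrm{S}_L\phi+\Psi_j)]$, leaving only charged modes $k_j^{(q)}$, $q\neq 0$. Each such mode rescales under the combination of $\mathrm{S}_L$ and Wick renormalization by at most $L^{2-\beta/(4\pi)}=L^{2\sigma}$ (the bound being saturated at $q=\pm 1$); combined with the change of block counts from scale $j$ to scale $j+1$, this yields the claimed coefficient $\mathcal{O}_1 L^{2\sigma}\Vert K_j\Vert_{\mathfrak{h},\kappa,A}$. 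All remaining configurations (genuinely nonlinear in $K_j$) carry at least two small factors drawn from $\{\Vert\zeta_{0;j}\Vert_\wp,\Vert K_j\Vert\}$ and, after invoking the inductive bound $\Vert K_j\Vert\lesssim\Vert\zeta_{0;j}\Vert_\wp^2$, collapse to the quadratic remainder $\mathcal{O}_A\Vert\zeta_{0;j}\Vert_\wp^2$.

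For the difference estimate I would expand $K_{j+1}-\dot K_{j+1}$ by the product-rule for differences, applied summand-by-summand in~(\ref{Kflow}) with each of the four factors paired against its dotted counterpart; each differenced factor is then bounded by its $\Vert\cdot-\dot{\cdot}\Vert$ estimate from Proposition~\ref{MainRgEstimates} while the undifferenced factors reuse the non-difference bounds. The three lines of the claimed estimate arise from the three sources of differences: the $\Vert K_j-\dot K_j\Vert$ terms, which repeat the $L^{2\sigma}$ linearization argument above and (crucially) only propagate linearly at fixed $\varepsilon$, whence the restriction of the first-line norm to same-$\varepsilon$ pairs; the $\Vert\zeta_{0;j}-\dot\zeta_{0;j}\Vert_\wp$ terms, which always come paired with another factor of $\Vert\zeta,\dot\zeta\Vert_\wp$ or $\Vert K,\dot K\Vert$ and so produce the quadratic second line; and the explicit $L^{-j}|\varepsilon-\dot\varepsilon|$ prefactor appearing in every difference bound of Proposition~\ref{MainRgEstimates}, yielding the third line (the shift from $(\mathfrak{h},\kappa)$ to $(2\mathfrak{h},2\kappa)$ absorbs the $\Vert\psi-\dot\psi\Vert_{\mathcal{C}^2(LX)}$ contribution from the $R_j$-difference bound). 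The exceptional case $j=-\mathsf{N}-1$ is immediate from $K_{-\mathsf{N}}=\mathbb{E}[P_{-\mathsf{N}-1}^X]$ and the standalone $P$-bounds. The main technical obstacle is securing the sharp constant $\mathcal{O}_1 L^{2\sigma}$ in the linear term rather than some weaker $\mathcal{O}_L$: this is essential for the Banach fixed-point argument of Proposition~\ref{mainThrm} to close, and depends both on the extraction $Q_j$ being tailored to cancel exactly the neutral mode and on the $A^{-(1+\eta/3)}$ decay having just enough room to absorb the $L^2$ blow-up from counting $j$-blocks inside $(j+1)$-blocks.
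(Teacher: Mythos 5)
Your high-level architecture matches the paper's proof: substitute the per-factor bounds from Proposition~\ref{MainRgEstimates} into (\ref{Kflow}), isolate the contribution that is linear in $K_j$ from the nonlinear remainder, and for the difference estimate run a telescoping product-rule expansion summand-by-summand. The treatment of the nonlinear remainder as quadratic via $\Vert K_j\Vert\lesssim\Vert\zeta_{0;j}\Vert_\wp^2$ and the reliance on $A$-decay to drive the polymer sums are also consonant with the paper's $\mathcal{R}^{(1)},\ldots,\mathcal{R}^{(6)}$ decomposition.

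There is, however, a genuine gap in your account of the linear piece. You assert that ``the definition of $R_j$ together with the extraction $Q_j$ is designed precisely to cancel the neutral ($q=0$) Fourier component of $\mathbb{E}[K_j(Y,\mathrm{S}_L\phi+\Psi_j)]$, leaving only charged modes.'' This is not correct. The extraction $Q_j(Y)=\mathbb{I}_{Y\in\mathcal{S}}\,\mathbb{E}[k_j^{(0)}(Y,\Psi_j)]$ is a field-independent \emph{constant}; it subtracts only the zeroth Taylor coefficient of the neutral mode at $\phi=0$, whereas $\mathbb{E}[k_j^{(0)}(Y,\mathrm{S}_L\phi+\Psi_j)]$ retains genuine $\phi$-dependence. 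What survives the extraction is the Taylor remainder $\mathrm{Rem}_0\,\mathbb{E}[k_j^{(0)}(\cdot,\mathrm{S}_L(\cdot)+\Psi_j)]$ --- the paper's $\mathcal{L}^{(1)}$ piece --- and this must be controlled by a separate chain of ideas embodied in (\ref{Rem0}), (\ref{vanishingAtPoint}), (\ref{smallXthing}) and (\ref{Rem0scaling}): one exploits evenness of $K_j$ so that $\mathrm{Rem}_0=\mathrm{Rem}_1$, uses that subtracting the value at a base point $x_0$ converts the rescaled field into a factor $L^{-1}$ per derivative, and absorbs the resulting polynomial $(1+\Vert\nabla\phi\Vert_{\mathcal{C}^1})^2$ into the regulator at cost $\kappa$, yielding $\kappa L^{-2}\sim L^{-2}\log L$. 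This $L^{-2}$ gain is what offsets the $\mathcal{O}_1 L^2$ block-counting factor from $\sum_{Y\in\mathcal{S}_j,\bar{Y}^L=X}$; without it the neutral block contributes order $L^2$ with no suppression, and the map would expand rather than contract. As written, your proposal supplies no mechanism for this gain, so the claimed coefficient $\mathcal{O}_1 L^{2\sigma}$ does not follow.

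A smaller omission: the linear contribution from large preimages $Y\in\mathcal{P}_{\mathrm{c};j}\setminus\mathcal{S}_j$ (the paper's $\mathcal{L}^{(3)}$) is not touched by the extraction at all; there the contraction comes entirely from the geometric shrinkage (\ref{largeShrinkage}), which produces surplus decay $A^{-\eta\vert X\vert}$ to beat the $\mathcal{O}_L^{\vert X\vert}$ blow-up. Worth mentioning explicitly, since your description frames the entire linear term as a charged-mode problem.
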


Regarding the last claim, we just recall that $K_{-\mathsf{N}-1} = 0$ so that no integrability issues can arise. Anyway, note the nice structure of the second estimate: no term involves a difference of $K_j$ with both $\varepsilon$ and $\dot{\varepsilon}$ appearing. This enables us to bootstrap the estimates, as we first assume that $\varepsilon = \dot{\varepsilon}$, get some bounds, and then reinsert the bounds into the same estimate but with $\varepsilon\neq\dot{\varepsilon}$.

\begin{cor}
    Suppose $L\gtrsim_{\sigma^{-1},\delta^{-1}}1$ (with other constants selected accordingly). We have $K_j\in\mathbb{B}^{j,2-\delta}_{2\mathfrak{h}_*,2\kappa_*,A}$ for all $j \geqslant-\mathsf{N}-1$, more precisely
    \begin{equation}
        \Vert K_j\Vert_{2\mathfrak{h}_*,2\kappa_*,A}\leqslant\Vert\zeta_{0;j}\Vert^{2-\delta}_{\wp},
    \end{equation}
    and furthermore for $j\geqslant -\mathsf{N}$,
    \begin{equation}
        \Vert K_j-\dot{K}_j\Vert_{\mathfrak{h}_*,\kappa_*,A}\leqslant\Vert\zeta_{0;j},\dot{\zeta}_{0;j}\Vert^{1-\delta}_{\wp}\big(\Vert\zeta_{0;j}-\dot{\zeta}_{0;j}\Vert_{\wp}+\Vert\zeta_{0;j},\dot\zeta_{0;j}\Vert_{\wp}L^{-j}\vert\varepsilon-\dot\varepsilon\vert\big).
    \end{equation}
\end{cor}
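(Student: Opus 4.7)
The proof proceeds by induction on $j$, starting from the trivial base $K_{-\mathsf{N}-1} = \dot K_{-\mathsf{N}-1} = 0$, with Proposition~\ref{MainFlowEstimates} as the inductive engine. I will first establish the absolute bound $\Vert K_j\Vert_{2\mathfrak{h}_*, 2\kappa_*, A}\leqslant \Vert\zeta_{0;j}\Vert_\wp^{2-\delta}$, and then use it to unlock the difference bound via the bootstrap (first $\varepsilon = \dot\varepsilon$, then $\varepsilon\neq\dot\varepsilon$) signposted in the remarks after the proposition.

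For the absolute bound, plugging the inductive hypothesis into the first estimate of Proposition~\ref{MainFlowEstimates} yields $\Vert K_{j+1}\Vert_{2\mathfrak{h}_*, 2\kappa_*, A} \leqslant \mathcal{O}_1 L^{2\sigma}\Vert\zeta_{0;j}\Vert_\wp^{2-\delta} + \mathcal{O}_A\Vert\zeta_{0;j}\Vert_\wp^2$. From~(\ref{zetajdef}), using $\Gamma_{\varepsilon;j}(0)\asymp (\log L)/(2\pi)$ and the rescaling of the $\mathrm{L}^\wp$ norm under $\mathrm{S}_{L^{-1}}$, the per-step ratio is $\Vert\zeta_{0;j+1}\Vert_\wp/\Vert\zeta_{0;j}\Vert_\wp \asymp L^{2\sigma - 2/\wp}$; raising to the power $2-\delta$, closure of the induction reduces to $\mathcal{O}_1 L^{2\sigma} + \mathcal{O}_A\zeta_*^\delta \leqslant L^{(2\sigma - 2/\wp)(2-\delta)}$. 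The decisive gap $(2\sigma - 2/\wp)(2-\delta) > 2\sigma$, equivalent to $\wp > (2-\delta)/((1-\delta)\sigma)$ and hence secured by the choice $\wp = 5\sigma^{-1}$, absorbs the $L^{2\sigma}$ term for $L\gtrsim_{\sigma^{-1},\delta^{-1}}1$, and the residual $\mathcal{O}_A\zeta_*^\delta$ is absorbed using $\zeta_*\lesssim_A 1$.

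For the difference bound I perform the bootstrap. In the first pass, specializing to $\varepsilon = \dot\varepsilon$, the $|\varepsilon - \dot\varepsilon|$ forcing in Proposition~\ref{MainFlowEstimates} vanishes and the iterated $K$-term is the very object being bounded, namely $\Vert K_j[\varepsilon,\zeta] - K_j[\varepsilon,\dot\zeta]\Vert_{\mathfrak{h}_*, \kappa_*, A}$. The induction in this restricted setting closes by the identical arithmetic as for the absolute bound, with the pair $(a_j^{1-\delta}, D_j)$ playing the role of $\Vert\zeta_{0;j}\Vert_\wp^{2-\delta}$ (writing $a_j = \Vert\zeta_{0;j}, \dot\zeta_{0;j}\Vert_\wp$, $D_j = \Vert\zeta_{0;j} - \dot\zeta_{0;j}\Vert_\wp$); the same argument with $\dot\varepsilon$ in place of $\varepsilon$ gives the analogous bound. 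In the second pass, with $\varepsilon\ne\dot\varepsilon$, the decisive structural feature highlighted in the text is that every iterated $K$-term on the right of Proposition~\ref{MainFlowEstimates} is a pure $\zeta$-difference at a single $\varepsilon$ (or $\dot\varepsilon$), already controlled by the first pass; the $\Vert K\Vert_{2\mathfrak{h}_*,2\kappa_*,A}$ inside the $|\varepsilon-\dot\varepsilon|$ forcing is controlled by the absolute bound. The right-hand side thus becomes a polynomial in $a_j$, $D_j$, and $L^{-j}|\varepsilon-\dot\varepsilon|$, whose $D_j$-slot absorbs into $a_{j+1}^{1-\delta}D_{j+1}$ by the previous comparison, and whose $|\varepsilon-\dot\varepsilon|$-slot absorbs into $a_{j+1}^{2-\delta}L^{-(j+1)}|\varepsilon-\dot\varepsilon|$ by the analogous inequality, now using the combined gain $L^{(2\sigma - 2/\wp)(2-\delta) - 1}$.

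The main obstacle is entirely arithmetic. The whole construction hinges on the one-step gain $L^{(2\sigma - 2/\wp)(2-\delta)}$ strictly exceeding the linear amplification $L^{2\sigma}$, a direct RG expression of subcriticality $\sigma > 0$ with enough slack for the exponent $2-\delta$. A more delicate sub-point is fitting the $|\varepsilon-\dot\varepsilon|$ forcing: the coefficient delivered by Proposition~\ref{MainFlowEstimates} is linear in $\Vert\zeta_{0;j}\Vert_\wp$, while the target shape carries the higher power $\Vert\zeta_{0;j}\Vert_\wp^{2-\delta}$; closing this piece requires absorbing a fractional power $\Vert\zeta_{0;j}\Vert_\wp^{1-\delta}$ into the spare $L^{(2\sigma - 2/\wp)(2-\delta)-1}$ margin together with the smallness of $\zeta_*$. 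It is precisely for this chain of tunings that the parameter hierarchy at the start of the model subsection (fix $\sigma$, then $\delta$, $\mathfrak{h}_*$, $\kappa_*$, then $L$ large, finally $\zeta_*$ small relative to $A$) is set up the way it is.
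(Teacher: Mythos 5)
The first two thirds of your proposal match the paper's proof closely: the absolute bound and the $\varepsilon = \dot\varepsilon$ first pass are both handled by the same induction, with the per-step coupling gain providing the needed margin over the linear amplification $L^{2\sigma}$. Your arithmetic for these pieces is correct (using the floor ratio $\Vert\zeta_{0;j+1}\Vert_\wp / \Vert\zeta_{0;j}\Vert_\wp \gtrsim L^{2\sigma - 2/\wp}$ and verifying $\wp > (2-\delta)/((1-\delta)\sigma)$, which the choice $\wp = 5\sigma^{-1}$ easily covers). The paper's proof is terse here — it simply writes down the one-step inequality and asserts closure — so your fuller derivation of where the gap comes from is a useful elaboration.

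The trouble is in the second pass, and you have actually identified the precise spot but then proposed a fix that does not work. Writing $a_j = \Vert\zeta_{0;j},\dot\zeta_{0;j}\Vert_\wp$, the $|\varepsilon-\dot\varepsilon|$-forcing delivered by Proposition~\ref{MainFlowEstimates} is $\mathcal{O}_A\, a_j\, L^{-j}|\varepsilon-\dot\varepsilon|$ (linear in $a_j$, since $\Vert K_j\Vert_{2\mathfrak{h},2\kappa,A}\leqslant a_j^{2-\delta}\ll a_j$ when $a_j$ is small), whereas the target slot in the statement is $a_{j+1}^{2-\delta}L^{-(j+1)}|\varepsilon-\dot\varepsilon|$. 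Comparing,
\begin{equation}
\frac{a_{j+1}^{2-\delta}L^{-(j+1)}}{\mathcal{O}_A\, a_j\, L^{-j}} \gtrsim \frac{L^{(2\sigma - 2/\wp)(2-\delta)-1}}{\mathcal{O}_A}\, a_j^{1-\delta},
\end{equation}
and this must be $\geqslant 1$. But $a_j^{1-\delta}$ is \emph{small} — it tends to $0$ as $j\to-\mathsf{N}$ with $\mathsf{N}$ large, and is already $\leqslant\zeta_*^{1-\delta}$ for all relevant $j$. So the extra factor works \emph{against} you, not for you. The suggestion that "the smallness of $\zeta_*$" helps close this is backwards: shrinking $\zeta_*$ makes $a_j^{1-\delta}$ smaller and the deficit worse. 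Moreover, the "combined gain" $L^{(2\sigma - 2/\wp)(2-\delta)-1}$ is not automatically a gain: with $\wp = 5\sigma^{-1}$ and $\delta$ small, the exponent $(8\sigma/5)(2-\delta)-1$ is \emph{negative} for $\sigma$ less than about $5/(8(2-\delta))\approx 0.31$, which covers most of the subcritical window $\sigma\in(0,\tfrac12]$. So for $\beta$ near $8\pi$ this margin is a loss. The upshot is that your account of the $\varepsilon$-bootstrap does not close; it identifies the correct obstacle (a linear forcing matched against a superlinear target) but does not contain a mechanism that actually surmounts it. A correct treatment has to exploit either a sharper (quadratic-in-$a_j$) form of the $|\varepsilon-\dot\varepsilon|$-forcing than the stated Proposition literally gives, or a more careful accounting of how the forcing accumulates and propagates across the finitely many scales down to $j_\bullet$; appealing to the leftover gap of the first pass plus $\zeta_*$ smallness, as you do, is not enough.
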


\begin{proof}
    The first claim is verified inductively. For $j+1$,
    \begin{equation}
        \Vert{K_{j+1}}\Vert_{2\mathfrak{h}_*,2\kappa_*,A} \leqslant \big(\mathcal{O}_1L^{2\sigma}+\mathcal{O}_A\Vert\zeta_{0;j}\Vert^\delta_{\wp}\big)\Vert\zeta_{0;j}\Vert^{2-\delta}_{\wp,(LX)^{[4]}}\leqslant\Vert\zeta_{0,j+1}\Vert^{2-\delta}_{\wp}
    \end{equation}
    as desired. For the second part, we first assume $\varepsilon = \dot\varepsilon$. Then
    \begin{align}
        \Vert K_{j+1}[\varepsilon,\zeta]-&K_{j+1}[\varepsilon,\dot\zeta]\Vert_{\mathfrak{h}_*,\kappa_*,A} \leqslant\nonumber\\
        &\mathcal{O}_1L^{2\sigma}\Vert K_{j}[\varepsilon,\zeta]-K_{j}[\varepsilon,\dot\zeta]\Vert_{\mathfrak{h}_*,\kappa_*,A}+\nonumber\\
        &\mathcal{O}_A\Vert\zeta_{0;j},\dot\zeta_{0;j}\Vert_{\wp}\big(\Vert\zeta_{0;j}-\dot\zeta_{0;j}\Vert_{\wp} + \Vert K_{j}[\varepsilon,\zeta]-K_{j}[\varepsilon,\dot\zeta]\Vert_{\mathfrak{h}_*,\kappa_*,A}\big).
    \end{align}
    Inductively, in much the same way as above,
    \begin{equation}
        \Vert K_{j+1}[\varepsilon,\zeta]-K_{j+1}[\varepsilon,\dot{\zeta}]\Vert_{\mathfrak{h}_*,\kappa_*,A} \leqslant \Vert\zeta_{0,j+1},\dot\zeta_{0;j+1}\Vert^{1-\delta}_{\wp}\Vert\zeta_{0;j+1}-\dot{\zeta}_{0;j+1}\Vert_{\wp}.
    \end{equation}
    Using this, we can bootstrap the estimates and, essentially repeating the arguments, find the result.
\end{proof}

This already looks like a Lipschitz property. Finishing up is not too complicated because there are only finitely many RG steps remaining. The next section deals with those.

\subsection{Remaining steps}\label{RemSteps}

Recall that
\begin{equation}
    \mathrm{Z}(J) = e^{\frac{1}{2}\langle\mathrm{C}_{\varepsilon}J,J\rangle+E}\,\mathbb{E}\bigg[\mathcal{Z}_{j_\bullet}\!\bigg(\sum_{k=j_\bullet}^0\mathrm{S}_{L^{k-j_\bullet}}\Psi_{\hspace{-1pt}\varepsilon;k}+\mathrm{S}_{L^{-j_\bullet}}\mathrm{C}_\varepsilon J\bigg)\bigg].
\end{equation}
We define ${\mathcal{Z}}^{\mathrm{red}}_j[\varepsilon,\zeta] = e^{-E[\varepsilon,\zeta]-\mathcal{E}_j[\varepsilon,\zeta]}\mathcal{Z}_j[\varepsilon,\zeta]$. Then
\begin{equation}
    \mathrm{Z}(J) = e^{\frac{1}{2}\langle \mathrm{C}_\varepsilon J,J\rangle}\frac{\mathbb{E}\big[{\mathcal{Z}}^{\mathrm{red}}_{j_\bullet}\big(\sum_{k=j_\bullet}^0\mathrm{S}_{L^{k-j_\bullet}}\Psi_{\hspace{-1pt}\varepsilon;k}+\mathrm{S}_{L^{-j_\bullet}}\mathrm{C}_\varepsilon J\big)\big]}{\mathbb{E}\big[{\mathcal{Z}}^{\mathrm{red}}_{j_\bullet}\big(\sum_{k=j_\bullet}^0\mathrm{S}_{L^{k-j_\bullet}}\Psi_{\hspace{-1pt}\varepsilon;k}\big)\big]}.
\end{equation}

The estimates we have should allow us to get the Lipschitz property, with the only catch being that we are dividing by a quantity that we also have to be able to bound away from $0$. This is not trivial because $K_j$ may in general become negative, though of course their sum is still non-negative (even positive before taking the limit). But we first deal with the Gaussian prefactor and the numerator.

So, first compute
\begin{align}
\big\vert e^{\frac{1}{2}\langle \mathrm{C}_\varepsilon J,J\rangle} - e^{\frac{1}{2}\langle \mathrm{C}_{\dot\varepsilon} J,J\rangle} \big\vert &\leqslant \vert\langle(\mathrm{C}_{\dot\varepsilon}-\mathrm{C}_\varepsilon)J,J\rangle\vert\int_0^1 e^{\frac{1}{2}\langle((1-t)\mathrm{C}_\varepsilon+t\mathrm{C}_{\dot\varepsilon})J,J\rangle}\dd t\nonumber\\
&\lesssim_1 \vert\varepsilon-\dot\varepsilon\vert e^{\mathcal{O}_1\Vert J\Vert_*^2}.
\end{align}
For the numerator,
\begin{align}
    \Vert{\mathcal{Z}}^{\mathrm{red}}_{j_\bullet} - \dot{\mathcal{Z}}^{\mathrm{red}}_{j_\bullet}\Vert_{\mathfrak{h}_*;\langle\phi\rangle} &\leqslant\sum_{X\in\mathcal{P}_{j_\bullet}}\Vert e^{V_{j_\bullet}}-e^{\dot{V}_{j_\bullet}}\Vert_{{\mathfrak{h}_*;\langle X^c,\phi\rangle}}\prod_{Y\in\mathcal{C}(X)}\Vert K_{j_\bullet}\Vert_{{\mathfrak{h}_*;\langle Y,\phi\rangle}}+\nonumber\\
    &\hspace{15pt}\sum_{X\in\mathcal{P}_{j_\bullet}}\Vert e^{\dot{V}_{j_\bullet}}\Vert_{{\mathfrak{h}_*;\langle X^c,\phi\rangle}}\prod_{Y\in\mathcal{C}(X)}\Vert \dot{K}_{j_\bullet}-K_{j_\bullet}\Vert_{{\mathfrak{h}_*;\langle Y,\phi\rangle}}\nonumber\\
    &\lesssim_1 4^{\vert\mathbb{T}^2_{j_\bullet}\vert}G_{\kappa_*}\!(\mathbb{T}^2_{j_\bullet},\phi)\big(\Vert\zeta-\dot\zeta\Vert_\wp + \vert\varepsilon-\dot\varepsilon\vert\big).
\end{align}
Then
\begin{align}
    \bigg\Vert& \mathbb{E}\bigg[{\mathcal{Z}}^{\mathrm{red}}_{j_\bullet}\bigg(\sum_{k=j_\bullet}^0\mathrm{S}_{L^{k-j_\bullet}}\Psi_{\hspace{-1pt}\varepsilon;k}+\mathrm{S}_{L^{-j_\bullet}}(\cdot)\bigg)-\dot{\mathcal{Z}}^{\mathrm{red}}_{j_\bullet}\bigg(\sum_{k=j_\bullet}^0\mathrm{S}_{L^{k-j_\bullet}}\Psi_{\hspace{-1pt}\dot\varepsilon;k}+\mathrm{S}_{L^{-j_\bullet}}(\cdot)\bigg)\bigg]\bigg\Vert_{\mathfrak{h}_{*}/2;\langle\phi\rangle}\nonumber\\
    &\leqslant \mathbb{E}\Big[\Vert{\mathcal{Z}}^{\mathrm{red}}_{j_\bullet}-\dot{\mathcal{Z}}^{\mathrm{red}}_{j_\bullet}\Vert_{\mathfrak{h}_*/2;\big\langle\sum_{k=j_\bullet}^0\mathrm{S}_{L^{k-j_\bullet}}\Psi_{\hspace{-1pt}\varepsilon;k}+\mathrm{S}_{L^{-j_\bullet}}\phi\big\rangle}\Big]+\nonumber\\
    &\hspace{20pt} \mathbb{E}\bigg[\sup_{t\in[0,1]}\bigg\Vert\bigg\langle \mathrm{D}\vert_{\sum_{k=j_\bullet}^0\mathrm{S}_{L^{k-j_\bullet}}((1-t)\Psi_{\hspace{-1pt}\varepsilon;k}+t\Psi_{\hspace{-1pt}\dot\varepsilon;k})+\mathrm{S}_{L^{-j_\bullet}}(\cdot)}\dot{\mathcal{Z}}_{j_\bullet}^{\mathrm{red}},\sum_{k=j_\bullet}^0\mathrm{S}_{L^{k-j_\bullet}}(\Psi_{\hspace{-1pt}\dot\varepsilon;k}-\Psi_{\hspace{-1pt}\varepsilon;k})\bigg\rangle\bigg\Vert_{\mathfrak{h}_*/2;\langle\phi\rangle}\bigg]\nonumber\\
    &\lesssim_1 \mathcal{O}_1^{\vert\mathbb{T}^2_{j_\bullet}\vert}\big(\Vert\zeta-\dot\zeta\Vert_\wp + \vert\varepsilon-\dot\varepsilon\vert\big)\mathbb{E}\bigg[G_{\kappa_*}\!\bigg(\mathbb{T}^2_{j_\bullet},\sum_{k=j_\bullet}^0\mathrm{S}_{L^{k-j_\bullet}}\Psi_{\hspace{-1pt}\varepsilon;k}+\mathrm{S}_{L^{-j_\bullet}}\phi\bigg)\bigg] +\nonumber\\
    & \hspace{20pt}\mathcal{O}_1^{\vert\mathbb{T}^2_{j_\bullet}\vert}\sum_{k=j_\bullet}^0\mathbb{E}\big[\Vert\Psi_{\hspace{-1pt}\dot\varepsilon;k}-\Psi_{\hspace{-1pt}\varepsilon;k}\Vert_{\mathcal{C}^2(\mathbb{T}^2_{k})}^2\big]^{ 1/2}\cdot\nonumber\\
    &\hspace{25pt}\mathbb{E}\bigg[G_{\kappa_*/8}\!\bigg(\mathbb{T}^2_{j_\bullet},\sum_{k=j_\bullet}^0\mathrm{S}_{L^{k-j_\bullet}}\Psi_{\hspace{-1pt}\varepsilon;k}+\mathrm{S}_{L^{-j_\bullet}}\phi\bigg)\bigg]^{\!1/4}\mathbb{E}\bigg[G_{\kappa_*/8}\!\bigg(\mathbb{T}^2_{j_\bullet},\sum_{k=j_\bullet}^0\mathrm{S}_{L^{k-j_\bullet}}\Psi_{\hspace{-1pt}\dot\varepsilon;k}+\mathrm{S}_{L^{-j_\bullet}}\phi\bigg)\bigg]^{\!1/4}\nonumber\\
    &\leqslant \mathcal{O}_1^{\vert\mathbb{T}^2_{j_\bullet}\vert}G_{\kappa_*/8}(\mathbb{T}^2,\phi)\big(\Vert\zeta-\dot\zeta\Vert_\wp + \vert\varepsilon-\dot\varepsilon\vert\big).
\end{align}
Finally, if we allow $\phi$ to vary,
\begin{align}
    \bigg\vert\mathbb{E}\bigg[{\mathcal{Z}}^{\mathrm{red}}_{j_\bullet}&\bigg(\sum_{k=j_\bullet}^0\mathrm{S}_{L^{k-j_\bullet}}\Psi_{\hspace{-1pt}\varepsilon;k}+\mathrm{S}_{L^{-j_\bullet}}\mathrm{C}_{\varepsilon}J\bigg)\bigg]-\mathbb{E}\bigg[{\mathcal{Z}}^{\mathrm{red}}_{j_\bullet}\bigg(\sum_{k=j_\bullet}^0\mathrm{S}_{L^{k-j_\bullet}}\Psi_{\hspace{-1pt}\varepsilon;k}+\mathrm{S}_{L^{-j_\bullet}}\mathrm{C}_{\dot\varepsilon}J\bigg)\bigg]\bigg\vert\nonumber\\
    &\leqslant \int_0^1\nonumber\bigg\vert\bigg\langle\mathrm{D}\vert_{((1-t)\mathrm{C}_{\varepsilon}+\mathrm{C}_{\dot\varepsilon}t)J}\mathbb{E}\bigg[{\mathcal{Z}}^{\mathrm{red}}_{j_\bullet}\bigg(\sum_{k=j_\bullet}^0\mathrm{S}_{L^{k-j_\bullet}}\Psi_{\hspace{-1pt}\varepsilon;k}+\mathrm{S}_{L^{-j_\bullet}}(\cdot)\bigg)\bigg],(\mathrm{C}_{\dot\varepsilon}-\mathrm{C}_{\varepsilon})J\bigg\rangle\bigg\vert\dd t\nonumber\\
    &\leqslant \mathcal{O}_1^{\vert\mathbb{T}^2_{j_\bullet}\vert}\vert\varepsilon-\dot\varepsilon\vert e^{\mathcal{O}_1\Vert J\Vert^2_*}.
\end{align}
Suppose now that
\begin{equation}
    m_\bullet \coloneqq \inf_{\varepsilon\leqslant\varepsilon_\bullet,\zeta\in\mathrm{B}_\bullet} \mathbb{E}\bigg[{\mathcal{Z}}^{\mathrm{red}}_{j_\bullet}\bigg(\sum_{k=j_\bullet}^0\mathrm{S}_{L^{k-j_\bullet}}\Psi_{\hspace{-1pt}\varepsilon;k}\bigg)\bigg] > 0.
\end{equation}
This is the object of the next section. It then immediately follows that
\begin{equation}
    \Vert\mathrm{Z}[\varepsilon,\zeta]-\mathrm{Z}[\dot\varepsilon,\dot\zeta]\Vert_\mathfrak{C} \leqslant \mathcal{O}_1^{\vert\mathbb{T}^2_{j_\bullet}\vert}m_\bullet^{-2}\big(\Vert\zeta-\dot\zeta\Vert_\wp + \vert\varepsilon-\dot\varepsilon\vert\big),
\end{equation}
which is exactly the Lipschitz property.

Let's also consider here the small coupling case, i.e. $\zeta_\bullet=\zeta_*$, i.e. $j_\bullet = 0$. Then
\begin{equation}
    \mathcal{Z}_0^{\mathrm{red}}(\phi) = e^{V_0(\mathbb{T}^2,\phi)} + K_0(\mathbb{T}^2,\phi).
\end{equation}
Doing a standard computation gives
\begin{equation}
    \mathrm{Z}[0,\zeta](J) = e^{\frac{1}{2}\langle \mathrm{C}J,J\rangle +V(\mathrm{C} J)} \frac{1+ K^V_1\!(\mathrm{C} J)}{1+ K^V_1\!(0)}
\end{equation}
where $V(\phi) = \int \zeta \big(1-\!\cos\!\big(\sqrtbeta\phi\big)\big)$ and
\begin{equation}
    \Vert K_1^V\Vert_{\mathfrak{h}_*,\kappa_*/2} \lesssim_1 \Vert \zeta\Vert_\wp^{2-\delta}.
\end{equation}
To show non-Gaussianity for small constant coupling, we can consider the connected truncated four-point function. This equals
\begin{align}
    \mathfrak{S}^{\mathrm{c},\mathrm{t}}_4(f_1,f_2,f_3,f_4) &= \frac{\partial^4}{\partial t_1\partial t_2\partial t_3\partial t_4}\bigg\vert_{t_1 =t_2=t_3=t_4= 0}\big(V + \log(1+K^V_1)\big)\bigg(\sum_{i=1}^4 t_if_i\bigg)\nonumber\\
    &=-\beta^2\zeta\int_{\mathbb{T}^2}(f_1f_2f_3f_4)(x)\dd x + 
   \sum_{\pi\in\mathrm{part}(4)}\frac{(-1)^{\vert\pi\vert-1}}{(1+K^V_1\!(0))^{\vert\pi\vert}}\prod_{I\in\pi}\bigg\langle \mathrm{D}^{\vert I\vert}\vert_0K^V_1,\bigotimes_{i\in I}f_i\bigg\rangle
\end{align}
where $\mathrm{part}(4)$ denotes the partitions of $\lcurl 1,2,3,4\rcurl$ into (nonempty) subsets. The term involving $K^V_1$ is bounded by $\mathcal{O}_1\zeta^{2-\delta} \prod_{i=1}^4\Vert f_i\Vert_{\mathcal{C}^2(\mathbb{T}^2)}$, so if we just take $f_1 = f_2=f_3=f_4 = 1$,
\begin{equation}
    \vert\mathfrak{S}^{\mathrm{c},\mathrm{t}}_4(1,1,1,1)\vert \geqslant \beta^2\vert\zeta\vert - \mathcal{O}_1\vert\zeta\vert^{2-\delta} > 0.
\end{equation}

\section{The lover bound}

Showing $m_\bullet > 0$ requires some special theory, which we have sequestered in this section. Suppose we manage to find a neighborhood $U_\bullet$ of $0$ in $\mathcal{C}^\infty(\mathbb{T}_{j_\bullet}^2)$ such that
\begin{equation}
    {\mathcal{Z}}^{\mathrm{red}}_{j_\bullet}\vert_{U_\bullet} \geqslant z_\bullet > 0.
\end{equation}
It will then follow that
\begin{equation}
    \mathbb{E}\bigg[{\mathcal{Z}}^{\mathrm{red}}_{j_\bullet}\bigg(\sum_{k=j_\bullet}^0\mathrm{S}_{L^{k-j_\bullet}}\Psi_{\hspace{-1pt}\varepsilon;k}\bigg)\bigg]\geqslant \mathbb{E}\bigg[\big(\mathbb{I}_{U_\bullet} {\mathcal{Z}}^{\mathrm{red}}_{j_\bullet}\big)\hspace{-1pt}\bigg(\sum_{k=j_\bullet}^0\mathrm{S}_{L^{k-j_\bullet}}\Psi_{\hspace{-1pt}\varepsilon;k}\bigg)\bigg] \geqslant z_\bullet\,\mathbb{P}\bigg[\bigg\lcurl\sum_{k=j_\bullet}^0\mathrm{S}_{L^{k-j_\bullet}}\Psi_{\hspace{-1pt}\varepsilon;k} \in U_\bullet\bigg\rcurl\bigg].
\end{equation}
The probability appearing here is nonzero because a Gaussian measure assigns nonzero weight to every neighborhood of the origin, so $m_\bullet > 0$. Hence we only need to find $U_\bullet$, which will of course be some kind of norm ball.

The idea is to apply polymer systems theory (cf. e.g. \cite{Glimm1987}, Chapter 20) to exponentiate the polymer representation. That is, if we define
\begin{equation}
    K_{j}^{V}\!(X,\phi) = e^{-V_j(X,\phi)}K_j(X,\phi),
\end{equation}
then
\begin{equation}
    \mathcal{Z}_j^{\mathrm{red}}(\phi) = \exp\!\bigg(V_j(\mathbb{T}^2_j,\phi) + \sum_{n=1}^\infty\frac{1}{n!}\sum_{(X_1,\ldots,X_n)\in\mathbf{C}_{n;j}}\mathfrak{n}(X_1,\ldots,X_n)K^{V}_j\!(X_1,\phi)\ldots K^{V}_j\!(X_n,\phi)\bigg).
\end{equation}
$\mathbf{C}_{n;j}$ denotes the class of $n$-tuples $(X_1,\ldots,X_n)\in\mathcal{P}_{\mathrm{c};j}$ for which $\bigcup_{i=1}^nX_i$ is connected, i.e. the graph on $\lcurl 1,\ldots,n\rcurl$ induced by the intersection relation of $X_1,\ldots,X_n$ is connected. $\mathfrak{n}(X_1,\ldots,X_n)$ is a combinatorial factor; all we need is that it is bounded by the number of spanning trees on the connectedness graph of $X_1,\ldots,X_n$. 

The equality holds as long as the right-hand side converges. We cannot show this for all $\phi$ because $K_j^V$ could become large, so we define $U_\bullet$ by the condition $G_{\kappa_*}\!(\mathbb{T}^2_{j_\bullet},\phi) \leqslant 2$. Then
\begin{align}
    \bigg\vert\sum_{n=1}^\infty\frac{1}{n!}\sum_{(X_1,\ldots,X_n)\in\mathbf{C}_{n;j_\bullet}}&\mathfrak{n}(X_1,\ldots,X_n)K^{V}_{j_\bullet}\!(X_1,\phi)\ldots K^{V}_{j_\bullet}\!(X_n,\phi)\bigg\vert\nonumber\\
    &\leqslant \sum_{n=1}^\infty\frac{\big(2\Vert\zeta_{j_\bullet}\Vert_\wp^{2-\delta}\big)^{\! n}}{n!}\sum_{(X_1,\ldots,X_n)\in\mathbf{C}_{n;j_\bullet},\tau\in\mathcal{T}_n}^{*} A^{-\vert X_1\vert}\ldots A^{-\vert X_n\vert}.
\end{align}
$\mathcal{T}_n$ denotes the trees on $\lcurl 1,\ldots, n\rcurl$ and the $*$ denotes the condition that $\tau$ be a subtree of the connectedness graph of $X_1,\ldots,X_n$. To control the sum, we do a standard tree pruning argument (cf. e.g. \cite{Dimock2000}). We pick an index $i_0$ and block $B_0\subset X_{i_0}$ and only keep the intersection condition for the polymers along the bonds of $\tau$:
\begin{align}
    \sum_{n=1}^\infty&\frac{\big(2\Vert\zeta_{j_\bullet}\Vert_\wp^{2-\delta}\big)^{\! n}}{n!}\sum_{X_1,\ldots,X_n\in\mathbf{C}_{n;j_\bullet},\tau\in\mathcal{T}_n}^{*} A^{-\vert X_1\vert}\ldots A^{-\vert X_n\vert}\nonumber\\
    &\leqslant 2\zeta_*^{2-\delta}\vert\mathbb{T}_{j_\bullet}^2\vert+\sum_{n=2}^\infty\frac{\big(2\zeta_*^{2-\delta}\big)^{\! n}}{n\cdot n!}\sum_{i_0=1}^n\frac{1}{\vert X_{i_0}\vert}\sum_{X_1,\ldots,X_n\in\mathcal{P}_{\mathrm{c};j_\bullet},\tau\in\mathcal{T}_n,B_0\in\mathcal{B}_{j_\bullet}(X_{i_0})} \prod_{i=1}^nA^{-\vert X_i\vert}\prod_{\lcurl i,j\rcurl\in\tau}\mathbb{I}_{X_i\cap X_j\neq\varnothing}.
\end{align}
We start pruning the tree at the leaves, working inwards towards $i_0$. For instance, if $i$ is a vertex with leaves $j_1,\ldots,j_{d_i-1}$ ($d_i$ being the degrees of $\tau$), then we first perform the sum ($B$ here is some arbitrary block in $X_i$ just serving to pin $X_{j_1}$)
\begin{equation}
    \sum_{X_{j_1}\in\mathcal{P}_{\mathrm{c};j_\bullet}}A^{-\vert X_{j_1}\vert}\mathbb{I}_{X_i\cap X_{j_1}\neq\varnothing} \leqslant \vert X_i\vert \sum_{X_{j_1}\in \mathcal{P}_{\mathrm{c};j_\bullet}}^{B\cap X_{j_1}\neq\varnothing}A^{-\vert X_{j_1}\vert}\leqslant 9\vert X_i\vert\sum_{k=1}^\infty (\mathcal{O}_1A^{-1})^k\leqslant\vert X_i\vert.
\end{equation}
After doing this for all the leaves, $i$ now becomes a leaf itself and a factor $\vert X_i\vert^{d_i-1}\leqslant (d_i-1) ! \,e^{\vert X_i\vert}$ is produced. The exponential is merged with $A$ to give $(e^{-1}A)^{-\vert X_i\vert}$. These factors of $e^{-1}$ do not accumulate over time, since they are always merged to a different polymer. Anyway, proceeding like this (the last step when pruning the leaves of $i_0$ is slightly different but basically the same) leaves us with
\begin{align}
    \sum_{n=1}^\infty\frac{\big(2\Vert\zeta_{j_\bullet}\Vert_\wp^{2-\delta}\big)^{\! n}}{n!}\sum_{X_1,\ldots,X_n\in\mathbf{C}_{n;j_\bullet},\tau\in\mathcal{T}_n}^{*}& A^{-\vert X_1\vert}\ldots A^{-\vert X_n\vert}\nonumber\\
    &\leqslant 2\zeta_*^{2-\delta}\vert\mathbb{T}^2_{j_\bullet}\vert\bigg(1+\sum_{n=2}^\infty\frac{\big(2\zeta_*^{2-\delta}\big)^{\!n-1}}{ n!}\sum_{\tau\in\mathcal{T}_n}\prod_{i=1}^n (d_i-1)!\bigg).
\end{align}
Using Cayley's formula for counting trees with degree sequence fixed and counting the number of possible degree sequences gives
\begin{equation}
    \sum_{\tau\in\mathcal{T}_n}\prod_{i=1}^n (d_i-1)!\leqslant 4^{n-1}(n-2)!
\end{equation}
and so we conclude that
\begin{equation}
    z_\bullet \geqslant e^{-2\zeta_*\vert\mathbb{T}^2_{j_\bullet}\vert}.
\end{equation}
\appendix

\section{Scale decomposition}\label{AppScale}

\subsection{Mollifier}\label{AppMolly}

We take $\chi = c_{\mathfrak{d}}\mathsf{B}^{*2}\mathrm{C}^{*\mathfrak{d}}$ where $\mathsf{B}$ is a standard bump of support radius $1/4$ on $\mathbb{R}^2$ and $c_{\mathfrak{d}}$ is chosen so that $\chi$ has integral $1$. In Fourier space,
\begin{equation}
    \hat{\chi}(p) \propto_1 \int_{\mathbb{R}^2}\frac{\hat{\mathsf{B}}(p-q)^2}{(1+\vert q\vert^2)^{\mathfrak{d}}}\mathrm{d}q.
\end{equation}
Because $\hat{\mathsf{B}}$ is Schwartz, it is easy to see that this still behaves asymptotically like $(1+\vert p\vert^2)^{\mathfrak{d}}$. For $0<\varepsilon<1$, put $\chi_\varepsilon(x) = \varepsilon^{-2}\chi(\varepsilon^{-1}x)$ (this also makes sense as a function on $\mathbb{T}^2$).

\subsection{Continuous decomposition}

Following \cite{BauerschmidtFiniteRange}, we may realize the massive GFF on $\mathbb{T}^2$ as an integral against an $\mathrm{L}^2$-cylindrical Brownian motion $B_t$:
\begin{equation}
    {\Phi} = \int_0^\infty t^{-1/2}{\mathrm{F}}_{\hspace{-1pt}t}(\cdot\,;\hspace{-.5pt}1)*\dd {B}_t.
\end{equation}
Here ${\mathrm{F}}_{\hspace{-1pt}t}(x;\hspace{-.5pt}1) = 0$ when $\vert x\vert_2 \geqslant t/2$, and its precise form is
\begin{equation}
    {\mathrm{F}}_{\hspace{-1pt}t}(x;\hspace{-.5pt}m) = t\sum_{p\in 2\pi\mathbb{Z}^2}\varphi\big(t\sqrt{m^2+p^2}\big)e^{ipx}
\end{equation}
where $\varphi$ is a real even $1/2$-Gevrey function\footnote{We don't need this much regularity, but we take it since it comes for free.} with $\mathrm{supp}(\hat{\varphi})\subset[-1/2,1/2]$ normalized so that
\begin{equation}
    \int_0^\infty t\,\varphi(t)^2\dd t = 1.
\end{equation}
After mollifying and periodizing, we can decompose
\begin{align}
\Phi_{\hspace{-.5pt}\varepsilon} &= \int t^{-1/2}{\mathrm{F}}_{\hspace{-1pt}\varepsilon;t}(\cdot\,;\hspace{-.5pt}m)*\dd {B}_t, \nonumber\\
{\mathrm{F}}_{\hspace{-1pt}\varepsilon;t}(x;\hspace{-.5pt}m)&= (\chi_\varepsilon*{\mathrm{F}}_{\hspace{-1pt}t})(x)=t\hspace{1pt}\sum_{p\in 2\pi\mathbb{Z}^2}\hat\chi(\varepsilon p)\varphi\big(t\sqrt{m^2+p^2}\big)e^{ipx}.
\end{align}
The covariance decomposes as
\begin{equation}
    \mathrm{C}_\varepsilon(x) = (\chi_\varepsilon*\mathrm{C}*\chi_\varepsilon)(x) = \int_0^\infty {\mathrm{F}}^{*2}_{\hspace{-1pt}\varepsilon;t}(x;\hspace{-.5pt}m)\frac{\dd t}{t}.
\end{equation}
The following estimate is immediate for any multiindex $\alpha\leqslant4$ and $\varepsilon\geqslant 0$:
\begin{equation}
\vert\partial^\alpha_x{\mathrm{F}}^{* 2}_{\hspace{-1pt}\varepsilon;t}(x;\hspace{-.5pt}m)\vert\lesssim_{1}t^{-\vert\alpha\vert}e^{-\sqrt{t m}}.
\end{equation}
When $t < \varepsilon$, we expect better estimates since we have mollified. Indeed, in this case
\begin{equation}
\vert\partial^\alpha_x{\mathrm{F}}^{*2}_{\hspace{-1pt}\varepsilon;t}(x;\hspace{-.5pt}m)\vert\lesssim_{1}(\varepsilon^{-1}t)^2\varepsilon^{-\vert\alpha\vert}.
\end{equation}
Finally, we wish to know the value at $x = 0$ more accurately since it enters directly into the RG flow. The sum over $2\pi\mathbb{Z}^2$ is a bit awkward, so we first work on all of $\mathbb{R}^2$: if $\bar{\mathrm{F}}_{\hspace{-1pt}\varepsilon;t}$ is the analogue of ${\mathrm{F}}_{\hspace{-1pt}\varepsilon;t}$ on the full plane, straightforward computations with the Fourier transform show
\begin{equation}
\bar{\mathrm{F}}^{*2}_{\hspace{-1pt}\varepsilon;t}(0;\hspace{-.5pt}m) = \frac{1}{2\pi} + r^{(1)}_{\varepsilon;t}(m);\quad\quad \vert r^{(1)}_{\varepsilon;t}(m)\vert \lesssim_1\varepsilon t^{-1}+t^2m^2.
\end{equation}
Because of the Poisson summation formula
\begin{equation}
    \mathrm{F}^{*2}_{\hspace{-1pt}\varepsilon;t}(x;\hspace{-.5pt}m) = \sum_{y\in\mathbb{Z}^2}\bar{\mathrm{F}}^{*2}_{\hspace{-1pt}\varepsilon;t}(x-y;\hspace{-.5pt}m)
\end{equation}
and the finite range property, $\mathrm{F}^{*2}_{\hspace{-1pt}\varepsilon;t}(x;\hspace{-.5pt}m) = \bar{\mathrm{F}}^{*2}_{\hspace{-1pt}\varepsilon;t}(x;\hspace{-.5pt}m)$ if $t < 1/4$, so for these values the above result transfers immediately.

\subsection{Discrete decomposition \texorpdfstring{\&}{and} scaling}

Write $\varepsilon = \epsilon L^{-\mathsf{N}-1}/6$ for $\epsilon\in[1,L)$. Then put ($j = -\mathsf{N},\ldots,-1$)
\begin{equation}
    \mathrm{C}_{\varepsilon;-\mathsf{N-1}}(x) = \int_0^{L^{-\mathsf{N}}/6}\mathrm{F}_{\hspace{-1pt}\varepsilon;t}(x;\hspace{-.5pt}1)\frac{\dd t}{t},\quad
    \mathrm{C}_{\varepsilon;j}(x) = \int_{L^j/6}^{L^{j+1}/6}\mathrm{F}_{\hspace{-1pt}\varepsilon;t}(x;\hspace{-.5pt}1)\frac{\dd t}{t},\quad
    \mathrm{C}_{\varepsilon;0}(x) = \int_{1/6}^{\infty}\mathrm{F}_{\hspace{-1pt}\varepsilon;t}(x;\hspace{-.5pt}1)\frac{\dd t}{t}
\end{equation}
with corresponding fields $\Phi_{\hspace{-.5pt}\varepsilon;j}$. The covariance $\mathrm{C}_{\varepsilon;j}$ for $j\leqslant -1$ has range at most $L^j/2$.

We will also rescale these covariances so that we are always working \enquote{at scale $1$}. Let $\mathrm{S}_\lambda$ be the scaling map $(\mathrm{S}_\lambda\phi)(x) = \phi(\lambda^{-1}x)$. Then
\begin{equation}
    \Gamma_{\hspace{-1pt}\varepsilon;j}=\mathrm{S}_{L^{-j}}\mathrm{C}_{\varepsilon;j}\mathrm{S}_{L^{-j}}^*
\end{equation}
with the adjoint taken in $\mathrm{L}^2$. The corresponding fields are denoted $\Psi_{\hspace{-1pt}\varepsilon;j} = \mathrm{S}_{L^{-j}}\Phi_{\hspace{-.5pt}\varepsilon;j}$. It is easy to check that
\begin{equation}
    \Gamma_{\hspace{-1pt}\varepsilon;j}(x) = \mathrm{C}_{\hspace{-1pt}\varepsilon;j}(L^jx).
\end{equation}
If $j\leqslant -1$,
\begin{equation}
    \vert\partial^\alpha\Gamma_{\hspace{-1pt}\varepsilon;j}(x)\vert\lesssim_{1}\!\bigg\lcurl\hspace{-5pt}
    \begin{array}{ll}
        \log(L), & \vert\alpha\vert = 0\\
        1, & \vert\alpha\vert\geqslant 0
    \end{array}\,\,.
\end{equation}
For the diagonal values $x = 0$, we find that ($j\neq-\mathsf{N}-1$)
\begin{equation}
    \Gamma_{\hspace{-1pt}\varepsilon;j}(0) = \frac{\log(L)}{2\pi} + r^{(2)}_{\varepsilon;j};\quad\quad \vert r^{(2)}_{\varepsilon;j}\vert\lesssim_1 1.
\end{equation}
This also implies that with $\zeta_{\varepsilon;j}$ as in (\ref{zetajdef}),
\begin{equation}
    \zeta_{\varepsilon;j+1} = \varrho^{(1)}_{\varepsilon;j}L^{2\sigma}\mathrm{S}_{L^{-1}}\zeta_{\varepsilon;j}
\end{equation}
where $\varrho^{(1)}_{\varepsilon;j}$ is bounded uniformly away from $0,\infty$ if $j\neq-\mathsf{N}-1$.
For the $j = 0$ case, similar reasoning gives
\begin{equation}
     \vert\partial^\alpha\Gamma_{\hspace{-1pt}\varepsilon;0}(x)\vert\lesssim_{1} 1.
\end{equation}

Finally, we will also need to compare the fields at the same scales but with different covariances. Take $\dot{\varepsilon}\leqslant\varepsilon$ and restrict $j\neq-\mathsf{N}$. We have
\begin{align}
    \vert\partial^\alpha(\Gamma_{\hspace{-1pt}\varepsilon;j}-\Gamma_{\hspace{-1pt}\dot{\varepsilon};j})\vert(x) &\lesssim_{1} L^{-j}\vert\varepsilon-\dot{\varepsilon}\vert,\label{CovChangeBoundGamma}\\
    \mathbb{E}\big[\Vert\Psi_{\hspace{-1pt}\varepsilon;j}-\Psi_{\hspace{-1pt}\dot\varepsilon;j}\Vert^2_{\mathcal{C}^r(X)}\big]&\lesssim_r L^{-2j}\vert\varepsilon-\dot{\varepsilon}\vert^2\vert X\vert,\label{CovChangeBound}\\
    \Vert\zeta_{\varepsilon;j}-\zeta_{\dot\varepsilon;j}\Vert_{\wp}&\lesssim_1 L^{-j}\vert\varepsilon-\dot{\varepsilon}\vert\Vert\zeta_{0;j}\Vert_{\wp},\label{couplingControl}\\
    \Vert\zeta_{\varepsilon;j}-\dot\zeta_{\varepsilon;j}\Vert_{\wp}&=\varrho^{(2)}_{\varepsilon;j}\Vert\zeta_{0;j}-\dot\zeta_{0;j}\Vert_{\wp}.\label{couplingControl2}
\end{align}
Here $X\in\mathcal{P}_j$ and $\varrho^{(2)}_{\varepsilon;j}$ is bounded uniformly away from $0,\infty$. The last equality also holds for $j = -\mathsf{N}-1$, but then $\varrho^{(2)}_{\varepsilon;-\mathsf{N}-1}$ is bounded uniformly away from $0,\infty$ modulo a dependence on $L$. Showing all these relations is straightforward, using only Gaussian random variable theory, manipulations with the mollifier $\chi_{\varepsilon}$ in Fourier space, and the already stated properties of $\Gamma_{\hspace{-1pt}\varepsilon;j}$.

\subsection{Finer scale decomposition}\label{AppFinerScale}

We will need a refined scale decomposition to prove some estimates below. Assume $L=\ell^\mathfrak{m}$ with $\mathfrak{m}\geqslant\ell^2$. As above, we have covariances (defined for $j = -\mathsf{N},\ldots,-1$, $s = -\mathfrak{m},\ldots,-1$):
\begin{equation}
    \mathrm{C}_{\varepsilon;j;s}(x) = \int_{L^{j+1}\ell^s/6}^{L^{j+1}\ell^{s+1}/6}\mathrm{F}_{\hspace{-1pt}\varepsilon;t}(x;\hspace{-.5pt}1)\frac{\dd t}{t}
\end{equation}
with attendant rescaled versions $\Gamma_{\hspace{-1pt}\varepsilon;j;s}(x) = \mathrm{C}_{\varepsilon;j;s}(L^{j+1}\ell^s x)$ and fields $\Psi_{\hspace{-1pt}\varepsilon;j;s}$. These again satisfy
\begin{align}
    \vert\partial^\alpha\Gamma_{\hspace{-1pt}\varepsilon;j;s}(x)\vert&\lesssim_{1}\!\bigg\lcurl\hspace{-5pt}
    \begin{array}{ll}
        \log(L), & \vert\alpha\vert = 0\\
        1, & \vert\alpha\vert\geqslant 0
    \end{array}\,\,,\\
    \vert\partial^\alpha(\Gamma_{\hspace{-1pt}\varepsilon;j;s}-\Gamma_{\hspace{-1pt}\dot\varepsilon;j;s})\vert(x)&\lesssim_1 L^{-(j+1)}\ell^{-s}\vert\varepsilon-\dot\varepsilon\vert,\\
    \mathbb{E}\big[\Vert\Psi_{\hspace{-1pt}\varepsilon;j}-\Psi_{\hspace{-1pt}\dot\varepsilon;j}\Vert^2_{\mathcal{C}^r(X)}\big]&\lesssim_r  L^{-2(j+1)}\ell^{-2s}\vert\varepsilon-\dot\varepsilon\vert^2\vert X\vert.
\end{align}

\section{Polymers} \label{AppPoly}

\subsection{Basic definitions} \label{AppPolyBasic}

Let $\mathcal{B}_{j;s}$ denote all closed squares of the tesselation $\mathbb{Z}^2/L^{-j-1}\ell^{-s}\mathbb{Z}$ of $\mathbb{T}^2_{j;s}=L^{-j-1}\ell^{-s}\mathbb{T}^2$. $\mathcal{P}_{j;s}$ is the set of polymers which are finite unions (including $\varnothing$) of elements of $\mathcal{B}_{j;s}$. For $X\in\mathcal{P}_{j;s}$, we write $\vert X\vert$ for the area of $X$. $\mathcal{P}_{\mathrm{c};j;s}\subset\mathcal{P}_{j;s}$ denotes nonempty polymers which are connected as sets. $\mathcal{S}_{j;s}\subset\mathcal{P}_{\mathrm{c};j;s}$ denotes \enquote{small} polymers which are defined as being connected with $\vert X\vert\leqslant 4$. Collars of $X$, denoted $X^{[r]}$, are the sets of points which are at an $\infty$-distance of at most $r$ from $X$. $X^* = X^{[3]}$ is especially important as it equals the unions of all the small polymers intersecting $\mathrm{Int}(X)$. $\bar{X}^{(k)}$ denotes the smallest member of $\mathcal{P}_{j;s+k}$ (of course $s$ will \enquote{carry into} $j$ if $s+k \geqslant 0$) containing $\ell^{-k}X$. We use the special notation $\bar{X}^{L} = \bar{X}^{(\mathfrak{m})}$. $\mathcal{C}(X)$ denotes the connected components of $X$. Finally, we will most often deal with polymers in $\mathbb{T}^2_{j;-\mathfrak{m}}$, so in this case we drop the index $-\mathfrak{m}$.

\subsection{Geometric estimates} \label{AppGeometricEst}

A general insight of the Wilsonian RG is that the effective potential stays roughly local. That is, we begin with a potential which is given by a local expression in the field and the RG flow is supposed to approximately preserve this property; this is why we have introduced the distinction between small and large polymers. Note, for instance, that if $X$ is not small, then $\vert\bar{X}^{(1)}\vert<\vert X\vert$, at least if $\ell$ is large enough. In fact, this inequality always gains us a factor uniform in $\vert X\vert$, which enables us to completely control the behavior on large sets and which is the content of the next lemma \cite{ParkCity}.

\begin{lem}
    If $X\in\mathcal{P}_{\mathrm{c};j;s}$ is not small and $k\geqslant1$, then with $\eta = \frac{1}{101}$,
    \begin{equation}
        \vert \bar{X}^{(k)}\vert \leqslant \frac{1}{1+\eta}\vert X\vert.\label{largeShrinkage}
    \end{equation}
\end{lem}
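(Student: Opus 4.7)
The plan is to first reduce to the case $k=1$, then handle this case by combining a geometric observation with a counting argument.

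For the reduction, any polymer $Y$ satisfies $|\bar{Y}^{(1)}|\leq|Y|$, since the blocks of $\bar{Y}^{(1)}$ have pairwise disjoint interiors and each must contain at least one sub-block of $Y$. Iterating this and using the minimality of $\bar{X}^{(k)}$ as a cover at each intermediate scale yields $|\bar{X}^{(k)}|\leq|\bar{X}^{(1)}|$ for all $k\geq 1$, so it suffices to prove the $k=1$ bound. For $k=1$, let $m=|\bar{X}^{(1)}|$ and $n_S=|X\cap S|\geq 1$ for each super-block $S$ meeting $X$, so that $|X|=\sum_S n_S$ and the target becomes $\sum_S(n_S-1)\geq\eta m$.

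The key geometric fact is: any connected union $Y$ of $s$-blocks with $|Y\cap S|\leq 1$ for every super-block $S$ has $|Y|\leq 4$. This is a finite enumeration — adjacent $s$-blocks in distinct super-blocks must meet on a super-block boundary, and extending from a seed block under the one-per-super-block constraint rapidly forces the configuration into the $2\times 2$ arrangement at the corner where four super-blocks meet, from which no further adjacent block can avoid revisiting an already-used super-block. Given this, let $n_M$ denote the total number of blocks of $X$ lying in super-blocks with $n_S\geq 2$; then $\sum_S(n_S-1)\geq n_M/2$, since each such super-block contributes $n_S-1\geq n_S/2$. Meanwhile, removing these $n_M$ "crowded" blocks from $X$ leaves connected components each of which is a connected sub-polymer with at most one block per super-block, hence of size $\leq 4$ by the geometric fact; combining with the standard bound that removing $k$ vertices from a connected graph of maximum degree $\Delta\leq 4$ (the relevant bound for the 2D edge-adjacency polymer graph) produces at most $1+3k$ components gives $|X|-n_M\leq 4(1+3n_M)$, i.e., $n_M\geq(|X|-4)/13$. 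Thus $\sum_S(n_S-1)\geq n_M/2\geq(|X|-4)/26\geq|X|/101\geq m/101=\eta m$ for $|X|\geq 6$, while the case $|X|=5$ follows immediately from the geometric fact applied to $X$ itself, yielding $m\leq|X|-1=4\leq|X|/(1+\eta)$.

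The main obstacle is the geometric fact itself: while intuitively clear, its proof is a case-intensive finite enumeration that requires careful attention to the precise polymer adjacency convention in play (edge-adjacency vs. corner-adjacency for "connectedness as sets"); the counting step, by contrast, is routine once the geometric fact is in hand.
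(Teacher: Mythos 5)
Your approach is genuinely different from the paper's. The paper argues by induction on $\vert\bar{X}^{(1)}\vert$: it removes a connected quintuplet $Y$ of super-blocks from $\bar{X}^{(1)}$, observes that $\bar{X}^{(1)}-Y$ splits into at most $\Delta = 24$ pieces (in $d=2$), and applies the inductive hypothesis (or the trivial bound) to the blocks of $X$ lying over each piece. You instead count directly by classifying blocks of $X$ according to whether their super-block is hit once or more than once. Both proofs rest on the same geometric observation---a connected polymer visiting each super-block at most once must be small; the paper uses this only to seed the induction (to argue that $\vert X\vert = \vert\bar{X}^{(1)}\vert$ forces $X$ small or disconnected), whereas you use it twice, once to bound the surviving components and once for the boundary case $\vert X\vert = 5$. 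Your route is arguably cleaner in that it avoids any question about whether the restrictions of $X$ over the pieces $Z_i$ are themselves connected or non-small, which the paper glosses over.

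There is one concrete error. The paper's notion of connectedness is ``connected as sets'', i.e.\ two closed squares that share only a corner are considered connected (the paper even flags exactly this configuration in the Whitney-domain footnote). Under that convention the block-adjacency graph in dimension $2$ has maximum degree $8$, not $4$, so the standard component bound after deleting $n_M$ vertices is $1 + 7 n_M$, not $1 + 3 n_M$. The counting then gives $\vert X\vert - n_M \leqslant 4(1 + 7 n_M)$, hence $n_M \geqslant (\vert X\vert - 4)/29$ and $\vert X\vert - m \geqslant (\vert X\vert - 4)/58$, and the chain $(\vert X\vert - 4)/58 \geqslant \vert X\vert/101 \geqslant m/101$ only holds for $\vert X\vert \geqslant 10$. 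The cases $6 \leqslant \vert X\vert \leqslant 9$ are thus not covered as written. They are easy to fill in by exactly the argument you give for $\vert X\vert = 5$: the geometric fact forces $m \leqslant \vert X\vert - 1$, and $\vert X\vert - 1 \leqslant \vert X\vert/(1+\eta)$ holds whenever $\vert X\vert \leqslant 102$. But you should state this, and you should replace the $\Delta=4$ assertion with the correct $\Delta=8$ for the paper's corner-adjacency convention.
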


\begin{proof}
    It's enough to consider $k=1$. For clarity, we put $2 = d$ (the proof works in all dimensions with $\eta$ depending on $d$, but we won't bother thinking through the geometry there). We will prove (\ref{largeShrinkage}) inductively with $\eta = (2^d+1+2^d\Delta)^{-1}$ where $\Delta = 6^d-4^d+3^d-2^d-1$ is the largest number of blocks that can touch a connected polymer of $2^d+1$ blocks.

    The induction will be on $\vert\bar{X}^{(1)}\vert$, starting with $\vert \bar{X}^{(1)}\vert = 2^d+1$. Because $\ell$ is large, the only way we can have $\vert X\vert = \vert\bar{X}^{(1)}\vert$ is if $X$ is either disconnected or small, so we are forced to conclude that in our case, $\vert X\vert \geqslant 2^d+2$. Hence
    \begin{equation}
        \bigg(1+\frac{1}{2^d+1}\bigg)\vert\bar{X}^{(1)}\vert\leqslant\vert X\vert,
    \end{equation}
    which witnesses the inductive base case. Now suppose $\vert\bar{X}^{(1)}\vert = n+1$. Remove from $\bar{X}$ a connected subset $Y$ comprising $2^d+1$ blocks. Then $\bar{X}-Y$ breaks up into at most $\Delta$ connected components (since every connected component certainly intersects the $1$-block thickening of $\bar{X}$), denoted $Z_1,\ldots,Z_\Delta$ and ordered by decreasing size (we pad with $\varnothing$ if necessary). Let $k$ be the index such that $Z_i\geqslant  2^d+1$ when $i\leqslant k$ and $Z_i\leqslant 2^d$ for $i > k$. Let $M = \sum_{i=1}^k\vert Z_i\vert$ and $m = \sum_{i=k+1}^\Delta\vert Z_i\vert$. Then
    \begin{equation}
        \frac{\vert X\vert}{\vert \bar{X}^{(1)}\vert}\geqslant\frac{2^d+2+(1+\eta)M+m}{2^d+1+M+m}\geqslant 1+\frac{1+\eta M}{2^d+1+M+2^d\Delta} = 1+\eta.
    \end{equation}
\end{proof}

Due to the way we set up the RG, disconnected polymers sometimes appear in the formulas. To control these, we also have the following proposition \cite{Falco1}.

\begin{lem}
    Let $X$ be any polymer. Then
    \begin{equation}
        \vert\bar{X}^{(k)}\vert\leqslant\max\!\bigg\lcurl\frac{1}{1+\eta/2}\vert X\vert, 8\vert\mathcal{C}(X)\vert\bigg\rcurl.\label{disconnectedGrowth}
    \end{equation}
\end{lem}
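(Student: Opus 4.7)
The strategy is to bound $|\bar{X}^{(k)}|$ component by component, invoking the preceding lemma on the non-small components and handling the small ones by a direct geometric count, and then close with a case analysis on which of the two terms in the stated maximum dominates. Write $X=\bigsqcup_i Y_i$ with $\mathcal{C}(X)=\{Y_i\}$, and split the index set into $S=\{i:|Y_i|\leqslant 4\}$ and $L=\{i:|Y_i|\geqslant 5\}$, with total masses $|X_S|,|X_L|$. For $i\in L$ the preceding lemma applies and gives $|\bar{Y_i}^{(k)}|\leqslant|Y_i|/(1+\eta)$. For $i\in S$, the polymer $Y_i$ has bounded diameter, hence meets a bounded number of coarse cells, and a direct count yields the uniform estimate $|\bar{Y_i}^{(k)}|\leqslant 8$. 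The union bound $|\bar{X}^{(k)}|\leqslant\sum_i|\bar{Y_i}^{(k)}|$ then delivers
\[
|\bar{X}^{(k)}|\;\leqslant\;8|S|+\frac{|X_L|}{1+\eta}.
\]

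To finish, I would distinguish two regimes according to which term of $\max\{|X|/(1+\eta/2),8|\mathcal{C}(X)|\}$ is larger. In the \emph{sparse} regime $|X|\leqslant 8(1+\eta/2)|\mathcal{C}(X)|$ the mass lives mostly in small components and both summands are absorbed into $8|\mathcal{C}(X)|$, using that each large component has $|Y_i|\geqslant 5$ so $|L|$ is controlled by $|X_L|$. In the \emph{dense} regime $|X|>8(1+\eta/2)|\mathcal{C}(X)|$ the sharper contraction $1/(1+\eta)$ on large components (as against the target $1/(1+\eta/2)$) provides a margin $\lambda|X_L|$ with $\lambda=1/(1+\eta/2)-1/(1+\eta)=\tfrac{\eta/2}{(1+\eta/2)(1+\eta)}$; the denseness hypothesis forces $|X_L|$ to be large compared to $|S|$, and this margin must absorb the $8|S|$ term, giving $|\bar{X}^{(k)}|\leqslant|X|/(1+\eta/2)$.

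\textbf{Main obstacle.} The delicate step is closing the case analysis with the stated constants. The margin $\lambda$ is of order $\eta=\tfrac{1}{101}$, so in an intermediate mixed regime (comparable numbers of small and non-small components, with the non-small ones of moderate size) the loose per-component estimate $|\bar{Y}^{(k)}|\leqslant|Y|/(1+\eta)$ might not suffice on its own. In that case I would refine the per-component bound using the geometric observation that any connected polymer traversing $k$ coarse cells requires at least $\sim 2k$ fine blocks (one additional block is forced at each cell interface by edge-connectedness), which upgrades the contraction from $1/(1+\eta)$ to essentially $1/2$ for large connected $Y$. Substituting this sharper estimate into the decomposition above should then make the two-case analysis close uniformly and yield the stated maximum bound.
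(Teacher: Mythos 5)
Your high-level plan (decompose $X$ into small and large components, contract the large ones via the preceding lemma, bound the small ones trivially, then case-split) is the same as the paper's. But two of the three moves in your execution are wrong, and the third (the proposed repair) is both unnecessary and false.

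First, the estimate $|\bar{Y_i}^{(k)}|\leqslant 8$ for small components is too loose in a way that breaks the numbers. The sharp bound is $|\bar{Y_i}^{(k)}|\leqslant|Y_i|\leqslant 4$: the fine grid is nested inside the coarse one, so each individual fine block scaled by $\ell^{-k}$ lies in exactly one coarse block, and subadditivity of closures over blocks does the rest. With your factor $8$ the union bound genuinely exceeds the lemma's target: take one small component of a single block and one non-small component of fifteen blocks, so $|X|=16$ and $|\mathcal{C}(X)|=2$. Then $8|S|+|X_L|/(1+\eta)\approx 22.85$ while $\max\{|X|/(1+\eta/2),\,8|\mathcal{C}(X)|\}=16$.

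Second, even with the corrected per-component bound, the case split on $|X|\lessgtr 8(1+\eta/2)|\mathcal{C}(X)|$ is not the right knife, because it compares an upper bound to an upper bound and leaks constants. The paper's case split is on the scaled quantities: writing $Y$ and $Z$ for the unions of the small and the non-small components, subadditivity $|\bar{X}^{(k)}|\leqslant|\bar{Y}^{(k)}|+|\bar{Z}^{(k)}|$ forces at least one of $|\bar{Y}^{(k)}|\geqslant\tfrac12|\bar{X}^{(k)}|$, $|\bar{Z}^{(k)}|\geqslant\tfrac12|\bar{X}^{(k)}|$. In the first case $|\bar{X}^{(k)}|\leqslant 2|\bar{Y}^{(k)}|\leqslant 2|Y|\leqslant 8|\mathcal{C}(X)|$; in the second,
\begin{equation*}
|X|=|Y|+|Z|\geqslant|\bar{Y}^{(k)}|+(1+\eta)|\bar{Z}^{(k)}|\geqslant|\bar{X}^{(k)}|+\eta|\bar{Z}^{(k)}|\geqslant\Big(1+\frac{\eta}{2}\Big)|\bar{X}^{(k)}|.
\end{equation*}
The step you are missing is that the case hypothesis provides a \emph{lower} bound on $|\bar{Z}^{(k)}|$, which is precisely what converts the excess $\eta|\bar{Z}^{(k)}|$ into the stated gain $\eta/2$.

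Third, the proposed refinement to a contraction of $1/2$ for non-small connected polymers is false at the small end: a connected polymer of five blocks sitting at a corner of the coarse grid satisfies $|\bar{X}^{(1)}|=4>5/2$. Nothing beyond the $(1+\eta)^{-1}$ of the preceding lemma is required once the case split is performed on the scaled closures as above.
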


\begin{proof}
Break $X$ up into $Y$ the small and $Z$ the large components. Since $\vert \bar{X}^{(j)}\vert \leqslant\vert\bar{Y}^{(j)}\vert + \vert \bar{Z}^{(j)}\vert$, we must have either $\frac{1}{2}\vert\bar{X}^{(j)}\vert \leqslant \vert\bar{Y}^{(j)}\vert$ or $\frac{1}{2}\vert\bar{X}^{(j)}\vert\leqslant\vert \bar{Z}^{(j)}\vert$. In the former case we simply get
\begin{equation}
    \vert \bar{X}^{(j)}\vert\leqslant 2\vert \bar{Y}^{(j)}\vert \leqslant 2\vert Y\vert \leqslant 8\vert\mathcal{C}(X)\vert,
\end{equation}
while in the latter we apply the previous lemma:
\begin{equation}
    \vert X\vert = \vert Y\vert + \vert Z\vert \geqslant \vert \bar{Y}^{(j)}\vert + (1+\eta)\vert \bar{Z}^{(j)}\vert\geqslant\vert \bar{X}^{(j)}\vert + \eta \vert \bar{Z}^{(j)}\vert\geqslant(1+\eta/2)\vert \bar{X}^{(j)}\vert.
\end{equation}
\end{proof}

\subsection{\texorpdfstring{$\mathcal{C}^r$}{Cr} spaces \texorpdfstring{\&}{and} regulators} \label{AppCrReg}

The basic function norms for $X\in\mathcal{P}_{j;s}$ are
\begin{align}
    \Vert\nabla^n\hspace{-1pt}f\Vert_{\mathrm{L}^p(X)} &= \big\Vert(\Vert\partial^\mu \!f\Vert_{\mathrm{L}^p(X)})_{\vert\mu\vert = n}\big\Vert_{\ell^p},\\
    \Vert\nabla^n\hspace{-1pt}f\Vert_{\mathrm{L}^p(\partial X)} &= \big\Vert(\Vert\partial^\mu\! f\Vert_{\mathrm{L}^p(\partial X)})_{\vert\mu\vert = n}\big\Vert_{\ell^p},\\
    \Vert\nabla^n\hspace{-1pt}f\Vert_{\mathcal{B}^q_p(X)} &= \big\Vert(\Vert\nabla^{n}\hspace{-1pt} f\Vert_{\mathrm{L}^p(B)})_{B\in\mathcal{B}_j(X)}\big\Vert_{\ell^q},\\
    \Vert\nabla^n\hspace{-1pt}f\Vert_{\mathcal{C}^r(X)} &= \max_{0\leqslant m\leqslant r}\Vert\nabla^{n+m}\hspace{-1pt}f\Vert_{\mathrm{L}^\infty(X)},\\
    \Vert f\Vert_{\wp} &= \Vert f\Vert_{\mathcal{B}^\wp_{\infty}(\mathbb{T}^2_j)}.
\end{align}
The last norm will only be used to control the coupling; note that when $j=0$ it reduces to the $\mathrm{L}^\wp$ norm. We can also see that $\Vert\!\cdot\!\Vert_{\mathcal{B}^p_p} = \Vert\!\cdot\!\Vert_{\mathrm{L}^p}$.

The notion of differentiable function on $X$ requires a bit of care since this is generically not a submanifold with corners of $\mathbb{T}^2$. However, it is the closure of a relatively compact open set and it is rather easy to convince oneself that it is a Whitney domain,\footnote{The only problematic points could be where two squares touch at only one vertex, but the angle at which they meet is $90^\circ\neq 0$, so the Whitney condition holds. This is also true of polymers in higher dimensions.} so the extension theorem applies (for instance see \cite{Whitney}, though this only treats the case of subsets in $\mathbb{R}^2$; the extension theorem for the torus can easily be obtained by periodization arguments). For instance, all reasonable definitions of $\mathcal{C}^r(X)$ coincide and have good properties: we may define this space as either those functions in $\mathcal{C}^r(\mathrm{Int}(X))$ whose $r$ derivatives all have continuous extensions to the boundary, or equivalently as those functions on $X$ which are restrictions of $\mathcal{C}^r$ functions on $\mathbb{T}^2$.

We will also need functionals of the field values. To control their growth in the field, we introduce regulators \cite{ParkCity,Falco1} depending on $\kappa > 0$ (for $\phi\in\mathcal{C}^2(X)$)
\begin{align}
    \kappa \log G_\kappa(X,\phi) &= \mathfrak{c}^{(1)}\big(\Vert\nabla\phi\Vert^2_{\mathrm{L}^2(X)} + \Vert\nabla^2\phi\Vert^2_{\mathcal{B}^{2}_\infty(X)}\big) + \mathfrak{c}^{(2)}\Vert\nabla\phi\Vert^2_{\mathrm{L}^2(\partial X)},\\
    \kappa \log g_\kappa(X,\phi) &= \mathfrak{c}^{(3)}\sum_{n=0}^2\Vert\nabla^n\phi\Vert^2_{\mathcal{B}^{2}_\infty(X)}.
\end{align}
$G_\kappa$ is what appears directly in norms, while $g_\kappa$ is an important intermediate quantity in computations.

The actual functionals are smooth functions $F$ on $\mathcal{C}^2(X)$, but for now we consider a general Banach space $\mathcal{X}$. Then if $\mathfrak{h} > 0$ and $x\in\mathcal{X}$, we put
\begin{align}
    \Vert F\Vert_{n;\langle x\rangle;\mathcal{X}} &= \sup_{y_1,\ldots,y_n\in\mathrm{Ball}(\mathcal{X})}\big\vert\langle\mathrm{D}^n\vert_x F,y_1,\ldots,y_n\rangle\big\vert,\\
    \Vert F\Vert_{\mathfrak{h};\langle x\rangle;\mathcal{X}} &= \sum_{n=0}^\infty\frac{\mathfrak{h}^n}{n!}\Vert F\Vert_{n;\langle x\rangle;\mathcal{X}}.
\end{align}
We also denote by $\mathcal{N}_\mathfrak{h}(\mathcal{X})$ the set of functionals for which $\Vert F\Vert_{\mathfrak{h};\langle x\rangle}<\infty$ for all $x$.

Finally, the norms satisfy some identities that we will use without further comment. If $T:\mathcal{X}\rightarrow\mathcal{Y}$ is bounded,
\begin{equation}
    \Vert F\circ T\Vert_{\mathfrak{h};\langle x\rangle;\mathcal{X}} \leqslant\Vert F\Vert_{\Vert T\Vert\mathfrak{h};\langle Tx\rangle;\mathcal Y}.
\end{equation}
For instance, $\mathrm{S}_{\lambda}:\mathcal{C}^r(X)\rightarrow\mathcal{C}^r(\lambda X)$ has norm $1$ if $\lambda \geqslant 1$. Furthermore, if $q_{1,2}:\mathcal{X}\rightarrow\mathcal{Y}_{1,2}$ are two quotients and $F_{1,2}\in\mathcal{N}_{\mathfrak{h}}(\mathcal{Y}_{1,2})$,
\begin{equation}
    \Vert (F_1\circ q_1)(F_2\circ q_2)\Vert_{\mathfrak{h};\langle\phi\rangle;\mathcal{X}}\leqslant \Vert F_1\Vert_{\Vert q_1\Vert\mathfrak{h};\langle q_1\phi\rangle;\mathcal{Y}_1}\Vert F_2\Vert_{\Vert q_2\Vert\mathfrak{h};\langle q_2\phi\rangle;\mathcal{Y}_2}.
\end{equation}
For instance, $\mathcal{C}^r(X\cup Y)$ quotients with norm $1$ to $\mathcal{C}^r(X)$ and $\mathcal{C}^r(Y)$, so we have a kind of product inequality for functionals on $\mathcal{C}^r$-spaces where the underlying sets can change.

\subsection{Polymer activities} \label{AppPolymerAct}

A polymer activity is a function $K$ on $\mathcal{P}_j-\lcurl\varnothing\rcurl$. The codomain can for instance be $\mathbb{R}$, but for our purposes it will be more interesting to consider cases where $K(X)$ is a smooth functional on a Banach space $\mathcal{X}_X$ depending on $X$. For instance, $\mathcal{X}_X$ will mostly equal $\mathcal{C}^2(X)$ in these notes. We make the following notational conventions: if $x\in\mathcal{X}_X$, $\mathfrak{h},A > 0$ and $w_X$ is a choice of positive function on each $\mathcal{X}_X$,
\begin{align}
    \Vert K\Vert_{n;\langle X,x\rangle;\mathcal{X}_X} &= \Vert K(X)\Vert_{n;\langle x\rangle;\mathcal{X}_X},\\
    \Vert K\Vert_{\mathfrak{h};\langle X,x\rangle;\mathcal{X}_X} &= \Vert K(X)\Vert_{\mathfrak{h};\langle x\rangle;\mathcal{X}_X},\\
    \Vert K\Vert_{\mathfrak{h},w;\langle X\rangle;\mathcal{X}_X} &= \sup_{x\in\mathcal{X}_X}w_X(x)^{-1}\Vert K\Vert_{\mathfrak{h};\langle X,x\rangle;\mathcal{X}_X},\label{hwnorm},\\
    \Vert K\Vert_{\mathfrak{h},w,A;\mathcal{X}_X} &= \sup_{X\in\mathcal{P}_j}A^{\vert X\vert}\Vert K\Vert_{\mathfrak{h},w;\langle X\rangle;\mathcal{X}_X}.\label{hwanorm}
\end{align}
The subscript $\mathcal{X}_X$ on the left-hand sides is retained only if we feel the need to emphasize what the underlying Banach spaces are, but we will always drop it when $\mathcal{X}_X = \mathcal{C}^2(X)$ which is most often the case. If $w$ depends on a further parameter (for us, $w_X = G_\kappa(X)$), we write that parameter in place of $w$ in (\ref{hwnorm}), (\ref{hwanorm}). If $\alpha > 0$, write $\mathbb{B}^{j,\alpha}_{\mathfrak{h},w,A}$ for the set of polymer activities which satisfy
\begin{equation}
    \Vert K\Vert_{\mathfrak{h},w,A;\mathcal{X}_X} \leqslant \Vert(\zeta_\bullet)_{0,j}\Vert^{\alpha}_{\wp} = \Big(L^{2j}e^{\frac{1}{2}\beta\sum_{k=j}^0\Gamma_{\hspace{-1pt}0;j}}\zeta_\bullet\Big)^{\!\alpha}.
\end{equation}

\subsubsection{Complexification of functionals}\label{AppComplexification}

In one proof, we will need a \enquote{complexification} map to get contractive estimates on charged terms. The whole development of this section is rather elementary, comprising mostly Taylor expansions and basic multilinear algebra.

For a Banach space $\mathcal{X}$, consider the strip $\mathcal{X}^{\mathbb{C},\mathfrak{h}} = \big\lcurl z\in\mathcal{X}^{\mathbb{C}}:\Vert\mathrm{Im}(z)\Vert_\mathcal{X} < \frac{1}{2}\mathfrak{h}\big\rcurl$. If $F \in\mathcal{N}_\mathfrak{h}(\mathcal{X})$, we define $^\mathbb{C\hspace{-1pt}}F$ for $w\in\mathcal{X}^{\mathbb{C},\mathfrak{h}}$ as follows: if $w = x + z$ with $x \in \mathcal{X}$ and $z\in\mathcal{X}^{\mathbb{C}}$ with $\Vert z\Vert_{\mathcal{X}^\mathbb{C}} = \max\lcurl \Vert\mathrm{Re}(z)\Vert_{\mathcal{X}},\Vert\mathrm{Im}(z)\Vert_{\mathcal{X}}\rcurl < \frac{1}{2}\mathfrak{h}$, put
\begin{equation}
    ^\mathbb{C\hspace{-1pt}}F(w) = \sum_{n=0}^\infty\frac{1}{n!}\langle\mathrm{D}^n\vert_x F,z,\ldots,z\rangle
\end{equation}
where $\mathrm{D}^n\vert_x F$ is of course extended to complex arguments by multilinearity. The series converges: if $z = z_0 + iz_1$ are the real and imaginary parts,
\begin{align}
    \big\vert\langle\mathrm{D}^n\vert_x F,z,\ldots,z\rangle\big\vert &\leqslant \sum_{\sigma_1,\ldots,\sigma_n\in\lcurl 0,1\rcurl^n}\big\vert\langle\mathrm{D}^n\vert_x F,z_{\sigma_1},\ldots,z_{\sigma_n}\rangle\big\vert\nonumber\\
    &\leqslant\sum_{\sigma_1,\ldots,\sigma_n\in\lcurl 0,1\rcurl^n}\Vert F\Vert_{n;\langle X\rangle}\Vert z_{\sigma_1}\Vert_{\mathcal{X}}\ldots\Vert z_{\sigma_n}\Vert_{\mathcal{X}}.\nonumber\\
    &\leqslant\mathfrak{h}^n\Vert F\Vert_{n;\langle X\rangle}.
\end{align}
We still need to check that this is independent of the decomposition of $w$, so suppose $w = x + z = x' + z'$. Then $\Vert x - x'\Vert_\mathcal{X} < \mathfrak{h}$ and by Taylor expansion,
\begin{equation}
    \langle\mathrm{D}^n\vert_x F,u_1,\ldots,u_n\rangle = \sum_{m=0}^\infty\frac{1}{m!}\langle\mathrm{D}^{n+m}\vert_{x'}F, x-x',\ldots,x-x',u_1,\ldots,u_n\rangle.\label{FDerivativeTaylor}
\end{equation}
Hence (the series converge absolutely)
\begin{align}
    ^\mathbb{C\hspace{-1pt}}F(x+z) &= \sum_{n,m=0}^\infty\frac{1}{n! m!}\langle \mathrm{D}^{n+m}\vert_{x'}F,x-x',\ldots,x-x',z,\ldots,z\rangle\nonumber\\
    &=\sum_{k=0}^\infty\frac{1}{k!}\sum_{n=0}^k\binom{k}{n}\langle \mathrm{D}^{k}\vert_{x'}F,x-x',\ldots,x-x',z,\ldots,z\rangle\nonumber\\
    &=\sum_{k=0}^\infty\frac{1}{k!}\langle \mathrm{D}^{k}\vert_{x'}F,x-x'+z,\ldots,x-x'+z\rangle\nonumber\\
    &= {^\mathbb{C\hspace{-1pt}}F}(x'+z').
\end{align}
This shows well-definedness.

Some further properties: if $w$ is already real, $^\mathbb{C\hspace{-1pt}}F(w) = F(w)$. $^\mathbb{C\hspace{-1pt}}F$ is also a smooth functional on $\mathcal{X}^{\mathbb{C},\mathfrak{h}}$: if $u_1,\ldots,u_n\in\mathcal{X}^\mathbb{C}$,
\begin{align}
    \frac{\partial^m}{\partial t_1\ldots\partial t_m}\bigg\vert_{t_1=\ldots=t_m = 0}&{^\mathbb{C\hspace{-1pt}}F}\bigg(w + \sum_{i=1}^m t_iu_i\bigg) =\nonumber\\ 
    =&\frac{\partial^m}{\partial t_1\ldots\partial t_m}\bigg\vert_{t_1=\ldots=t_m = 0}\sum_{n=0}^\infty\frac{1}{n!}\bigg\langle \mathrm{D}^n\vert_x F,\psi + \sum_{i=1}^mt_iu_i,\ldots,\psi+\sum_{i=1}^mt_iu_i\bigg\rangle\nonumber\\
    =&\sum_{n=0}^\infty\frac{1}{n!}\langle \mathrm{D}^{n+m}\vert_xF,\psi,\ldots,\psi,u_1,\ldots,u_m\rangle
\end{align}
as we could have guessed from (\ref{FDerivativeTaylor}). Furthermore, with $z$ pure imaginary we can derive
\begin{equation}
    \big\Vert{^\mathbb{C\hspace{-1pt}}F}\big\Vert_{m;\langle w\rangle} \leqslant 2^m\sum_{n=0}^\infty\frac{\Vert z\Vert^n_{\mathcal{X}}}{n!}\Vert F\Vert_{n+m;\langle x\rangle}
\end{equation}
and so
\begin{equation}
    \big\Vert{^\mathbb{C\hspace{-1pt}}F}\big\Vert_{\mathfrak{k};\langle w\rangle} \leqslant \Vert F\Vert_{2\mathfrak{k}+\Vert\mathrm{Im}(w)\Vert_\mathcal{X};\langle \mathrm{Re}(w)\rangle},
\end{equation}
which is finite as long as $2\mathfrak{k} + \Vert\mathrm{Im}(w)\Vert_\mathcal{X}\leqslant\mathfrak{h}$.

\section{Norm \texorpdfstring{\&}{and} regulator estimates}

\subsection{Algebraic inequalities} \label{AppAlgIneq}

The next proposition comprises a menagerie of norm inequalities of various flavors which serve as the technical basis of the paper. This is all standard \cite{ParkCity}; (\ref{n2absorb}) is probably the most interesting: it allows us to absorb multiplicative factors into norms with improved scaling compared to plain $\mathrm{L}^2(X)$ which ultimately enables us to prove (\ref{Gg2}), a strong estimate used to kill polynomials in the field that arise when Taylor expanding polymer activities as in (\ref{smallXthing}).

\begin{lem}
    Let $f,g\in\mathcal{C}^2(\mathbb{T}^2)$ and $X,Y\in\mathcal{P}_{j;s},B\in\mathcal{B}_{j;s}$, $\alpha > 0$. $\mathrm{diam}(X)$ is taken relative to paths entirely within $X$. The following inequalities hold:
    \begin{align}
        \Vert f\Vert_{\mathrm{L}^\infty(X)}^2 &\lesssim_1 \sum_{n=0}^2\Vert\nabla^2\hspace{-1pt}f\Vert^2_{\mathrm{L}^2(X)}\label{sobolevLemma},\\
        \Vert f\Vert_{\mathrm{L}^2(X)},\Vert f\Vert_{\mathrm{L}^2(X)}&\lesssim_1\Vert f\Vert_{\mathcal{B}^2_\infty(X)}\label{L2Bbound},\\
        Y\subset X\in\mathcal{P}_{\mathrm{c};j;s}:\Vert f\Vert^2_{\mathrm{L}^\infty(X)}&\leqslant\frac{2}{\vert Y\vert}\Vert f\Vert^2_{\mathrm{L}^2(Y)} + 2\,\mathrm{diam}(X)^2\Vert\nabla^2\hspace{-1pt}f\Vert_{\mathcal{B}^2_\infty(X)}^2\label{diamBound},\\
        X\in\mathcal{S}:\Vert\nabla f\Vert^2_{\mathcal{C}^1(X)}&\lesssim_1\Vert\nabla f\Vert^2_{\mathrm{L}^2(X)} + \Vert\nabla^2\hspace{-1pt}f\Vert^2_{\mathcal{B}^2_\infty(X)},\label{C1regAbsorb}\\
        \Vert f\Vert_{\mathrm{L}^2(\partial B)}^2&\lesssim_1 \Vert f\Vert^2_{\mathrm{L}^2(B)} + \Vert \nabla f\Vert^2_{\mathrm{L}^\infty(B)},\label{boundaryDerivativeEstimate}\\
        \Vert f\Vert^2_{\mathrm{L}^2(\partial X)}&\leqslant\Vert f\Vert^2_{\mathrm{L}^2(\partial\ell\bar{X}^{(1)})}+\mathcal{O}_1\big(\Vert f\Vert^2_{\mathrm{L}^2(\ell\bar{X}^{(1)}-X)} + \Vert\nabla f\Vert^2_{\mathcal{B}^2_\infty(\ell\bar{X}^{(1)}-X)}\big),\label{closureBoundaryEstimate}\\
        \Vert\nabla(f+g)\Vert^2_{\mathrm{L}^2(X)}&\lesssim \Vert\nabla f\Vert^2_{\mathrm{L}^2(X)} + \mathcal{O}_1\alpha\big(\Vert\nabla f\Vert^2_{\mathrm{L}^2(\partial X)}+\Vert\nabla^2\hspace{-1pt}f\Vert^2_{\mathrm{L}^2(X)}\big)\nonumber\\
        &\hspace{15pt}+\Vert\nabla g\Vert^2_{\mathrm{L}^2(X)} + \mathcal{O}_1\alpha^{-1}\big(\Vert g\Vert^2_{\mathrm{L}^2(\partial X)}+\Vert g\Vert^2_{\mathrm{L}^2(X)}\big)\label{alphaReweight}.
    \end{align}
    Furthermore, we have the scaling relations
    \begin{align}
        \Vert\nabla^n\mathrm{S}_{\ell}f\Vert_{\mathcal{B}^q_p(\ell X)} &\leqslant \ell^{\frac{2}{p\wedge q}-n}\Vert\nabla^n\hspace{-1pt}f\Vert_{\mathcal{B}^q_p(X)},\label{scalingB}\\
        \Vert\nabla^n\mathrm{S}_{\ell}f\Vert_{\mathrm{L}^p(\partial\ell X)} &= \ell^{\frac{1}{p}-n}\Vert\nabla^n\hspace{-1pt}f\Vert_{\mathrm{L}^p(\partial X)},\label{scalingL2B}\\
        \Vert\mathrm{S}_\ell f\Vert_{\wp}&\leqslant \ell^{2/\wp}\Vert f\Vert_{\wp},\label{scalingP}\\
        \Vert\mathrm{S}_{\ell^{-1}} f\Vert_{\wp}&\leqslant \ell^{2/\wp}\Vert f\Vert_{\wp}.\label{scalingP2}
    \end{align}
    Analogous estimates hold with $L$ in place of $\ell$.

    Finally, take $X = X_1$ small and fix a $\lambda > 1$. If $\ell > \mathcal{O}_\lambda$, there are disjoint translates $X_2,\ldots,X_{32\lceil\lambda\rceil} \subset\ell\bar{X}^{(1)}$ of $X_1$ with $\mathfrak{X} = \bigsqcup_i X_i$ such that
    \begin{equation}
        \lambda\Vert f\Vert_{\mathrm{L}^2(X)}^2\leqslant\Vert f\Vert^2_{\mathrm{L}^2(\mathfrak{X})} + \mathcal{O}_\lambda\Vert \nabla f\Vert^2_{\mathcal{B}^2_\infty(\ell\bar{X}^{(1)})}.\label{n2absorb}
    \end{equation}
\end{lem}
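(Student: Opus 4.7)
The plan is to handle the ten inequalities in three passes. The first seven are standard Sobolev-type manipulations on unit squares, needing only the fact (noted in Appendix \ref{AppCrReg}) that polymers are Whitney domains, so that uniform extension and Sobolev constants are available. The scaling relations are trivial changes of variables. The real content is in (\ref{n2absorb}), a packing-and-smoothing argument.

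For the Sobolev-type inequalities: (\ref{sobolevLemma}) reduces to the 2D embedding $H^2(B)\hookrightarrow L^\infty(B)$ on a single unit block, applied on the block where $|f|$ is maximal, then majorizing the block $L^2$ norms by the $L^2$ norm on $X$. (\ref{L2Bbound}) follows from $\Vert f\Vert_{\mathrm{L}^2(B)}\leqslant\Vert f\Vert_{\mathrm{L}^\infty(B)}$ block-by-block (unit-measure blocks). (\ref{C1regAbsorb}) uses Sobolev $H^2\to\mathcal{C}^1$ on the small polymer $X$, whose shape lies in a finite universal list so the constant is absolute. (\ref{boundaryDerivativeEstimate}) is the standard trace estimate: slice $B$ by lines normal to a face and express $f$ on the face via the fundamental theorem of calculus. (\ref{closureBoundaryEstimate}) applies the same device across the corridor $\ell\bar{X}^{(1)}\setminus X$ to transfer boundary data from $\partial(\ell\bar{X}^{(1)})$ to $\partial X$, with error controlled by bulk norms on the corridor. (\ref{diamBound}) is a Poincaré-type bound: with $x^*\in X$ the $L^\infty$-maximizer, expand $f(x^*)^2\leqslant 2f(y)^2+2(f(x^*)-f(y))^2$, average over $y\in Y$, and estimate the jump along a path of length $\leqslant\mathrm{diam}(X)$. (\ref{alphaReweight}) is the weighted Young inequality $|a+b|^2\leqslant(1+\alpha)a^2+(1+\alpha^{-1})b^2$ combined with (\ref{boundaryDerivativeEstimate}) to absorb boundary terms into bulk ones. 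The scaling relations (\ref{scalingB})--(\ref{scalingP2}) are immediate from $y=\ell x$.

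For the main estimate (\ref{n2absorb}), set $N = 32\lceil\lambda\rceil$. Since $X$ is small, $N$ pairwise-disjoint translates of $X$ fit inside a square of side $\mathcal{O}_\lambda$, which in turn fits inside $\ell\bar{X}^{(1)}$ provided $\ell > \mathcal{O}_\lambda$. Label them $X_i = X + v_i$ with $|v_i|\leqslant\mathcal{O}_\lambda$. Applying the elementary bound $(a+b)^2\geqslant\tfrac{1}{2}a^2 - b^2$ pointwise with $a = f(x)$, $a+b = f(x+v_i)$, then integrating over $X$ and summing over $i$, yields
\begin{equation}
\Vert f\Vert^2_{\mathrm{L}^2(\mathfrak{X})}\geqslant\tfrac{N}{2}\Vert f\Vert^2_{\mathrm{L}^2(X)}-\sum_{i=1}^N\int_X(f(x+v_i)-f(x))^2\,\dd x.
\end{equation}
Writing $f(x+v_i)-f(x) = \int_0^1 v_i\cdot\nabla f(x+tv_i)\,\dd t$, Cauchy--Schwarz followed by the change of variables $y = x + tv_i$ (noting $X + tv_i\subset\ell\bar{X}^{(1)}$ for $t\in[0,1]$) bounds the $i$-th integrand by $|v_i|^2\Vert\nabla f\Vert^2_{\mathrm{L}^2(\ell\bar{X}^{(1)})}$, which is $\leqslant|v_i|^2\Vert\nabla f\Vert^2_{\mathcal{B}^2_\infty(\ell\bar{X}^{(1)})}$ by (\ref{L2Bbound}). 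Since $\sum_i|v_i|^2=\mathcal{O}_\lambda(1)$ and $N/2\geqslant 16\lambda\geqslant\lambda$, rearranging produces (\ref{n2absorb}). The only real obstacle is orchestrating the packing of the $v_i$ so that they are simultaneously disjoint, short, and contained in $\ell\bar{X}^{(1)}$ -- which is exactly what the hypothesis $\ell > \mathcal{O}_\lambda$ accommodates.
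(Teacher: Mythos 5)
Your treatment of (\ref{n2absorb}) takes a legitimate alternative route: rather than applying (\ref{diamBound}) to compare the $\mathrm{L}^\infty$ norm on the original small set to the $\mathrm{L}^2$ norms on the other translates, as the paper does, you compare $\mathrm{L}^2$ norms directly via a Newton--Leibniz translate argument. Both are valid and about equally short.

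However, your proposal for (\ref{alphaReweight}) has a genuine gap. Weighted Young gives $\Vert\nabla(f+g)\Vert^2_{\mathrm{L}^2(X)}\leqslant(1+\alpha)\Vert\nabla f\Vert^2_{\mathrm{L}^2(X)}+(1+\alpha^{-1})\Vert\nabla g\Vert^2_{\mathrm{L}^2(X)}$, which has the wrong derivative and boundary bookkeeping: the $\alpha$-error on the $f$-side appears as the bulk first-derivative norm (not $\Vert\nabla f\Vert^2_{\mathrm{L}^2(\partial X)}+\Vert\nabla^2 f\Vert^2_{\mathrm{L}^2(X)}$), and the $g$-error carries $\nabla g$ rather than $g$. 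Neither mismatch can be repaired by (\ref{boundaryDerivativeEstimate}), which only trades boundary for bulk. The precise structure of (\ref{alphaReweight}) is essential to how it is used: in (\ref{Gg}) one must have the main term $\mathfrak{c}^{(1)}\Vert\nabla\phi\Vert^2_{\mathrm{L}^2}$ return with coefficient exactly $\mathfrak{c}^{(1)}$ after rescaling, not $(1+\alpha)\mathfrak{c}^{(1)}$, or the regulator constants would grow along the RG flow. The step you are missing is an integration by parts on the cross term, $\langle\partial^\mu f,\partial^\mu g\rangle_{\mathrm{L}^2(X)}=-\langle\partial^{2\mu}f,g\rangle_{\mathrm{L}^2(X)}+\langle\partial^\mu f,g\rangle_{\mathrm{L}^2(\partial X)}$, which shifts a derivative off $g$ and onto $f$; only then are Cauchy--Schwarz and AM--GM with the $\alpha$-weights applied. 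A smaller issue of the same flavour: the embedding $H^2\to\mathcal{C}^1$ fails in two dimensions, so your justification for (\ref{C1regAbsorb}) as stated does not go through either. The correct argument there is again Poincar\'e-type, exploiting the $\mathrm{L}^\infty$ control of $\nabla^2 f$ that $\Vert\nabla^2 f\Vert_{\mathcal{B}^2_\infty(X)}$ provides and the bounded diameter of small polymers, exactly as in (\ref{diamBound}).
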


\begin{proof}
    The scaling relations are all straightforward. Now consider a block $B\in\mathcal{B}_j$. We may assume it lies in the first quadrant and touches the origin $(0,0)$. Define $h(x) = x_1x_2f(x)$ so that
    \begin{equation}
        h(x) = \int_0^{x_2}\int_{0}^{x_1}\partial_1\partial_2h(y_1,y_2)\dd y_1 \dd y_2
    \end{equation}
    and
    \begin{equation}
        x_1x_2\vert f(x)\vert\leqslant\Vert\partial_1\partial_2h\Vert_{\mathrm{L}^1(B)}\leqslant\sum_{n=0}^2\Vert\nabla^2\hspace{-1pt}f\Vert_{\mathrm{L}^1(B)}.
    \end{equation}
    if $x\in[1/2,1]^2$ we divide by the factor $x_1x_2$, otherwise we appeal to symmetry. So by Hölder,\begin{equation}
        \Vert f\Vert^2_{\mathrm{L}^\infty(B)} \leqslant 3\sum_{n=0}^2\Vert\nabla^n\hspace{-1pt}f\Vert^2_{\mathrm{L}^2(B)},
    \end{equation}
    proving (\ref{sobolevLemma}) after summing over $B\in\mathcal{B}_j(X)$. For (\ref{L2Bbound}), we simply compute
    \begin{equation}
        \Vert f\Vert^2_{\mathrm{L}^2(\partial X)}\leqslant\sum_{B\in\mathcal{B}_j(X)}\Vert f\Vert_{\mathrm{L}^2(\partial B)} \leqslant 4\sum_{B\in\mathcal{B}_j(X)}\Vert{f}\Vert_{\mathrm{L}^\infty(B)}^2 = 4\Vert f\Vert^2_{\mathcal{B}^2_\infty(X)}
    \end{equation}
    and analogously for $\Vert f\Vert_{\mathrm{L}^2(X)}$. Next, it is obvious that for $x,y\in X$, $X$ connected,
    \begin{equation}
        \vert f(x)\vert \leqslant \vert f(y)\vert + \mathrm{diam}(X)\Vert\nabla f\Vert_{\mathrm{L}^\infty(X)}.
    \end{equation}
    If $y\in Y$, by squaring and taking the average over $y$, we get (\ref{diamBound}). Alternatively, fixing a point $x$ in the boundary of a block $B$ and a point $y\in B$,
    \begin{equation}
        \vert f(x)\vert^2\lesssim_1\vert f(y)\vert^2+\Vert\nabla f\Vert^2_{\mathrm{L}^\infty(B)}
    \end{equation}
    and again averaging, first over $y$ and then over $x\in\partial B$, we obtain (\ref{boundaryDerivativeEstimate}). For (\ref{C1regAbsorb}), by (\ref{diamBound}),
    \begin{equation}
        \Vert\nabla f\Vert^2_{\mathcal{C}^1(X)}\leqslant \Vert\nabla f\Vert^2_{\mathrm{L}^\infty(X)} + \Vert\nabla^2\hspace{-1pt}f\Vert^2_{\mathrm{L}^\infty(X)}\lesssim_1\Vert\nabla f\Vert_{\mathrm{L}^2(X)}^2+\Vert\nabla^2\hspace{-1pt}f\Vert^2_{\mathcal{B}^2_\infty(X)}.
    \end{equation}
    Now observe that (\ref{closureBoundaryEstimate}) follows from (\ref{boundaryDerivativeEstimate}),
    \begin{align}
        \Vert f\Vert^2_{\mathrm{L}^2(\partial X)} &\leqslant \Vert f\Vert^2_{\mathrm{L}^2(\partial\ell\bar{X}^{(1)})}+\Vert f\Vert^2_{\mathrm{L}^2(\partial(\ell\bar{X}^{(1)}-X))}\nonumber\\
        &\leqslant\Vert f\Vert^2_{\mathrm{L}^2(\partial\ell\bar{X}^{(1)})} + \mathcal{O}_1\big(\Vert f\Vert^2_{\mathrm{L}^2(\ell\bar{X}^{(1)}-X)}+\Vert\nabla f\Vert^2_{\mathcal{B}^2_\infty(\ell\bar{X}^{(1)}-X)}\big).
    \end{align}
    For (\ref{alphaReweight}), we apply integration by parts to find
    \begin{align}
        \Vert \nabla(f+g)\Vert^2_{\mathrm{L}^2(X)} &= \Vert\nabla f\Vert^2_{\mathrm{L}^2(X)}+\Vert\nabla g\Vert^2_{\mathrm{L}^2(X)}+2\sum_{\vert\mu\vert = 1}\langle\partial^\mu\! f,\partial^\mu\! g\rangle_{\mathrm{L}^2(X)}\nonumber\\
        &\leqslant\Vert\nabla f\Vert^2_{\mathrm{L}^2(X)}+\Vert\nabla g\Vert^2_{\mathrm{L}^2(X)}+2\sum_{\vert\mu\vert = 1}\!\big(\vert\langle \alpha\partial^{2\mu}\!f,\alpha^{-1}g\rangle_{\mathrm{L}^2(X)}\vert + \vert\langle \alpha\partial^{\mu}\!f,\alpha^{-1}g\rangle_{\mathrm{L}^2(\partial X)}\vert\big).
    \end{align}
    By Cauchy-Schwarz and the arithmetic-geometric inequality,
    \begin{align}
        \Vert\nabla(f+g)\Vert^2_{\mathrm{L}^2(X)} &\leqslant \Vert\nabla f\Vert^2_{\mathrm{L}^2(X)} + \Vert\nabla g\Vert^2_{\mathrm{L}^2(X)}\nonumber\\
        &\hspace{15pt}+4\alpha\big(\Vert\nabla^2\hspace{-1pt} f\Vert^2_{\mathrm{L}^2(X)}+\Vert\nabla f\Vert^2_{\mathrm{L}^2(\partial X)}\big) + 8\alpha^{-1}\big(\Vert g\Vert^2_{\mathrm{L}^2(X)}+\Vert g\Vert^2_{\mathrm{L}^2(\partial X)}\big).
    \end{align}
    For (\ref{n2absorb}), we first have to find the polymer translates. This is easy: as long as $\ell\geqslant\mathcal{O}_\lambda$, we can just pick some sidelength-$\ell$ square of $\ell\bar{X}^{(1)}$ which $X$ intersects and make copies of $X$ inside that square. By keeping the copies close together, the radius of the resulting pattern will be of order $\lambda$ (even $\sqrt{\lambda}$ since we are in 2d). Then by (\ref{diamBound}),
    \begin{align}
        \lambda\Vert f\Vert^2_{\mathrm{L}^2(X)} &\leqslant \Vert f\Vert^2_{\mathrm{L}^2(X_1)} + \frac{\lambda-1}{32\lceil\lambda\rceil-1}\sum_{k=2}^{32\lceil\lambda\rceil}\Vert f\Vert^2_{\mathrm{L}^2(X_1)}\nonumber\\
        &\leqslant\Vert f\Vert^2_{\mathrm{L}^2(X_1)} + \frac{\vert X_1\vert^2}{32}\sum_{k=2}^{32\lceil\lambda\rceil}\Vert f\Vert^2_{\mathrm{L}^\infty(X_1)}\nonumber\\
        &\leqslant\Vert f\Vert^2_{\mathrm{L}^2(X_1)} + \sum_{k=2}^{32\lceil\lambda\rceil}\Vert f\Vert^2_{\mathrm{L}^2(X_k)}\nonumber + \mathcal{O}_\lambda\Vert f\Vert^2_{\mathrm{L}^\infty(\ell\bar{X}^{(1)})}\nonumber\\
        &\leqslant\Vert f\Vert^2_{\mathrm{L}^2(\mathfrak{X})} + \mathcal{O}_\lambda\Vert\nabla f\Vert^2_{\mathcal{B}^2_\infty(\ell\bar{X}^{(1)})},
    \end{align}
    concluding the proof.
\end{proof}

As stated, we will use (\ref{n2absorb}) in the next proposition to remove an otherwise problematic constant. In that case $\lambda = 2$ so all the $\mathcal{O}_\lambda\!$'s become $\mathcal{O}_1\!$'s.
The following estimates on regulators are still of a purely algebraic nature, but are necessary to have good control of the fluctuation step.

\begin{lem}
    Because $\mathfrak{c}^{(2)}\lesssim_1 1\lesssim_1\mathfrak{c}^{(1)}\wedge\mathfrak{c}^{(3)}\wedge\ell$, we have the estimates
    \begin{align}
        X\subset Y: G_\kappa(X,\phi)&\leqslant G_\kappa(Y,\phi),\label{strongAbsorb}\\
        X\in\mathcal{P}_{\mathrm{c};j;s}: G_\kappa(X,\mathrm{S}_\ell\phi+\psi)&\leqslant G_\kappa(\bar{X}^{(1)},\phi)g_\kappa(X,\psi),\label{Gg}\\
        X\in\mathcal{S}_{j;s}:G_\kappa(X,\mathrm{S}_\ell\phi+\psi)&\leqslant G_\kappa(\bar{X}^{(1)},\phi)^{1/2}g_\kappa(X,\psi).\label{Gg2}
    \end{align}
\end{lem}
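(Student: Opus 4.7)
My plan is to verify each inequality by direct computation from the definitions of $G_\kappa$ and $g_\kappa$ at the level of their logarithms, invoking the preceding algebraic lemma as the main toolkit. The proofs amount to (i) monotonicity in the bulk, (ii) changes of variable governed by $\mathrm{S}_\ell$, and (iii) boundary surgery via (\ref{boundaryDerivativeEstimate}), all with bookkeeping on the constant hierarchy $\mathfrak{c}^{(2)}\lesssim_1 1\lesssim_1 \mathfrak{c}^{(1)}\wedge\mathfrak{c}^{(3)}$.

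For (\ref{strongAbsorb}) the bulk terms $\Vert\nabla\phi\Vert^2_{\mathrm{L}^2(X)}$ and $\Vert\nabla^2\phi\Vert^2_{\mathcal{B}^2_\infty(X)}$ are manifestly monotone in $X$, so only the boundary term needs care. I split $\partial X$ into the portion coinciding with $\partial Y$ (controlled by $\Vert\nabla\phi\Vert^2_{\mathrm{L}^2(\partial Y)}$) and the portion interior to $Y$, the latter covered by $\partial B$ for $B\in\mathcal{B}(Y-X)$ adjacent to $X$; (\ref{boundaryDerivativeEstimate}) bounds the interior contribution by $\Vert\nabla\phi\Vert^2_{\mathrm{L}^2(Y-X)}+\Vert\nabla^2\phi\Vert^2_{\mathcal{B}^2_\infty(Y-X)}$. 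Multiplied by $\mathfrak{c}^{(2)}$, this $\mathcal{O}_1\mathfrak{c}^{(2)}$ error absorbs into the $\mathfrak{c}^{(1)}$ slack on the right-hand side coming from $Y-X$, provided $\mathfrak{c}^{(1)}$ dominates $\mathfrak{c}^{(2)}$ by the needed margin, which we are free to arrange.

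For (\ref{Gg}) I would first invoke (\ref{alphaReweight}) with $\alpha$ of order unity to cleanly separate $\mathrm{S}_\ell\phi$ and $\psi$ contributions. The $\mathrm{S}_\ell\phi$ bulk norms transfer to $\phi$-norms on $\ell^{-1}X\subset\bar{X}^{(1)}$ by a change of variables, yielding $\Vert\nabla\mathrm{S}_\ell\phi\Vert^2_{\mathrm{L}^2(X)}\leqslant\Vert\nabla\phi\Vert^2_{\mathrm{L}^2(\bar{X}^{(1)})}$, and (\ref{scalingB}) gives $\Vert\nabla^2\mathrm{S}_\ell\phi\Vert^2_{\mathcal{B}^2_\infty(X)}\leqslant\ell^{-2}\Vert\nabla^2\phi\Vert^2_{\mathcal{B}^2_\infty(\bar{X}^{(1)})}$ with a small scaling factor to spare. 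The scaled boundary becomes $\ell^{-1}\Vert\nabla\phi\Vert^2_{\mathrm{L}^2(\partial\ell^{-1}X)}$, and since $\ell^{-1}X$ is itself a polymer at scale $s+1$ inside $\bar{X}^{(1)}$, I apply the same boundary-splitting trick as in (\ref{strongAbsorb}) (morally, the argument behind (\ref{closureBoundaryEstimate})) to obtain $\leqslant\ell^{-1}\Vert\nabla\phi\Vert^2_{\mathrm{L}^2(\partial\bar{X}^{(1)})}+\mathcal{O}_1\ell^{-1}\bigl(\Vert\nabla\phi\Vert^2_{\mathrm{L}^2(\bar{X}^{(1)})}+\Vert\nabla^2\phi\Vert^2_{\mathcal{B}^2_\infty(\bar{X}^{(1)})}\bigr)$. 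All $\psi$-terms and residual cross-errors are then absorbed into $\mathfrak{c}^{(3)}\sum_{n=0}^2\Vert\nabla^n\psi\Vert^2_{\mathcal{B}^2_\infty(X)}$ using (\ref{L2Bbound}) and (\ref{boundaryDerivativeEstimate}).

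For (\ref{Gg2}) the key extra ingredient is (\ref{n2absorb}) with $\lambda=2$: when $X$ is small, it provides disjoint translates $X_2,\ldots,X_{32}\subset\ell\bar{X}^{(1)}$ of $X$ with $\mathfrak{X}=\bigsqcup_i X_i$ obeying $2\Vert f\Vert^2_{\mathrm{L}^2(X)}\leqslant\Vert f\Vert^2_{\mathrm{L}^2(\mathfrak{X})}+\mathcal{O}_1\Vert\nabla f\Vert^2_{\mathcal{B}^2_\infty(\ell\bar{X}^{(1)})}$. Taking $f=\nabla\mathrm{S}_\ell\phi$ and using disjointness of the translates under the change of variables (to keep the constant exactly $1$ on the resulting $\mathrm{L}^2(\ell^{-1}\mathfrak{X})\subset\mathrm{L}^2(\bar{X}^{(1)})$), I obtain $\Vert\nabla\mathrm{S}_\ell\phi\Vert^2_{\mathrm{L}^2(X)}\leqslant\tfrac{1}{2}\Vert\nabla\phi\Vert^2_{\mathrm{L}^2(\bar{X}^{(1)})}+\mathcal{O}_1\ell^{-2}\Vert\nabla^2\phi\Vert^2_{\mathcal{B}^2_\infty(\bar{X}^{(1)})}$. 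This $\tfrac{1}{2}$ is precisely what $G_\kappa^{1/2}$ requires on the leading bulk term; the $\nabla^2$ and boundary contributions already carry $\ell^{-1}$ or $\ell^{-2}$ from scaling and so comfortably fit under the $\tfrac{1}{2}$-weighted right-hand side. The main technical obstacle I anticipate is the boundary bookkeeping in (\ref{Gg}), since the inner boundary of $\ell^{-1}X$ need not lie on $\partial\bar{X}^{(1)}$ and the detour through (\ref{boundaryDerivativeEstimate}) applied blockwise to $\bar{X}^{(1)}-\ell^{-1}X$ is what makes things close; beyond that, it is careful tracking of constants using the hierarchy together with smallness of $\ell^{-1}$.
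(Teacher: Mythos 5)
Your approach to (\ref{strongAbsorb}) and (\ref{Gg2}) matches the paper's: boundary surgery via (\ref{boundaryDerivativeEstimate}) for the first, (\ref{n2absorb}) with $\lambda=2$ and disjointness of the translates under rescaling for the last. But your treatment of (\ref{Gg}) has a genuine gap, and it traces to a scaling slip in the boundary bookkeeping that you yourself flag as the "main technical obstacle."

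You write that, after rescaling, $\ell^{-1}\Vert\nabla\phi\Vert^2_{\mathrm{L}^2(\partial\ell^{-1}X)}\leqslant\ell^{-1}\Vert\nabla\phi\Vert^2_{\mathrm{L}^2(\partial\bar{X}^{(1)})}+\mathcal{O}_1\ell^{-1}\bigl(\Vert\nabla\phi\Vert^2_{\mathrm{L}^2(\bar{X}^{(1)})}+\Vert\nabla^2\phi\Vert^2_{\mathcal{B}^2_\infty(\bar{X}^{(1)})}\bigr)$, i.e.\ you keep the factor $\ell^{-1}$ on the bulk error as well. That factor is spurious. The set $\ell^{-1}X$ inside $\bar{X}^{(1)}$ is a union of squares of side $\ell^{-1}$, not unit blocks, so the blockwise trace inequality (\ref{boundaryDerivativeEstimate}) applied to those small squares carries a constant of order $\ell$ in front of the $\mathrm{L}^2$ bulk term (by the elementary rescaling $\Vert f\Vert^2_{\mathrm{L}^2(\partial B_{\ell^{-1}})}\lesssim_1 \ell\Vert f\Vert^2_{\mathrm{L}^2(B_{\ell^{-1}})}+\ell^{-1}\Vert\nabla f\Vert^2_{\mathrm{L}^\infty(B_{\ell^{-1}})}$). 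This $\ell$ exactly cancels your $\ell^{-1}$. The paper's ordering avoids the issue: it applies (\ref{closureBoundaryEstimate}) at the original scale, where the annulus $\ell\bar{X}^{(1)}-X$ consists of unit blocks, and then rescales; the $\mathrm{L}^2$ bulk norm of a gradient is scale-invariant in two dimensions, so the resulting bulk coefficient is $\mathcal{O}_1(\alpha\mathfrak{c}^{(1)}+\mathfrak{c}^{(2)})$ with no smallness from $\ell$.

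This matters because with the correct coefficient, absorbing the boundary-to-bulk error into $\mathfrak{c}^{(1)}\Vert\nabla\phi\Vert^2_{\mathrm{L}^2(\bar{X}^{(1)})}$ requires $\mathcal{O}_1(\alpha\mathfrak{c}^{(1)}+\mathfrak{c}^{(2)})\leqslant\mathfrak{c}^{(1)}$, and taking "$\alpha$ of order unity" as you propose then demands $\mathcal{O}_1\mathfrak{c}^{(1)}\leqslant\mathfrak{c}^{(1)}$, which cannot be arranged. The paper instead takes $\alpha$ small enough that $\alpha\mathfrak{c}^{(1)}\leqslant\mathfrak{c}^{(2)}$, so the constraint reduces to the hierarchy $\mathcal{O}_1\mathfrak{c}^{(2)}\leqslant\mathfrak{c}^{(1)}$. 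Your large $\alpha$ only seemed to close because the phantom $\ell^{-1}$ was absorbing the excess. Once you correct the trace-inequality scaling, you will find you are forced to choose $\alpha\sim\mathfrak{c}^{(2)}/\mathfrak{c}^{(1)}$, after which the constraints on $\ell$ and $\mathfrak{c}^{(3)}$ fall out as in the paper; and the same corrected bookkeeping is what one needs again in (\ref{Gg2}) to keep the non-leading terms under the $\tfrac{1}{2}$-weighted budget.
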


\begin{proof}
    We can take $\kappa = 1$. For (\ref{strongAbsorb}), by induction take $Y = X\cup B$ with $B$ a single square. Since
    \begin{equation}
        \Vert\nabla\phi\Vert^2_{\mathrm{L}^2(\partial X)}\leqslant\Vert\nabla\phi\Vert^2_{\mathrm{L}^2(\partial Y)} + \Vert\nabla\phi\Vert^2_{\mathrm{L}^2(\partial B)},
    \end{equation}
    by (\ref{boundaryDerivativeEstimate}),
    \begin{equation}
        \Vert\nabla\phi\Vert^2_{\mathrm{L}^2(\partial X)}\leqslant\Vert\nabla\phi\Vert^2_{\mathrm{L}^2(\partial Y)} + \mathcal{O}_1\big(\Vert\nabla\phi\Vert^2_{\mathrm{L}^2(B)}+\Vert\nabla^2\phi\Vert^2_{\mathrm{L}^\infty(B)}\big),
    \end{equation}
    so by (\ref{diamBound}),
    \begin{align}
        \log G_1(X,\phi) &\leqslant\mathfrak{c}^{(1)}\big(\Vert\nabla\phi\Vert^2_{\mathrm{L}^2(X)}+\Vert\nabla^2\phi\Vert^2_{\mathcal{B}^2_\infty(X)}\big) + \mathfrak{c}^{(2)}\Vert\nabla\phi\Vert^2_{\mathrm{L}^2(\partial Y)} + \mathcal{O}_1\mathfrak{c}^{(2)}\big(\Vert\nabla\phi\Vert^2_{\mathrm{L}^2(B)}+\Vert\nabla^2\phi\Vert^2_{\mathrm{L}^\infty(B)}\big)\nonumber\\
        &\leqslant\max\!\big\lcurl\mathfrak{c}^{(1)},\mathcal{O}_1\mathfrak{c}^{(2)}\big\rcurl\big(\Vert\nabla\phi\Vert^2_{\mathrm{L}^2(Y)}+\Vert\nabla^2\phi\Vert^2_{\mathcal{B}^2_\infty(Y)}\big) + \mathfrak{c}^{(2)}\Vert\nabla\phi\Vert^2_{\mathrm{L}^2(\partial Y)}\nonumber\\
        &\leqslant\log G_1(Y,\phi)
    \end{align}
    where we have had to require $\mathfrak{c}^{(1)}\gtrsim_1\mathfrak{c}^{(2)}$. Moving on, by (\ref{L2Bbound},\ref{alphaReweight}) and the triangle inequality,
    \begin{align}
        \log G_1(X,\mathrm{S}_\ell\phi+\psi) \leqslant\, &\mathfrak{c}^{(1)}\Vert\nabla\mathrm{S}_\ell\phi\Vert^2_{\mathrm{L}^2(X)} + \mathcal{O}_1\alpha\mathfrak{c}^{(1)}\big(\Vert\nabla\mathrm{S}_\ell\phi\Vert^2_{\mathrm{L}^2(\partial X)} + \Vert\nabla^2\mathrm{S}_\ell\phi\Vert^2_{\mathrm{L}^2(X)}\big)+\nonumber\\
        &\mathfrak{c}^{(1)}\Vert\nabla\psi\Vert^2_{\mathrm{L}^2(X)} + \mathcal{O}_1\alpha^{-1}\mathfrak{c}^{(1)}\big(\Vert\psi\Vert^2_{\mathrm{L}^2(\partial X)} + \Vert\psi\Vert^2_{\mathrm{L}^2(X)}\big)+\nonumber\\
        & 2\mathfrak{c}^{(1)}\Vert\nabla^2\mathrm{S}_\ell\phi\Vert^2_{\mathcal{B}^{2}_\infty(X)}+2\mathfrak{c}^{(1)}\Vert\nabla^2\psi\Vert^2_{\mathcal{B}^{2}_\infty(X)}+2\mathfrak{c}^{(2)}\Vert\nabla\mathrm{S}_\ell\phi\Vert^2_{\mathrm{L}^2(\partial X)} + 2\mathfrak{c}^{(2)}\Vert\nabla\psi\Vert^2_{\mathrm{L}^2(\partial X)}\nonumber\\
        \leqslant\,&\mathfrak{c}^{(1)}\Vert\nabla\mathrm{S}_\ell\phi\Vert^2_{\mathrm{L}^2(X)} + (\mathcal{O}_1\alpha+2)\mathfrak{c}^{(1)}\Vert\nabla^2\mathrm{S}_\ell\phi\Vert^2_{\mathcal{B}^2_\infty(X)} + \mathcal{O}_1(\alpha\mathfrak{c}^{(1)}+\mathfrak{c}^{(2)})\Vert\nabla\mathrm{S}_\ell\phi\Vert^2_{\mathrm{L}^2(\partial X)}+\nonumber\\
        &\mathcal{O}_1\max\!\big\lcurl \alpha^{-1}\mathfrak{c}^{(1)},\mathfrak{c}^{(1)}+\mathfrak{c}^{(2)}\big\rcurl\sum_{k=0}^2\Vert\nabla^k\psi\Vert^2_{\mathcal{B}^2_\infty(X)}.\label{Gg1}
    \end{align}
    We estimate the third term by (\ref{closureBoundaryEstimate}):
    \begin{equation}
        \Vert\nabla\mathrm{S}_\ell\phi\Vert^2_{\mathrm{L}^2(\partial X)}\leqslant \Vert\nabla\mathrm{S}_\ell\phi\Vert^2_{\mathrm{L}^2(\partial\ell\bar{X}^{(1)})} + \mathcal{O}_1\big(\Vert\nabla\mathrm{S}_\ell\phi\Vert^2_{\mathrm{L}^2(\ell\bar{X}^{(1)}-X)}+\Vert\nabla^2\mathrm{S}_\ell\phi\Vert^2_{\mathcal{B}^2_\infty(\ell\bar{X}^{(1)}-1)}\big).
    \end{equation}
    Plugging this in and rescaling,
    \begin{align}
        \log G_1(X,\mathrm{S}_\ell\phi+\psi) \leqslant\,&\max\!\big\lcurl\mathfrak{c}^{(1)},\mathcal{O}_1(\alpha\mathfrak{c}^{(1)}+\mathfrak{c}^{(2)})\big\rcurl \Vert\nabla\phi\Vert^2_{\mathrm{L}^2(\bar{X}^{(1)})} +\nonumber\\
        &\ell^{-2}\max\!\big\lcurl(\mathcal{O}_1\alpha+2)\mathfrak{c}^{(1)},\mathcal{O}_1(\alpha\mathfrak{c}^{(1)}+\mathfrak{c}^{(2)})\big\rcurl\Vert\nabla^2\phi\Vert^2_{\mathcal{B}^2_\infty(\bar{X}^{(1)})}+\nonumber\\
        &\ell^{-1}\mathcal{O}_1(\alpha\mathfrak{c}^{(1)}+\mathfrak{c}^{(2)})\Vert\nabla\phi\Vert^2_{\mathrm{L}^2(\partial\bar{X}^{(1)})}+\nonumber\\
        &\mathcal{O}_1\max\!\big\lcurl\alpha^{-1}\mathfrak{c}^{(1)},\mathfrak{c}^{(1)}+\mathfrak{c}^{(2)}\big\rcurl\sum_{k=0}^2\Vert\nabla^k\psi\Vert^2_{\mathcal{B}^2_\infty(X)}.
    \end{align}
    To restore the old constants $\mathfrak{c}^{(1)},\mathfrak{c}^{(2)}$, we pick $\alpha$ such that $\alpha\mathfrak{c}^{(1)}\leqslant\mathfrak{c}^{(2)}$; then $\mathcal{O}_1(\alpha\mathfrak{c}^{(1)}+\mathfrak{c}^{(2)})\leqslant\mathcal{O}_1\mathfrak{c}^{(2)}\leqslant\mathfrak{c}^{(1)}$. Pick $\ell$ so that $\mathcal{O}_1\ell^{-1}(\alpha\mathfrak{c}^{(1)}+\mathfrak{c}^{(2)})\leqslant\mathfrak{c}^{(2)}$ and $\ell^{-2}(\mathcal{O}_1\alpha+2)\leqslant 1$. Finally, take $\mathfrak{c}^{(3)}\geqslant\mathcal{O}_1\max\!\big\lcurl\alpha^{-1}\mathfrak{c}^{(1)},\mathfrak{c}^{(1)}+\mathfrak{c}^{(2)}\big\rcurl$. Then (\ref{Gg}) follows.

    Now if $X$ is small and the conditions on constants as above, take (\ref{Gg1}) and combine it with (\ref{closureBoundaryEstimate},\ref{n2absorb}):
    \begin{align}
        2\log G_1(X,\mathrm{S}_\ell\phi+\psi) \leqslant\,&2\mathfrak{c}^{(1)}\Vert\nabla\mathrm{S}_\ell\phi\Vert^2_{\mathrm{L}^2(X)}+\mathcal{O}_1(\alpha+1)\mathfrak{c}^{(1)}\Vert\nabla^2\mathrm{S}_\ell\phi\Vert^2_{\mathcal{B}^2_\infty(X)}+\nonumber\\
        &\mathcal{O}_1(\alpha\mathfrak{c}^{(1)}+\mathfrak{c}^{(2)})\Vert\nabla\mathrm{S}_\ell\phi\Vert^2_{\mathrm{L}^2(\partial X)}+\mathcal{O}_1\max\!\big\lcurl\alpha^{-1}\mathfrak{c}^{(1)},\mathfrak{c}^{(1)}+\mathfrak{c}^{(2)}\big\rcurl\sum_{k=0}^2\Vert\nabla^k\psi\Vert^2_{\mathcal{B}^2_\infty(X)}\nonumber\\
        \leqslant\,&\max\!\big\lcurl\mathfrak{c}^{(1)},\mathcal{O}_1(\alpha\mathfrak{c}^{(1)}+\mathfrak{c}^{(2)})\big\rcurl\Vert\nabla^2\phi\Vert^2_{\mathrm{L}^2(\bar{X}^{(1)})}+\nonumber\\
        &\ell^{-2}\mathcal{O}_1\max\!\big\lcurl(\alpha+1)\mathfrak{c}^{(1)},\alpha\mathfrak{c}^{(1)}+\mathfrak{c}^{(2)}\big\rcurl\Vert\nabla^2\phi\Vert^2_{\mathcal{B}^2_\infty(\bar{X}^{(1)})}+\\
        &\ell^{-1}\mathcal{O}_1(\alpha\mathfrak{c}^{(1)}+\mathfrak{c}^{(2)})\Vert\nabla\phi\Vert^2_{\mathrm{L}^2(\partial\bar{X}^{(1)})}+\mathcal{O}_1\max\!\big\lcurl\alpha^{-1}\mathfrak{c}^{(1)},\mathfrak{c}^{(1)}+\mathfrak{c}^{(2)}\big\rcurl\sum_{k=0}^2\Vert\nabla^k\psi\Vert^2_{\mathcal{B}^2_\infty(X)}.\nonumber
    \end{align}
    To restore the constants, proceed as before.
\end{proof}

\subsection{Fluctuation step bounds} \label{AppFluctStepBounds}

We combine these estimates from the previous section with properties of the scale decomposition. These are a mix of \cite{Falco1} and \cite{Dimock2000}; because it considers the critical phase, the former also does an analysis of the second order, which is much more complicated than the first-order analysis we have to perform.

\begin{lem}
    Take $X$ connected and let $\Psi_{\hspace{-1pt}\varepsilon;j},\Psi_{\hspace{-1pt}\varepsilon;j;s}$ be the various fluctuation fields. First take $j \leqslant -1$.
    \begin{equation}
        \mathbb{E}\big[g_{\kappa_*/8}(X,\Psi_{\varepsilon;j;s})\big] \leqslant 2^{\log_L(\ell)\vert X\vert}.\label{Eg}
    \end{equation}
    Furthermore, let $\kappa \geqslant \kappa_*/2$. Then
    \begin{align}
        \mathbb{E}\Big[\sup_{t\in[0,1]}G_\kappa(X,\mathrm{S}_L\phi+(1-t)\Psi_{\hspace{-1pt}\varepsilon;j}+\Psi_{\hspace{-1pt}\dot\varepsilon;j})\Big]&\leqslant 2^{\vert X\vert}G_\kappa(\bar{X}^L,\phi),\label{supermpr}\\
        X\in\mathcal{S}:\big(1+\Vert\nabla\phi\Vert_{\mathcal{C}^1(\bar{X}^L)}\big)^{\!2\,}\mathbb{E}\Big[\sup_{t\in[0,1]}G_\kappa(X,t\mathrm{S}_L\phi+\Psi_{\hspace{-1pt}\varepsilon;j})\Big]&\lesssim_1 \kappa G_\kappa(\bar{X}^L,\phi).\label{smallXthing}
    \end{align}
    In case $j = 0$, if $\kappa \geqslant \kappa_*/8$,
    \begin{equation}
        \mathbb{E}\Big[\sup_{t\in[0,1]}G_\kappa(X,\phi+(1-t)\Psi_{\hspace{-1pt}\varepsilon;0}+\Psi_{\hspace{-1pt}\dot\varepsilon;0})\Big]\leqslant 2G_{\kappa/2}(\mathbb{T}^2,\phi).
    \end{equation}
\end{lem}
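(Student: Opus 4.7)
For (\ref{Eg}), I compute $\mathbb{E}[g_{\kappa_*/8}(X,\Psi_{\varepsilon;j;s})]$ directly as a Gaussian moment generating function of a quadratic form in the field. The per-block per-derivative variance is controlled by $\partial^{2n}\Gamma_{\varepsilon;j;s}(0)$, which from $\int_{\ell^s}^{\ell^{s+1}}\!dt/t=\log\ell$ is $\lesssim_1\log\ell$ for $n=0$ and $\lesssim_1 1$ for $n\geqslant 1$ (the sharper $\log\ell$ rather than the $\log L$ listed in Appendix~\ref{AppFinerScale} for the zero-th derivative). Since $\kappa_*=\mathfrak{c}^{(4)}\log L$ with $\mathfrak{c}^{(4)}\gtrsim\mathfrak{c}^{(3)}$, the coefficient $8\mathfrak{c}^{(3)}/\kappa_*$ in $\log g_{\kappa_*/8}$ is $\lesssim 1/\log L$, so the total exponent per block is $\lesssim\log\ell/\log L=1/\mathfrak{m}$. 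Exponentiating over blocks gives $\mathbb{E}[g_{\kappa_*/8}(X,\Psi_{\varepsilon;j;s})]\leqslant 2^{\vert X\vert/\mathfrak{m}}=2^{\log_L(\ell)\vert X\vert}$.

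For (\ref{supermpr}), the map $t\mapsto\log G_\kappa(X,\mathrm{S}_L\phi+(1-t)\Psi_{\varepsilon;j}+\Psi_{\dot\varepsilon;j})$ is a non-negative convex quadratic in $t$, so its supremum on $[0,1]$ is attained at an endpoint and bounded by the sum of the two endpoint values. For each endpoint I invoke the fine-scale decomposition of Appendix~\ref{AppFinerScale}, writing $\Psi_{\varepsilon;j}$ as a telescope over $s=-\mathfrak{m},\ldots,-1$ of independent rescaled pieces, and iterate (\ref{Gg}) once per fine scale: each step turns $G_\kappa(Y,\mathrm{S}_\ell\phi'+\psi_s)$ into $G_\kappa(\bar{Y}^{(1)},\phi')\,g_\kappa(Y,\psi_s)$. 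After $\mathfrak{m}$ iterations the $\phi$-part is $G_\kappa(\bar{X}^L,\phi)$ and the $\mathfrak{m}$ independent fluctuation factors integrate by (\ref{Eg}) to a product bounded by $2^{\vert X\vert}$.

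For (\ref{smallXthing}) the $t$-supremum is handled the same way. At $t=1$ I exploit smallness of $X$ by using (\ref{Gg2}) at the first fine-scale peel (gaining an exponent $\tfrac{1}{2}$ on the $G_\kappa$-factor) and (\ref{Gg}) for the remaining $\mathfrak{m}-1$ peels, producing overall $G_\kappa(\bar{X}^L,\phi)^{1/2}$. At $t=0$ the fluctuation integral is a $\phi$-independent constant of order $2^{\vert X\vert}\lesssim_1 1$. In both cases the polynomial prefactor is then absorbed: by (\ref{C1regAbsorb}), $\Vert\nabla\phi\Vert_{\mathcal{C}^1(\bar{X}^L)}^2$ is bounded by the Sobolev quantities appearing in $\log G_\kappa(\bar{X}^L,\phi)$, and the elementary inequality $(1+y)^2 e^{-y^2/(2C)}\lesssim_1 C$ with $C$ of order $\kappa$ gives $(1+\Vert\nabla\phi\Vert_{\mathcal{C}^1(\bar{X}^L)})^2\lesssim_1 \kappa\,G_\kappa(\bar{X}^L,\phi)^{1/2}$. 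Multiplying against the $G_\kappa^{1/2}$ from $t=1$ (or noting $G_\kappa^{1/2}\leqslant G_\kappa$ at $t=0$) yields the target $\lesssim_1\kappa G_\kappa(\bar{X}^L,\phi)$.

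For the $j=0$ case there is no $\mathrm{S}_L$-rescaling and $\Gamma_{\varepsilon;0}$ has uniformly bounded derivatives. Removing the $t$-supremum by convexity, I split $\Vert\nabla(\phi+\Psi)\Vert^2\leqslant 2\Vert\nabla\phi\Vert^2+2\Vert\nabla\Psi\Vert^2$ (and analogously for the other regulator ingredients), which converts the $\phi$-part of $\log G_\kappa$ into $\log G_{\kappa/2}$ and produces a quadratic fluctuation term with coefficients of order $\kappa^{-1}$; since $\kappa\gtrsim\log L$ dominates the $\mathcal{O}_1$ covariance of $\Psi_{\varepsilon;0},\Psi_{\dot\varepsilon;0}$, the Gaussian MGF of the fluctuation is $\leqslant 2$, and (\ref{strongAbsorb}) extends the $\phi$-regulator from $X$ to all of $\mathbb{T}^2$. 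The main obstacle is (\ref{smallXthing}): arranging the partial use of (\ref{Gg2}) to leave exactly a $G_\kappa^{1/2}$-factor for the polynomial absorption, and verifying that smallness of $X$ is preserved through the first fine-scale peel so that (\ref{Gg2}) is legitimately applicable.
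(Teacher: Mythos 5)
Your proof follows essentially the same approach as the paper: the Gaussian moment generating function computed via the $\log\ell$ trace (not $\log L$) of the fine-scale covariance $\Gamma_{\varepsilon;j;s}$ for (\ref{Eg}), the $\mathfrak{m}$-step telescope via iterated (\ref{Gg}) and a single use of (\ref{Gg2}) for (\ref{supermpr}) and (\ref{smallXthing}), and the polynomial absorption (\ref{polyEstimate}) via (\ref{C1regAbsorb}). The only real deviations are cosmetic: you reduce $\sup_t$ to the two endpoints already at the outer $G_\kappa$ (which costs a harmless extra factor of $2$ relative to the stated $2^{|X|}$) whereas the paper carries $\sup_t$ through the fine-scale factorization and applies log-convexity plus Cauchy–Schwarz on the individual $g_\kappa$'s, and you apply (\ref{Gg2}) at the first peel rather than the last (both are legitimate; after the first peel one simply continues with $G_{2\kappa}$ in (\ref{Gg}), and smallness of $X$ is all that is needed).
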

We note that (\ref{supermpr}) immediately implies the weaker
\begin{equation}
    \mathbb{E}\big[G_\kappa(X,\mathrm{S}_L\phi+\Psi_{\hspace{-1pt}\varepsilon;j})\big]\leqslant 2^{\vert X\vert}G_\kappa(\bar{X}^L,\phi).\label{supermpr0}
\end{equation}

\begin{proof}

For (\ref{Eg}), first take $\vert\mu\vert\leqslant4$ and estimate
\begin{equation}
    \mathbb{E}\big[\!\exp(\mathcal{O}_1\kappa_*^{-1}\mathfrak{c}^{(3)}\Vert\partial^\mu\Psi_{\hspace{-1pt}\varepsilon;j;s}\Vert^2_{\mathrm{L}^2(X)})\big] \leqslant \exp(\mathcal{O}_1\kappa_*^{-1}\mathfrak{c}^{(3)}\mathrm{Tr}\,\mathrm{Cov}(\partial^\mu\Psi_{\hspace{-1pt}\varepsilon;j;s})).
\end{equation}
This holds as long as the $\mathrm{L}^2$ operator norm of $\mathcal{O}_1\kappa_*^{-1}\mathfrak{c}^{(3)}\mathrm{Cov}(\partial^\mu\Psi_{\hspace{-1pt}\varepsilon;j;s})$ is bounded by $1/2$, that is, we have to take $\kappa_*$ such that $ \mathfrak{c}^{(3)}\Vert\mathrm{Cov}(\partial^\mu\Psi_{\hspace{-1pt}\varepsilon;j;s})\Vert_{\mathcal{B}(\mathrm{L}^2(X))}\lesssim_1 \kappa_*$. Since
\begin{equation}
    \mathrm{Cov}(\partial^\mu\Psi_{\hspace{-1pt}\varepsilon;j;s})(f)(x) = \int_{\mathbb{T}_{j;s}^2}\mathbb{I}_{x,y\in X}(-\partial^{2\mu}\Gamma_{\hspace{-1pt}\varepsilon;j;s})(x-y)f(y)\dd y,
\end{equation}
we apply Young's inequality and the finite-range property to find
\begin{equation}
        \Vert\mathrm{Cov}(\partial^\mu\Psi_{\hspace{-1pt}\varepsilon;j;s})\Vert_{\mathcal{B}(\mathrm{L}^2(X))}\leqslant\Vert\partial^{2\mu}\Gamma_{\hspace{-1pt}\varepsilon;j;s}\Vert_{\mathrm{L}^1(\mathbb{T}^2)}\lesssim_1 \ell^2\Vert\partial^{2\mu}\Gamma_{\hspace{-1pt}\varepsilon;j;s}\Vert_{\mathrm{L}^\infty(\mathbb{T}^2)}\lesssim_1 \mathfrak{m}\log(\ell) = \log(L).
\end{equation}
Hence we need $\kappa_* \gtrsim_1 \mathfrak{c}^{(3)}\log(L)$ which we do have since we took $\mathfrak{c}^{(4)}\gtrsim_1\mathfrak{c}^{(3)}$.

Now, because $\mathrm{Tr}\,\mathrm{Cov}(\partial^\mu\Psi_{\hspace{-1pt}\varepsilon;j;s}) = \vert X\vert(-\partial^{2\mu}\Gamma_{\hspace{-1pt}\varepsilon;j;s})(0)$,
\begin{equation}
    \mathbb{E}\big[\!\exp(\mathcal{O}_1\kappa_*^{-1}\mathfrak{c}^{(3)}\Vert\partial^\mu\Psi_{\hspace{-1pt}\varepsilon;j;s}\Vert^2_{\mathrm{L}^2(X)}\big]\leqslant\exp(\mathcal{O}_1\kappa_*^{-1}\mathfrak{c}^{(3)}\log(\ell)\vert X\vert).
\end{equation}
By (\ref{sobolevLemma}),
\begin{equation}
    \log g_{\kappa_*/8}(X,\Psi_{\hspace{-1pt}\varepsilon;j;s})\lesssim_1\kappa_*^{-1}\mathfrak{c}^{(3)}\sum_{n=0}^4\sum_{B\in\mathcal{B}_j(X)}\Vert\nabla^n\Psi_{\hspace{-1pt}\varepsilon;j;s}\Vert^2_{\mathrm{L}^2(B)}\lesssim_1\kappa_*^{-1}\mathfrak{c}^{(3)}\sum_{\vert\mu\vert\leqslant4}\Vert\partial^\mu\Psi_{\hspace{-1pt}\varepsilon;j;s}\Vert^2_{\mathrm{L}^2(X)}.
\end{equation}
There are $31$ terms in this sum, so by Hölder and $\mathfrak{c}^{(4)}\gtrsim_1\mathfrak{c}^{(3)}$, we get (\ref{Eg}):
\begin{equation}
    \mathbb{E}\big[g_{\kappa_*/8}(X,\Psi_{\hspace{-1pt}\varepsilon;j;s})\big]\leqslant\prod_{\vert\mu\vert\leqslant 4}\mathbb{E}\big[\!\exp(\mathcal{O}_1\kappa_{*}^{-1}\mathfrak{c}^{(3)}\Vert\partial^\mu\Psi_{\hspace{-1pt}\varepsilon;j;s}\Vert^2_{\mathrm{L}^2(X)})\big]^{\hspace{-1pt}1/31}\leqslant 2^{\log_L(\ell)\vert X\vert}.
\end{equation}
For (\ref{supermpr}) we argue inductively:
\begin{align}
    \mathbb{E}\Big[\sup_{t\in[0,1]}G_\kappa(X,\mathrm{S}_L\phi,(1-t)\Psi_{\hspace{-1pt}\varepsilon;j}+\Psi_{\hspace{-1pt}\dot\varepsilon;j})\Big]&= \mathbb{E}\bigg[\sup_{t\in[0,1]}G_\kappa\bigg(X,\mathrm{S}_{L}\phi+\sum_{s=0}^{\mathfrak{m}-1}\mathrm{S}_{\ell^s}\big((1-t)\Psi_{\hspace{-1pt}\varepsilon;j;s}+t\Psi_{\hspace{-1pt}\dot\varepsilon;j;s}\big)\bigg)\bigg]\nonumber\\
    &\leqslant G_\kappa(\bar{X}^L,\phi)\prod_{s=0}^{\mathfrak{m}-1}\mathbb{E}\Big[\sup_{t\in[0,1]}g_\kappa(\bar{X}^{(s)},(1-t)\Psi_{\hspace{-1pt}\varepsilon;j;s}+t\Psi_{\hspace{-1pt}\dot\varepsilon;j;s})\Big]\nonumber\\
    &\leqslant G_\kappa(\bar{X}^L,\phi)\prod_{s=0}^{\mathfrak{m}-1}\mathbb{E}\big[g_{\kappa_*/8}(\bar{X}^{(s)},\Psi_{\hspace{-1pt}\varepsilon;j;s})\big]^{1/2}\mathbb{E}\big[g_{\kappa_*/8}(\bar{X}^{(s)},\Psi_{\hspace{-1pt}\dot\varepsilon;j;s})\big]^{1/2}\nonumber\\
    &\leqslant 2^{\log_L(\ell)\sum_{s=0}^{\mathfrak{m}-1}\vert\bar{X}^{(s)}\vert}G_\kappa(\bar{X}^L,\phi)\nonumber\\
    &\leqslant 2^{\log_L(\ell)\mathfrak{m}\vert X\vert}G_\kappa(\bar{X}^L,\phi)\nonumber\\
    &=2^{\vert X\vert}G_\kappa(\bar{X}^L,\phi).
\end{align}

For (\ref{smallXthing}), first note that ($\kappa \geqslant 1$)
\begin{equation}
    (1+x)^{a}\lesssim_a \kappa^{a/2}e^{\kappa^{-1}x^2/2},
\end{equation}
so (taking $\mathfrak{c}^{(1)}\gtrsim_1 1$ since we wish to apply (\ref{C1regAbsorb}))
\begin{equation}
    \big(1+\Vert\nabla\phi\Vert_{\mathcal{C}^1(\bar{X}^L)}\big)^{\!2\,}\lesssim_1\kappa G_\kappa(\bar{X}^L,\phi)^{1/2}.\label{polyEstimate}
\end{equation}
Proceeding inductively as before but applying (\ref{Gg2}) at the last step,
\begin{align}
    \mathbb{E}\Big[\sup_{t\in[0,1]}G_\kappa(X,t\mathrm{S}_L\phi+\Psi_{\hspace{-1pt}\varepsilon;j})\Big] &\leqslant \prod_{s=0}^{\mathfrak{m}-1}\mathbb{E}\big[g_{\kappa_*/2}(\bar{X}^{(s)},\Psi_{\hspace{-1pt}\varepsilon;j;s})\big]\sup_{t\in[0,1]}G_\kappa(\bar{X}^L,t\phi)^{1/2}\nonumber\\
    &\lesssim_1 G_\kappa(\bar{X}^L,t\phi)^{1/2}.
\end{align}
By (\ref{polyEstimate}), the proof is finished, except the $j = 0$ case. Here similar but simpler reasoning works (since $\kappa$ is of order $\log(L)$ and the last covariance is of order $\mathcal{O}_1$, there are no issues).
\end{proof}

Having established the basic estimates, we can obtain the crucial scaling properties of the polymer activities. The first proposition is very general and so very suboptimal; we have to use more model-specific ideas to beat the scaling factors coming up here.

\begin{lem}
    Let $X\in\mathcal{P}_{\mathrm{c};j}$, $\kappa\geqslant\kappa_*/2$. Then
    \begin{equation}
    \big\Vert\mathbb{E}\big[K_j(\cdot,\mathrm{S}_L(\cdot)+\Psi_{\hspace{-1pt}\varepsilon;j})\big]\big\Vert_{\mathfrak{h},\kappa;\langle X\rangle;\mathcal{C}^2(\bar{X}^L)} \leqslant 2^{\vert X\vert}\Vert K_j\Vert_{\mathfrak{h},\kappa;\langle X\rangle}.
    \end{equation}
\end{lem}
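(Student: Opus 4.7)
The plan is to first move the expectation outside the $\mathfrak{h},\kappa$-norm, then reduce the $\mathcal{C}^2(\bar{X}^L)$-norm to the $\mathcal{C}^2(X)$-norm using the functorial properties of the $\Vert\cdot\Vert_\mathfrak{h}$-norm under composition with bounded linear maps, and finally invoke the bound (\ref{supermpr0}) on $\mathbb{E}[G_\kappa(X,\mathrm{S}_L\phi+\Psi_{\hspace{-1pt}\varepsilon;j})]$.

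More concretely, I would first pull the expectation out: for any smooth functional $F$ on $\mathcal{C}^2(\bar{X}^L)$ depending integrably on a random parameter, one has $\Vert\mathbb{E}[F]\Vert_{\mathfrak{h};\langle\phi\rangle;\mathcal{C}^2(\bar{X}^L)} \leqslant \mathbb{E}[\Vert F\Vert_{\mathfrak{h};\langle\phi\rangle;\mathcal{C}^2(\bar{X}^L)}]$, as can be seen by differentiating under the expectation and applying absolute values term by term. Applied here with $F(\phi)=K_j(X,\mathrm{S}_L\phi+\Psi_{\hspace{-1pt}\varepsilon;j})$,
\begin{equation}
\big\Vert\mathbb{E}[K_j(X,\mathrm{S}_L(\cdot)+\Psi_{\hspace{-1pt}\varepsilon;j})]\big\Vert_{\mathfrak{h};\langle\phi\rangle;\mathcal{C}^2(\bar{X}^L)} \leqslant \mathbb{E}\big[\Vert K_j(X,\mathrm{S}_L\phi+\Psi_{\hspace{-1pt}\varepsilon;j})\Vert_{\mathfrak{h};\langle\phi\rangle;\mathcal{C}^2(\bar{X}^L)}\big].
\end{equation}

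Next, using the composition-of-norms inequalities recorded at the end of Appendix \ref{AppCrReg}, and observing that the affine map $\phi\mapsto\mathrm{S}_L\phi+\Psi_{\hspace{-1pt}\varepsilon;j}$ defines a continuous map $\mathcal{C}^2(\bar{X}^L)\rightarrow\mathcal{C}^2(L\bar{X}^L)$ of norm $\leqslant 1$ (since $\mathrm{S}_L$ has norm $1$ as a map $\mathcal{C}^2(\bar{X}^L)\rightarrow\mathcal{C}^2(L\bar{X}^L)$ for $L\geqslant 1$), and that the restriction $\mathcal{C}^2(L\bar{X}^L)\rightarrow\mathcal{C}^2(X)$ has norm $1$ (since $X\subset L\bar{X}^L$), I get
\begin{equation}
\Vert K_j(X,\mathrm{S}_L\phi+\Psi_{\hspace{-1pt}\varepsilon;j})\Vert_{\mathfrak{h};\langle\phi\rangle;\mathcal{C}^2(\bar{X}^L)} \leqslant \Vert K_j(X,\cdot)\Vert_{\mathfrak{h};\langle\mathrm{S}_L\phi+\Psi_{\hspace{-1pt}\varepsilon;j}\rangle;\mathcal{C}^2(X)}.
\end{equation}
By the definition of the weighted polymer-activity norm this last quantity is at most $G_\kappa(X,\mathrm{S}_L\phi+\Psi_{\hspace{-1pt}\varepsilon;j})\Vert K_j\Vert_{\mathfrak{h},\kappa;\langle X\rangle}$.

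Combining the two displays and applying (\ref{supermpr0}),
\begin{equation}
\big\Vert\mathbb{E}[K_j(X,\mathrm{S}_L(\cdot)+\Psi_{\hspace{-1pt}\varepsilon;j})]\big\Vert_{\mathfrak{h};\langle\phi\rangle;\mathcal{C}^2(\bar{X}^L)}\leqslant\mathbb{E}\big[G_\kappa(X,\mathrm{S}_L\phi+\Psi_{\hspace{-1pt}\varepsilon;j})\big]\Vert K_j\Vert_{\mathfrak{h},\kappa;\langle X\rangle}\leqslant 2^{\vert X\vert}G_\kappa(\bar{X}^L,\phi)\Vert K_j\Vert_{\mathfrak{h},\kappa;\langle X\rangle}.
\end{equation}
Dividing by $G_\kappa(\bar{X}^L,\phi)$ and taking the supremum over $\phi\in\mathcal{C}^2(\bar{X}^L)$ yields the claim. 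The only substantive ingredients are the functoriality of $\Vert\cdot\Vert_\mathfrak{h}$ under the affine change of variables (which is what converts the base space from $\mathcal{C}^2(\bar{X}^L)$ to $\mathcal{C}^2(X)$ where the polymer-activity norm is natively defined) and the regulator bound (\ref{supermpr0}); there is no real obstacle, but the bookkeeping with the Banach spaces $\mathcal{C}^2(X)$ versus $\mathcal{C}^2(\bar{X}^L)$ is the only point where one has to take a bit of care.
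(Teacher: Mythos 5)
Your proposal is correct and takes essentially the same route as the paper: differentiate/estimate under the expectation, use the contractive (norm-$\leqslant 1$) map $\mathrm{S}_L$ together with restriction to pass from $\mathcal{C}^2(\bar{X}^L)$ to $\mathcal{C}^2(X)$, bound by the regulator $G_\kappa$, and finish with (\ref{supermpr0}). The paper just writes out the $n$-th derivative bound directly rather than invoking the composition identities by name.
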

\begin{proof}
    We just unravel the definitions.
    \begin{align}
        \big\vert\mathrm{D}^n\vert_\phi\mathbb{E}\big[K_j(X,\mathrm{S}_L(\cdot)+\Psi_{\hspace{-1pt}\varepsilon;j})\big]\big\vert &= \Bigg\vert\frac{\partial^n}{\partial t_1\ldots\partial t_n}\bigg\vert_{t_1=\ldots=t_n=0}\mathbb{E}\bigg[K_j\bigg(X,\mathrm{S}_L\phi+\Psi_{\hspace{-1pt}\varepsilon;j}+\sum_{i=1}^n t_i\mathrm{S}_Lf_i\bigg)\bigg]\Bigg\vert\nonumber\\
        &\leqslant \mathbb{E}\big[\big\vert\langle\mathrm{D}^n\vert_{\mathrm{S}_L\phi+\Psi_{\hspace{-1pt}\varepsilon;j}}K_j(X),\mathrm{S}_Lf_1,\ldots,\mathrm{S}_Lf_n\rangle\big\vert\big]\nonumber\\
        &\leqslant\mathbb{E}\big[\Vert K\Vert_{n;\langle X,\mathrm{S}_L\phi+\Psi_{\hspace{-1pt}\varepsilon;j}\rangle}\big]\Vert f_1\Vert_{\mathcal{C}^2(\bar{X}^L)}\ldots \Vert f_n\Vert_{\mathcal{C}^2(\bar{X}^L)}.
    \end{align}
So $\big\Vert\mathbb{E}\big[K_j(\cdot,\mathrm{S}_L(\cdot)+\Psi_{\hspace{-1pt}\varepsilon;j})\big]\big\Vert_{\mathfrak{h};\langle X,\phi\rangle;\mathcal{C}^2(\bar{X}^L)}\leqslant\mathbb{E}\big[\Vert K\Vert_{\mathfrak{h};\langle X,\mathrm{S}_L\phi+\Psi_{\hspace{-1pt}\varepsilon;j}\rangle}\big]$. By introducing $G_\kappa$ and using (\ref{supermpr0}),
\begin{equation}
    \big\Vert\mathbb{E}\big[K_j(\cdot,\mathrm{S}_L(\cdot)+\Psi_{\hspace{-1pt}\varepsilon;j})\big]\big\Vert_{\mathfrak{h};\langle X,\phi\rangle;\mathcal{C}^2(\bar{X}^L)}\leqslant \mathbb{E}\big[G_\kappa(X,\mathrm{S}_L\phi+\Psi_{\hspace{-1pt}\varepsilon;j})\big]\Vert K\Vert_{\mathfrak{h},\kappa;\langle X\rangle}\leqslant 2^{\vert X\vert}G_\kappa(\bar{X}^L,\phi)\Vert K\Vert_{\mathfrak{h},\kappa;\langle X\rangle}.
\end{equation}
\end{proof}

The stronger scaling, as we've said, requires model-specific ideas. We will control small sets by using a strong contraction property for charged terms and by normalizing neutral terms by extracting the vacuum energy contribution. In the following propositions, $\mathrm{Rem}$ denotes the Taylor remainder.

\begin{lem}
    Let $X\in\mathcal{P}_{\mathrm{c};j}$.
    \begin{equation}
        \Vert\mathrm{Rem}_m K_j\Vert_{\mathfrak{h};\langle X,\phi\rangle}\leqslant\big(1+\mathfrak{h}^{-1}\Vert\phi\Vert_{\mathcal{C}^2(X)}\big)^{\! m+1\,}\sup_{t\in[0,1]}\sum_{n=m+1}^\infty\frac{\mathfrak{h}^n}{n!}\Vert K_j\Vert_{n;\langle X,t\phi\rangle}.\label{Rem0}
    \end{equation}
    Furthermore if $X\in\mathcal{S}$ and $x_0\in X$, denote $\delta\phi(x) = \phi(x)-\phi(x_0)$. Then
    \begin{equation}
        \Vert K(\cdot,\mathrm{S}_L\delta(\cdot)+\psi)\Vert_{\mathfrak{h};\langle X,\phi\rangle;\mathcal{C}^2(\bar{X}^L)}\leqslant\Vert K\Vert_{8\mathfrak{h}/L;\langle X,\mathrm{S}_L\delta\phi+\psi\rangle}.\label{vanishingAtPoint}
    \end{equation}
\end{lem}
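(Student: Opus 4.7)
The plan is to prove (\ref{Rem0}) by deriving two different representations of $\mathrm{D}^n\vert_\phi\mathrm{Rem}_m K$ depending on whether $n\geqslant m+1$ or $n\leqslant m$, then combining them via a binomial identity and a final absorption step. Since $\mathrm{Rem}_m K(\phi) = K(\phi) - T_m K(\phi)$ differs from $K(\phi)$ by a polynomial of degree $m$ in $\phi$, we have $\mathrm{D}^n\vert_\phi\mathrm{Rem}_m K = \mathrm{D}^n\vert_\phi K$ exactly for $n\geqslant m+1$. For $n\leqslant m$, I would Taylor-expand the smooth function $s\mapsto\langle\mathrm{D}^n\vert_{s\phi}K,y_1,\ldots,y_n\rangle$ around $s=0$ to order $m-n$; its polynomial part at $s=1$ equals $\langle\mathrm{D}^n\vert_\phi T_m K,y_1,\ldots,y_n\rangle$ exactly, leaving the clean integral remainder
\begin{equation*}
    \mathrm{D}^n\vert_\phi\mathrm{Rem}_m K\cdot(y_1,\ldots,y_n) = \frac{1}{(m-n)!}\int_0^1(1-t)^{m-n}\langle\mathrm{D}^{m+1}\vert_{t\phi}K,y_1,\ldots,y_n,\underbrace{\phi,\ldots,\phi}_{m+1-n}\rangle\,\dd t.
\end{equation*}
Taking the supremum over unit directions and using $\int_0^1(1-t)^{m-n}\dd t = 1/(m+1-n)$ gives $\Vert\mathrm{Rem}_m K\Vert_{n;\langle X,\phi\rangle}\leqslant\frac{\Vert\phi\Vert^{m+1-n}_{\mathcal{C}^2(X)}}{(m+1-n)!}\sup_{t\in[0,1]}\Vert K\Vert_{m+1;\langle X,t\phi\rangle}$.

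Writing $\psi=\Vert\phi\Vert_{\mathcal{C}^2(X)}$ and $S=\sup_{t\in[0,1]}\sum_{n\geqslant m+1}\frac{\mathfrak{h}^n}{n!}\Vert K\Vert_{n;\langle X,t\phi\rangle}$, I multiply by $\mathfrak{h}^n/n!$ and sum: the $n\leqslant m$ contributions combine via the binomial identity $\sum_{n=0}^m\frac{\mathfrak{h}^n\psi^{m+1-n}}{n!(m+1-n)!} = \frac{(\mathfrak{h}+\psi)^{m+1}-\mathfrak{h}^{m+1}}{(m+1)!}$, while the $n\geqslant m+1$ contributions are bounded by $S$. Altogether
\begin{equation*}
    \Vert\mathrm{Rem}_m K\Vert_{\mathfrak{h};\langle X,\phi\rangle}\leqslant T\cdot\bigl[(1+\mathfrak{h}^{-1}\psi)^{m+1}-1\bigr]+S,\quad T=\frac{\mathfrak{h}^{m+1}}{(m+1)!}\sup_{t\in[0,1]}\Vert K\Vert_{m+1;\langle X,t\phi\rangle}.
\end{equation*}
Since $T$ is exactly the $n=m+1$ summand of $S$, certainly $T\leqslant S$, and the right-hand side collapses to $S\cdot(1+\mathfrak{h}^{-1}\psi)^{m+1}$, which is (\ref{Rem0}).

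For (\ref{vanishingAtPoint}) the argument is much more direct. The map $\phi\mapsto\mathrm{S}_L\delta\phi+\psi$ from $\mathcal{C}^2(\bar{X}^L)$ to $\mathcal{C}^2(X)$ is affine, so the composition inequality from Appendix \ref{AppCrReg} reduces the estimate to bounding the operator norm of its linear part $y\mapsto\mathrm{S}_L(y-y(x_0))$. For $x\in X$, $(\mathrm{S}_L(y-y(x_0)))(x)=y(L^{-1}x)-y(L^{-1}x_0)$ vanishes at $L^{-1}x_0$, so the mean value inequality together with $\mathrm{diam}(X)$ being bounded by a small absolute constant (because $X\in\mathcal{S}$ has $\vert X\vert\leqslant 4$) give $\Vert\mathrm{S}_L(y-y(x_0))\Vert_{\mathrm{L}^\infty(X)}\leqslant L^{-1}\mathrm{diam}(X)\Vert\nabla y\Vert_{\mathrm{L}^\infty(\bar{X}^L)}$. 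Combined with the derivative scalings $\Vert\nabla^k\mathrm{S}_L(y-y(x_0))\Vert_{\mathrm{L}^\infty(X)}=L^{-k}\Vert\nabla^k y\Vert_{\mathrm{L}^\infty(\bar{X}^L)}$ for $k=1,2$ (negligible for $L\geqslant 1$), this yields an operator norm bound $\leqslant 8/L$ with room to spare; plugging into the composition inequality produces (\ref{vanishingAtPoint}) at once.

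The main obstacle is locating the clean single-integral representation of $\mathrm{D}^n\vert_\phi\mathrm{Rem}_m K$ valid for $n\leqslant m$. Direct differentiation of the usual integral form $\mathrm{Rem}_m K(\phi)=\frac{1}{m!}\int_0^1(1-t)^m\langle\mathrm{D}^{m+1}\vert_{t\phi}K,\phi^{\otimes(m+1)}\rangle\,\dd t$ produces a messy sum over subsets of $\{1,\ldots,n\}$ (corresponding to whether each $y_i$ hits the base point $t\phi$ or replaces one of the $\phi$'s in the tensor product), and the binomial factorization needed for the final absorption step becomes visible only through the Taylor-in-$s$ representation that cleanly separates the roles of the differentiation directions $y_i$ and the base point $\phi$.
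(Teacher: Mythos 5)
Your proof is correct and follows essentially the same route as the paper: the integral Taylor remainder representation of $\mathrm{D}^n\vert_\phi\mathrm{Rem}_m K$ for $n\leqslant m$, the binomial resummation, and the operator-norm bound $\Vert\mathrm{S}_L\delta(\cdot)\Vert_{\mathcal{C}^2(\bar{X}^L)\to\mathcal{C}^2(X)}\leqslant 8/L$ fed into the composition inequality for (\ref{vanishingAtPoint}). One minor improvement worth keeping: by splitting at $n\leqslant m$ versus $n\geqslant m+1$ and using $T\leqslant S$ to absorb the $-1$ in $(1+\mathfrak{h}^{-1}\psi)^{m+1}-1$, you land exactly on the stated bound, whereas the paper's split at $n\leqslant m+1$ is written a bit loosely and nominally double-counts the $n=m+1$ term before asserting ``the result follows.''
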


\begin{proof}
    For (\ref{vanishingAtPoint}), we can take $x_0\in X$ which has a path joining it to any other point of lenght at most $2\sqrt{2}$. If $f\in\mathcal{C}^2(X)$,
    \begin{align}
        \Vert\mathrm{S}_L\delta f\Vert_{\mathcal{C}^2(X)} &= \max\!\bigg\lcurl\bigg\Vert\int_{x_0}^x\nabla\mathrm{S}_Lf(y)\cdot \dd y\bigg\Vert_{\mathrm{L}^\infty(X)},\Vert\nabla\mathrm{S}_Lf\Vert_{\mathrm{L}^\infty(X)},\Vert\nabla^2\mathrm{S}_Lf\Vert_{\mathrm{L}^\infty(X)}\bigg\rcurl\nonumber\\
        &\leqslant L^{-1}\mathrm{diam}(X)\max_{m=1,2}\Vert\mathrm{S}_L\nabla^m\hspace{-1pt}f\Vert_{\mathrm{L}^\infty(X)}\nonumber\\
        &\leqslant 8L^{-1}\Vert f\Vert_{\mathcal{C}^2(\bar{X}^L)}.
    \end{align}
    Therefore
    \begin{align}
        \big\vert\langle\mathrm{D}^n\vert_\phi K_j(X,\mathrm{S}_L\delta(\cdot)+\psi),f_1,\ldots,f_n\rangle\big\vert &= \Bigg\vert\frac{\partial^n}{\partial t_1\ldots\partial t_n}\bigg\vert_{t_1=\ldots=t_n=0}K_j\bigg(X,\mathrm{S}_L\delta\phi+\psi+\sum_{i=1}^nt_i\mathrm{S}_L\delta f_i\bigg)\Bigg\vert\nonumber\\
        &\leqslant (8L^{-1})^n\Vert K_j\Vert_{n;\langle X,\mathrm{S}_L\phi+\psi\rangle}\Vert f_1\Vert_{\mathcal{C}^2(\bar{X}^L)}\ldots\Vert f_n\Vert_{\mathcal{C}^2(\bar{X}^L)}.
    \end{align}
    This indeed implies the desired inequality. For (\ref{Rem0}), we break up derivatives into two cases. For terms with $n\geqslant m+1$, $\mathrm{D}^n\mathrm{Rem}_m F = \mathrm{D}^n F$ since the remainder is obtained by subtracting a polynomial. On the other hand for $n\leqslant m$,
    \begin{align}
        \big\vert\langle\mathrm{D}^n\vert_{\phi}\mathrm{Rem}_mK_j(X),f_1,\ldots,f_n)\rangle\big\vert&\leqslant\int_0^1\frac{(1-t)^{m-n}}{(m-n)!}\big\vert\langle\mathrm{D}^{m+1}\vert_{t\phi}K_j(X),\phi,\ldots,\phi,f_1,\ldots,f_n\rangle\big\vert\mathrm{d}t\nonumber\\
        &\leqslant\frac{1}{(m+1-n)!}\sup_{t\in[0,1]}\Vert K_j\Vert_{m+1;\langle X,\phi\rangle}\Vert\phi\Vert_{\mathcal{C}^2(X)}^{m+1-n}\prod_{i=1}^n\Vert f_i\Vert_{\mathcal{C}^2(X)}.
    \end{align}
    Then
    \begin{align}
        \sum_{n=0}^{m+1}\frac{\mathfrak{h}^n}{n!}\Vert\mathrm{Rem}_mK_j\Vert_{n;\langle X,\phi\rangle} &\leqslant \mathfrak{h}^{m+1}\sup_{t\in[0,1]}\Vert K_j\Vert_{m+1;\langle X,t\phi\rangle}\sum_{n=0}^{m+1}\frac{(\mathfrak{h}^{-1}\Vert\phi\Vert_{\mathcal{C}^2(X)})^{m+1-n}}{n!(m+1-n)!}\nonumber\\
        &\leqslant\frac{\mathfrak{h}^{m+1}}{(m+1)!}\big(1+\mathfrak{h}^{-1}\Vert\phi\Vert_{\mathcal{C}^2(X)}\big)^{\! m+1\,}\sup_{t\in[0,1]}\Vert K_j\Vert_{m+1;\langle X,t\phi\rangle}
    \end{align}
    and the result follows.
\end{proof}

\begin{lem}
    Let $\vert q\vert\geqslant 1$, $X\in\mathcal{S}$. Also fix $\mathfrak{h}_* \gtrsim_1 1$ and take $\mathfrak{h} \geqslant\mathfrak{h}_*$. Then
    \begin{equation}
        \big\Vert\mathbb{E}\big[k_j^{(q)}(\cdot,\mathrm{S}_L(\cdot)+\Psi_{\hspace{-1pt}\varepsilon;j})\big]\big\Vert_{\mathfrak{h},\kappa;\langle X\rangle;\mathcal{C}^2(\bar{X}^L)}\leqslant e^{\sqrtbetae\vert q\vert\mathfrak{h}-(\vert q\vert-1/2)\beta\Gamma_{\hspace{-1pt}\varepsilon;j}}2^{\vert X\vert}\Vert K_j\Vert_{\mathfrak{h},\kappa;\langle X\rangle}.\label{chargedScaling}
    \end{equation}
\end{lem}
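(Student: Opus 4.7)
\emph{Plan.} The bound rests on the constant-shift identity $k_j^{(q)}(X,\phi+c) = e^{i\sqrtbeta qc}k_j^{(q)}(X,\phi)$, which I would exploit in two complementary ways: first to extract an explicit oscillatory prefactor whose $\mathfrak{h}$-norm contributes the $e^{\sqrtbeta|q|\mathfrak{h}}$ factor, and second via a Girsanov complex shift of $\Psi_{\varepsilon;j}$ that produces the Wick factor $e^{-(|q|-1/2)\beta\Gamma_{\varepsilon;j}(0)}$. Fix a reference point $x_0\in X$ and write $k_j^{(q)}(X,\phi) = e^{i\sqrtbeta q\phi(x_0)}h(X,\phi)$, where $h(X,\phi) := k_j^{(q)}(X,\phi-\phi(x_0))$ is invariant under constant shifts of $\phi$. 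Pulling the $\phi$-dependent prefactor outside the Gaussian expectation and applying the Leibniz product rule for the $\mathfrak{h}$-norm, one computes directly that $\|e^{i\sqrtbeta q\mathrm{S}_L(\cdot)(x_0)}\|_{\mathfrak{h};\langle\phi\rangle;\mathcal{C}^2(\bar X^L)} = e^{\sqrtbeta|q|\mathfrak{h}}$, because each derivative in direction $f\in\mathcal{C}^2(\bar X^L)$ pulls out $i\sqrtbeta q\mathrm{S}_Lf(x_0)$ with $|\mathrm{S}_Lf(x_0)|\leqslant\|f\|_{\mathcal{C}^2(\bar X^L)}$. It thus remains to bound $\mathbb{E}[e^{i\sqrtbeta q\Psi_{\varepsilon;j}(x_0)}h(X,\mathrm{S}_L\phi+\Psi_{\varepsilon;j})]$.

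\emph{Charge splitting and Girsanov.} For this I split $e^{i\sqrtbeta q\Psi_{\varepsilon;j}(x_0)} = e^{i\sqrtbeta\mathrm{sgn}(q)\Psi_{\varepsilon;j}(x_0)}\cdot e^{i\sqrtbeta(q-\mathrm{sgn}(q))\Psi_{\varepsilon;j}(x_0)}$ and apply the Cameron-Martin identity to the unit-charge factor alone, with complex shift $h_* := i\sqrtbeta\mathrm{sgn}(q)\Gamma_{\varepsilon;j}(\cdot,x_0)$ (legitimate by analyticity of the integrand through the complexification of Appendix~\ref{AppComplexification}). Girsanov yields the Wick factor $e^{-\beta\Gamma_{\varepsilon;j}(0)/2}$, while the residual charged factor gains the deterministic multiplier
\begin{equation*}
e^{i\sqrtbeta(q-\mathrm{sgn}(q))h_*(x_0)} = e^{-\beta(q-\mathrm{sgn}(q))\mathrm{sgn}(q)\Gamma_{\varepsilon;j}(0)} = e^{-\beta(|q|-1)\Gamma_{\varepsilon;j}(0)}
\end{equation*}
under the shift, using $q\,\mathrm{sgn}(q) = |q|$ and $\mathrm{sgn}(q)^2 = 1$; the product of the two is exactly $e^{-(|q|-1/2)\beta\Gamma_{\varepsilon;j}(0)}$. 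The residual phase $e^{i\sqrtbeta(q-\mathrm{sgn}(q))\Psi_{\varepsilon;j}(x_0)}$ is unimodular for real $\Psi_{\varepsilon;j}$ and drops from the estimate. Constant-invariance of $h$ then lets one replace $h_*$ inside $^\mathbb{C}h$ by its non-constant part $i\sqrtbeta\mathrm{sgn}(q)\gamma$ with $\gamma := \Gamma_{\varepsilon;j}(\cdot,x_0)-\Gamma_{\varepsilon;j}(0)$, and for $X\in\mathcal{S}$ the scale-decomposition estimates ($|\partial^\alpha\Gamma_{\varepsilon;j}|\lesssim 1$ for $|\alpha|\geqslant 1$, combined with $\mathrm{diam}(X) = \mathcal{O}_1$) give $\|\gamma\|_{\mathcal{C}^2(X)} = \mathcal{O}_1$, so the imaginary shift stays uniformly bounded in $|q|$. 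The complexification bound of Appendix~\ref{AppComplexification} then controls $\|{}^\mathbb{C}h\|$ via $\|h\|$ at a slightly larger $\mathfrak{h}$-parameter; combined with the contraction of the Fourier projection in the $(\cdot,\kappa)$-norm (exploiting that $G_\kappa$ is constant-shift-invariant and thus commutes with the Fourier averaging defining $k_j^{(q)}$), and with the regulator bound~\eqref{supermpr} for the Gaussian integration on $G_\kappa$, this yields the remaining $2^{|X|}\|K_j\|_{\mathfrak{h},\kappa;\langle X\rangle}$.

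\emph{Main obstacle.} The main technical challenge is the norm-parameter bookkeeping: the $\mathcal{O}_1$-sized imaginary shift, together with the chain rule for $h = k_j^{(q)}(X,\,\cdot - \cdot(x_0))$, costs a constant increment in the $\mathfrak{h}$-parameter, which must be absorbed into the slack between the $\mathfrak{h}$ on the left-hand side and the $2\mathfrak{h}$ available from the inductive hypothesis $K_j\in\mathbb{B}^{j,1}_{2\mathfrak{h},2\kappa,A}$; this is what forces $\mathfrak{h}_*\gtrsim_1 1$ sufficiently large. Uniformity of the bound in $|q|$ is the other subtle point: naïvely iterating Girsanov $|q|$ times would yield the nominally sharper Wick factor $e^{-q^2\beta\Gamma_{\varepsilon;j}(0)/2}$, but at the price of an imaginary shift whose $\mathcal{C}^2$-norm grows linearly with $|q|$ and eventually leaves the analyticity strip. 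The single-application charge-splitting strategy above trades this optimality in the exponent ($q^2/2$ versus $|q|-1/2$) for uniformity in $|q|$, which is precisely what the stated bound reflects.
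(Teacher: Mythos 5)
Your proposal follows essentially the same route as the paper's proof: extract the charged prefactor via the constant-shift identity $k_j^{(q)}(X,\phi+c)=e^{i\sqrtbetae qc}k_j^{(q)}(X,\phi)$, perform a unit-charge complex Girsanov shift by (a smooth approximation of) $i\sqrtbeta\hspace{1pt}\mathrm{sgn}(q)\hspace{1pt}\Gamma_{\varepsilon;j}(\cdot,x_0)$ justified by analytic continuation in the shift parameter, and control the resulting complexified activity with the machinery of Appendix~\ref{AppComplexification} and the regulator bound~(\ref{supermpr0}). Your reorganization---factoring $k^{(q)}_j=e^{i\sqrtbetae q\phi(x_0)}h$ up front with $h$ constant-shift invariant, and splitting $q=\mathrm{sgn}(q)+(q-\mathrm{sgn}(q))$---is cosmetically different from the paper's device of a real parameter $z$ continued to $i\sqrtbeta$ followed by a single application of the charge identity, but the cancellations, the resulting Wick factor $e^{-(|q|-1/2)\beta\Gamma_{\hspace{-1pt}\varepsilon;j}(0)}$, the residual unimodular phase, and the final steps are identical in content. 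Your observation that a full-charge Girsanov shift trades the sharper exponent $e^{-q^2\beta\Gamma(0)/2}$ for loss of $|q|$-uniformity in the analyticity strip is correct and explains well why the stated bound has the linear exponent. Do note that the paper takes the $f\to\delta_{x_0}$ limit explicitly since $\Gamma_{\varepsilon;j}\delta_{x_0}$ is not literally a Cameron--Martin vector; you elide this.

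One statement in your \emph{Main obstacle} paragraph needs to be corrected. The budget for absorbing the $\mathcal{O}_1$-sized imaginary shift does \emph{not} come from a slack between $\mathfrak{h}$ and the $2\mathfrak{h}$ of the hypothesis $K_j\in\mathbb{B}^{j,1}_{2\mathfrak{h},2\kappa,A}$; in the lemma both sides carry the same parameter $\mathfrak{h}$, and the $2\mathfrak{h}$-vs-$\mathfrak{h}$ slack from Proposition~\ref{MainRgEstimates} plays no role here. Also, the ``chain rule for $h$'' does not cost anything; it \emph{gains} a multiplicative factor. Precisely, by (\ref{vanishingAtPoint}), the map $f\mapsto\mathrm{S}_L\delta f$ is bounded from $\mathcal{C}^2(\bar{X}^L)$ to $\mathcal{C}^2(X)$ with norm at most $8/L$ (using $\mathrm{diam}(X)=\mathcal{O}_1$ for $X\in\mathcal{S}$), so the $\mathfrak{h}$-parameter contracts to $8\mathfrak{h}/L$ before the complexification is applied. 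The complexification then requires $2\cdot(8\mathfrak{h}/L)+\sqrtbeta\Vert\delta(\Gamma_{\varepsilon;j}\delta_{x_0})\Vert_{\mathcal{C}^2(X)}\leqslant\mathfrak{h}$, which is where both $L\gtrsim_1 1$ and $\mathfrak{h}_*\gtrsim_1 1$ enter; this is the actual mechanism you should invoke.
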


\begin{proof}
    We only consider $q > 0$. Take $f\in\mathcal{S}$ positive with integral $1$. We do an analytic continuation argument: if $z\in\mathbb{R}$ and $x_0\in X$,
    \begin{align}
        \mathbb{E}\big[k_j^{(q)}(X,\phi+\Psi_{\hspace{-1pt}\varepsilon;j})\big] &= \int k_j^{(q)}(X,\phi+\psi)\,\gamma[\Gamma_{\hspace{-1pt}\varepsilon;j}](\dd\psi)\nonumber\\
        &= e^{-\frac{1}{2}z^2\langle \Gamma_{\hspace{-1pt}\varepsilon;j}f,f\rangle}\int e^{-z\langle f,\psi\rangle}k_j^{(q)}(X,\phi+\psi+z\Gamma_{\hspace{-1pt}\varepsilon;j}f)\,\gamma[\Gamma_{\hspace{-1pt}\varepsilon;j}](\dd\psi)\\
        &=e^{-\frac{1}{2}z^2\langle \Gamma_{\hspace{-1pt}\varepsilon;j}f,f\rangle+i\beta qz(\Gamma_{\hspace{-1pt}\varepsilon;j}f)(x_0)}\int e^{-z\langle f,\psi\rangle}\,{^\mathbb{C\hspace{-1pt}}}k^{(q)}_j(X,\phi+\psi+z\delta(\Gamma_{\hspace{-1pt}\varepsilon;j}f))\,\gamma[\Gamma_{\hspace{-1pt}\varepsilon;j}](\dd\psi).\nonumber
    \end{align}
    By the identity principle, equality holds for all $z$ on some strip around $\mathbb{R}$. We note that \begin{equation}
        \sqrtbeta\Vert\delta(\Gamma_{\hspace{-1pt}\varepsilon;j}f)\Vert_{\mathcal{C}^2(X)}\lesssim_1\Vert\nabla\Gamma_{\hspace{-1pt}\varepsilon;j}f\Vert_{\mathcal{C}^1(X)}\lesssim_1\max_{\vert\mu\vert=1,2}\Vert\partial^\mu\Gamma_{\hspace{-1pt}\varepsilon;j}\Vert_{\mathrm{L}^\infty(\mathbb{T}^2)}\lesssim_1 1,
    \end{equation}
    so because $\mathfrak{h}_*\gtrsim_1 1$, the strip is big enough that we can take $z = i\sqrtbeta$. In the limit as $f$ approaches $\delta_{x_0}$,
    \begin{equation}
        \mathbb{E}\big[k_j^{(q)}(X,\phi+\Psi_{\hspace{-1pt}\varepsilon;j})\big] = e^{-(q-1/2)\beta\Gamma_{\hspace{-1pt}\varepsilon;j}(0)}\int e^{-i\sqrtbetae\psi(x_0)}\,{^\mathbb{C\hspace{-1pt}}}k^{(q)}_j(X,\phi+\psi+i\sqrtbeta\delta(\Gamma_{\hspace{-1pt}\varepsilon;j}\delta_{x_0}))\,\gamma[\Gamma_{\hspace{-1pt}\varepsilon;j}](\dd\psi).
    \end{equation}
    It follows that
    \begin{align}
        \big\Vert\mathbb{E}\big[k^{(q)}_j&(\cdot,\mathrm{S}_L(\cdot)+\Psi_{\hspace{-1pt}\varepsilon;j})\big]\big\Vert_{\mathfrak{h};\langle X,\phi\rangle;\mathcal{C}^2(\bar{X}^L)} = \big\Vert e^{i\sqrtbetae q\mathrm{S}_L(\cdot)(x_0)}\mathbb{E}\big[k^{(q)}_j(\cdot,\mathrm{S}_L\delta(\cdot)+\Psi_{\hspace{-1pt}\varepsilon;j})\big]\big\Vert_{\mathfrak{h};\langle X,\phi\rangle;\mathcal{C}^2(\bar{X}^L)}\nonumber\\
        &\leqslant e^{\sqrtbetae q\mathfrak{h}}\big\Vert\mathbb{E}\big[k_j^{(q)}(\cdot,\cdot+\Psi_{\hspace{-1pt}\varepsilon;j})\big]\big\Vert_{8\mathfrak{h}/L;\langle X,\mathrm{S}_L\delta\phi\rangle}\nonumber\\
        &\leqslant e^{\sqrtbetae q\mathfrak{h}-(q-1/2)\beta\Gamma_{\hspace{-1pt}\varepsilon;j}(0)}\big\Vert\mathbb{E}\big[e^{-i\sqrtbetae\Psi_{\hspace{-1pt}\varepsilon;j}(x_0)}\,{^\mathbb{C\hspace{-1pt}}}k^{(q)}_j(\cdot,\cdot+\Psi_{\hspace{-1pt}\varepsilon;j}+i\sqrtbeta\delta(\Gamma_{\hspace{-1pt}\varepsilon;j}\delta_{x_0}))\big]\big\Vert_{8\mathfrak{h}/L;\langle X,\mathrm{S}_L\delta\phi\rangle}\nonumber\\
        &\leqslant e^{\sqrtbetae q\mathfrak{h}-(q-1/2)\beta\Gamma_{\hspace{-1pt}\varepsilon;j}(0)}\mathbb{E}\big[\Vert{^\mathbb{C\hspace{-1pt}}}k^{(q)}_j\big\Vert_{8\mathfrak{h}/L;\langle X,\mathrm{S}_L\delta\phi+\Psi_{\hspace{-1pt}\varepsilon;j}+i\sqrtbetae \delta(\Gamma_{\hspace{-1pt}\varepsilon;j}\delta_{x_0})}\big]\nonumber\\
        &\leqslant e^{\sqrtbetae q\mathfrak{h}-(q-1/2)\beta\Gamma_{\hspace{-1pt}\varepsilon;j}(0)}\mathbb{E}\big[\Vert{^\mathbb{C\hspace{-1pt}}}k^{(q)}_j\big\Vert_{8\mathfrak{h}/L+2\sqrtbetae\Vert\delta(\Gamma_{\hspace{-1pt}\varepsilon;j}\delta_{x_0})\Vert_{\mathcal{C}^2(X)};\langle X,\mathrm{S}_L\delta\phi+\Psi_{\hspace{-1pt}\varepsilon;j}\rangle}\big]\nonumber\\
        &\leqslant e^{\sqrtbetae q\mathfrak{h}-(q-1/2)\beta\Gamma_{\hspace{-1pt}\varepsilon;j}(0)}\mathbb{E}\big[G_\kappa(X,\mathrm{S}_L\phi+\Psi_{\hspace{-1pt}\varepsilon;j})\big]\Vert k^{(q)}_j\Vert_{\mathfrak{h},\kappa;\langle X\rangle}
    \end{align}
    where the last inequality holds due to $L,\mathfrak{h}_*\gtrsim_1 1$. The proof concludes after applying (\ref{supermpr0}).
\end{proof}

\begin{lem}
    Take $\kappa \geqslant \kappa_*/2$. Suppose $K_j$ is real and even in $\phi$. Then
    \begin{equation}
        \big\Vert\mathrm{Rem}_0\mathbb{E}\big[k^{(0)}_j(\cdot,\mathrm{S}_L(\cdot)+\Psi_{\hspace{-1pt}\varepsilon;j})\big]\big\Vert_{\mathfrak{h},\kappa;\langle X\rangle;\mathcal{C}^2(\bar{X}^L)}\lesssim_1 2^{\vert X\vert}\kappa L^{-2}\Vert K\Vert_{\mathfrak{h},\kappa;\langle X\rangle}.\label{Rem0scaling}
    \end{equation}
\end{lem}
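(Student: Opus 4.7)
Set $F(\phi) := \mathbb{E}[k_j^{(0)}(X, \mathrm{S}_L\phi + \Psi_{\varepsilon;j})]$. Two symmetries are in play. First, since $K_j$ is real and even its neutral Fourier mode $k_j^{(0)}$ is also even in $\phi$, and because $\gamma[\Gamma_{\varepsilon;j}]$ is invariant under $\Psi \mapsto -\Psi$, $F$ itself is even; in particular $\mathrm{D}\vert_0 F = 0$, hence $\mathrm{Rem}_0 F = \mathrm{Rem}_1 F$. Second, $k_j^{(0)}(X, \phi + c) = k_j^{(0)}(X, \phi)$ for every constant $c$ (from the definition at charge $q=0$), so $F$ and every derivative of $\mathrm{Rem}_0 F$ annihilate constants. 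Fixing any $x_0 \in X$ and writing $\tilde\phi := \phi - \phi(x_0/L)$, it therefore holds that $\Vert \mathrm{Rem}_0 F\Vert_{\mathfrak{h};\langle X,\phi\rangle;\mathcal{C}^2(\bar X^L)} = \Vert \mathrm{Rem}_0 F\Vert_{\mathfrak{h};\langle X,\tilde\phi\rangle;\mathcal{C}^2(\bar X^L)}$; moreover, since $\tilde\phi$ vanishes at $x_0/L \in \bar X^L$ and $\bar X^L$ has $\mathcal{O}_1$ diameter (using $X \in \mathcal{S}$), we get $\Vert \tilde\phi\Vert_{\mathcal{C}^2(\bar X^L)} \lesssim_1 \Vert \nabla\phi\Vert_{\mathcal{C}^1(\bar X^L)}$.

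Next, I would apply (\ref{Rem0}) to $F$ at the argument $\tilde\phi$ with $m = 1$, giving
\[
\Vert \mathrm{Rem}_0 F\Vert_{\mathfrak{h};\langle X,\phi\rangle} \leqslant \big(1 + \mathfrak{h}^{-1}\Vert\tilde\phi\Vert_{\mathcal{C}^2(\bar X^L)}\big)^{\!2}\sup_{t\in[0,1]}\sum_{n \geqslant 2} \frac{\mathfrak{h}^n}{n!}\Vert F\Vert_{n;\langle X, t\tilde\phi\rangle;\mathcal{C}^2(\bar X^L)}.
\]
For the summands I would reprise the device from the proof of (\ref{vanishingAtPoint}), but applied to the test functions rather than to the field argument: write $\mathrm{D}^n\vert_\psi F[f_1,\ldots,f_n] = \mathbb{E}[\langle \mathrm{D}^n\vert_{\mathrm{S}_L\psi + \Psi_{\varepsilon;j}} k_j^{(0)}(X), \mathrm{S}_L f_1, \ldots, \mathrm{S}_L f_n\rangle]$ and use that every slot of $\mathrm{D}^n\vert_\cdot k_j^{(0)}(X)$ annihilates constants to replace each $\mathrm{S}_L f_i$ by $\mathrm{S}_L f_i - (\mathrm{S}_L f_i)(x_0)$, which vanishes at $x_0 \in X$. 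Exactly as in the proof of (\ref{vanishingAtPoint}), this gives $\Vert \mathrm{S}_L f_i - (\mathrm{S}_L f_i)(x_0)\Vert_{\mathcal{C}^2(X)} \leqslant 8L^{-1}\Vert f_i\Vert_{\mathcal{C}^2(\bar X^L)}$, whence $\Vert F\Vert_{n;\langle X,\psi\rangle;\mathcal{C}^2(\bar X^L)} \leqslant (8/L)^n \mathbb{E}[\Vert k_j^{(0)}\Vert_{n;\langle X, \mathrm{S}_L\psi + \Psi_{\varepsilon;j}\rangle}]$. Since $\mathfrak{h} \gtrsim_1 1$ and $L$ is large, factoring out $(8/L)^2$ from the $n \geqslant 2$ sum (and absorbing the remainder into the $\Vert k_j^{(0)}\Vert_{\mathfrak{h}}$ series) bounds the double sum by $\mathcal{O}_1 L^{-2}\mathbb{E}[\Vert k_j^{(0)}\Vert_{\mathfrak{h};\langle X, t\mathrm{S}_L\tilde\phi + \Psi_{\varepsilon;j}\rangle}]$.

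Finally, introducing $G_\kappa$ via $\Vert k_j^{(0)}\Vert_{\mathfrak{h};\langle X, \cdot\rangle} \leqslant G_\kappa(X,\cdot)\Vert K_j\Vert_{\mathfrak{h},\kappa;\langle X\rangle}$ and exploiting that $G_\kappa$ sees only gradients, so $G_\kappa(X, t\mathrm{S}_L\tilde\phi + \Psi_{\varepsilon;j}) = G_\kappa(X, t\mathrm{S}_L\phi + \Psi_{\varepsilon;j})$, the assembled bound on $\Vert \mathrm{Rem}_0 F\Vert_{\mathfrak{h};\langle X,\phi\rangle}$ collects exactly the combination $\big(1 + \Vert\nabla\phi\Vert_{\mathcal{C}^1(\bar X^L)}\big)^{2}\sup_t \mathbb{E}[G_\kappa(X, t\mathrm{S}_L\phi + \Psi_{\varepsilon;j})]$ (up to $\mathfrak{h} \gtrsim_1 1$ factors) which (\ref{smallXthing}) caps by $\mathcal{O}_1 \kappa G_\kappa(\bar X^L, \phi)$. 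Dividing through by $G_\kappa(\bar X^L,\phi)$ and taking $\sup_\phi$ delivers the claimed $\lesssim_1 L^{-2}\kappa\Vert K_j\Vert_{\mathfrak{h},\kappa;\langle X\rangle}$, with the $2^{\vert X\vert}$ bookkeeping factor absorbed since $\vert X\vert \leqslant 4$. The \emph{main obstacle} is synchronizing the two distinct constant-shift invariances: one at the level of $F$, to downgrade the uncontrolled $\Vert\phi\Vert_{\mathcal{C}^2}$ to $\Vert\nabla\phi\Vert_{\mathcal{C}^1}$ (the quantity tamed by (\ref{smallXthing})), and one at the level of $k_j^{(0)}$, to extract an $L^{-1}$ per functional derivative, so that the two required powers of $L^{-1}$ combine with the $\mathrm{Rem}_0 = \mathrm{Rem}_1$ cancellation granted by evenness.
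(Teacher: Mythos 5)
Your proposal is correct and, modulo bookkeeping, follows the same route as the paper: evenness gives $\mathrm{Rem}_0 = \mathrm{Rem}_1$, the constant-shift invariance of $k_j^{(0)}$ is used twice — once to replace $\phi$ by a pointwise-pinned version so that the $(1+\Vert\cdot\Vert_{\mathcal{C}^2})^2$ prefactor from (\ref{Rem0}) becomes $(1+\Vert\nabla\phi\Vert_{\mathcal{C}^1})^2$, and once to rescale each derivative slot and harvest a factor $L^{-1}$ per derivative — and the resulting combination is then closed out by (\ref{smallXthing}). The only stylistic difference is that the paper packages both shifts through (\ref{vanishingAtPoint}) (thereby working with $8\mathfrak{h}/L$ throughout) and then applies (\ref{Rem0}) at the rescaled $\mathfrak{h}$, whereas you apply (\ref{Rem0}) at the unscaled $\mathfrak{h}$ to $F$ first and peel off the $(8/L)^n$ afterward inside the sum; both organizations give the same estimate.
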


\begin{proof}
    By assumption, $\mathrm{Rem}_0\mathbb{E}\big[k^{(0)}_j(X,\mathrm{S}_L(\cdot)+\Psi_{\hspace{-1pt}\varepsilon;j})\big] = \mathrm{Rem}_1\mathbb{E}\big[k^{(0)}_j(X,\mathrm{S}_L(\cdot)+\Psi_{\hspace{-1pt}\varepsilon;j})\big]$. From (\ref{Rem0},\ref{vanishingAtPoint}),
    \begin{align}
        \big\Vert\mathrm{Rem}_1\mathbb{E}\big[k^{(0)}_j(\cdot,\mathrm{S}_L(\cdot)+\psi)\big]\big\Vert_{\mathfrak{h};\langle X,\phi\rangle;\mathcal{C}^2(\bar{X}^L)}&\leqslant \Vert\mathrm{Rem}_1k^{(0)}_j\Vert_{8\mathfrak{h}/L;\langle X,\mathrm{S}_L\delta\phi+\psi\rangle}\nonumber\\
        &\leqslant\big(1+L\Vert\mathrm{S}_L\delta\phi\Vert_{\mathcal{C}^2(X)}\big)^{\!2\,}\sup_{t\in[0,1]}\sum_{n=2}^\infty\frac{(8\mathfrak{h}/L)^n}{n!}\Vert k^{(0)}_j\Vert_{n;\langle X,t\mathrm{S}_L\delta\phi+\psi\rangle}\nonumber\\
        &\lesssim_1\big(1+\Vert\nabla\phi\Vert_{\mathcal{C}^1(X)}\big)^{\!2\,}L^{-2}\sup_{t\in[0,1]}\Vert k^{(0)}_j\Vert_{\mathfrak{h};\langle X,t\mathrm{S}_L\delta\phi+\psi\rangle}\\
        &\lesssim_1 L^{-2}\Vert K_j\Vert_{\mathfrak{h},\kappa;\langle X\rangle}\big(1+\Vert\nabla\phi\Vert_{\mathcal{C}^1(X)}\big)^{\!2\,}\sup_{t\in[0,1]}G_\kappa(X,t\mathrm{S}_L\phi+\psi).\nonumber
    \end{align}
    The result is immediate by (\ref{smallXthing}).
\end{proof}

\section{Proofs of some propositions}\label{AppProofs}

These proofs closely follow those in \cite{Falco1}. In some respects they are simpler because we are in the subcritical regime, while in others they are more complicated because we have to deal with the fluctuation fields changing a bit at all scales when the UV cutoff is lowered.

\subsection{Proof of Proposition \ref{MainRgEstimates}} 

First,
\begin{equation}
    \langle\mathrm{D}^n\vert_\phi V_j(X),f_1,\ldots,f_n\rangle = \beta^{n/2}\int_X\zeta_j(x)\cos^{(n)}\!\!\big(\sqrtbeta\phi(x)\big)f_1(x)\ldots f_n(x)\dd x,
\end{equation}
so
\begin{align}
    \Vert V_j-\dot{V}_j\Vert_{\mathfrak{h};\langle X,\phi\rangle} &\lesssim_1 \vert X\vert\Vert\zeta_{\varepsilon;j}-\dot\zeta_{\dot\varepsilon,j}\Vert_{\wp}\nonumber\\
    &\lesssim_1\vert X\vert\big(\Vert\zeta_{0;j}-\dot\zeta_{0,j}\Vert_{\wp}+L^{-j}\vert\varepsilon-\dot{\varepsilon}\vert \Vert\zeta_{0;j},\dot\zeta_{0,j}\Vert_{\wp}\big).
\end{align}
Furthermore,
\begin{align}
    \Vert e^{V_j}-e^{\dot{V}_j}\Vert_{\mathfrak{h};\langle X,\phi\rangle}&\leqslant\sum_{n=1}^\infty\frac{1}{n!}\Vert V_j^n-\dot{V}^n_j\Vert_{\mathfrak{h};\langle X,\phi\rangle}\nonumber\\
    &\leqslant\Vert V_j-\dot{V}_j\Vert_{\mathfrak{h};\langle X,\phi\rangle}\sum_{n=1}^\infty\frac{1}{n!}\sum_{k=0}^{n-1}\Vert V_j\Vert^{n-1-k}_{\mathfrak{h};\langle X,\phi\rangle}\Vert\dot{V}_j\Vert^k_{\mathfrak{h};\langle X,\phi\rangle}\nonumber\\
    &\leqslant e^{\Vert V_j,\dot{V}_j\Vert^n_{\mathfrak{h};\langle X,\phi\rangle}}\Vert V_j-\dot{V}_j\Vert_{\mathfrak{h};\langle X,\phi\rangle}\nonumber\\
    &\lesssim_1 2^{\vert X\vert}\big(\Vert\zeta_{0;j}-\dot\zeta_{0,j}\Vert_{\wp}+L^{-j}\vert\varepsilon-\dot{\varepsilon}\vert \Vert\zeta_{0;j},\dot\zeta_{0,j}\Vert_{\wp}\big).
\end{align}
For the extraction,
\begin{align}
    \vert Q_j&(X)-\dot{Q}_j(X)\vert\nonumber\\
    &=\big\vert\mathbb{E}\big[k^{(0)}_j(X,\Psi_{\hspace{-1pt}j})-\dot{k}^{(0)}_j(X,\dot{\Psi}_{\hspace{-1pt}j})\big]\big\vert\nonumber\\
    &\leqslant\mathbb{E}\big[\big\vert k^{(0)}_j-\dot{k}^{(0)}_j\big\vert(X,\Psi_{\hspace{-1pt}j})\big]+\mathbb{E}\big[\big\vert\dot{k}^{(0)}_j(X,\Psi_{\hspace{-1pt}j})-\dot{k}^{(0)}_j(X,\dot{\Psi}_{\hspace{-1pt}j})\big\vert\big]\nonumber\\
    &\leqslant\Vert K_j-\dot{K}_j\Vert_{\mathfrak{h},\kappa;\langle X\rangle}\mathbb{E}\big[G_\kappa(X,\Psi_{\hspace{-1pt}j})\big]+\mathbb{E}\bigg[\int_0^1\big\vert\langle \mathrm{D}\vert_{(1-t){\Psi}_{\hspace{-1pt}j}+t\dot{\Psi}_{\hspace{-1pt}j}}\dot{k}^{(0)}_j(X),{\Psi}_{\hspace{-1pt}j}-\dot{\Psi}_{\hspace{-1pt}j}\rangle\big\vert\dd t\bigg]\nonumber\\
    &\lesssim_1 \Vert K_j-\dot{K}_j\Vert_{\mathfrak{h},\kappa;\langle X\rangle} + \Vert\dot{K}_j\Vert_{\mathfrak{h},\kappa;\langle X\rangle}\mathbb{E}\Big[\Vert{\Psi}_{\hspace{-1pt}j}-\dot{\Psi}_{\hspace{-1pt}j}\Vert_{\mathcal{C}^2(X)}\sup_{t\in[0,1]}G_\kappa(X,(1-t){\Psi}_{\hspace{-1pt}j}+t\dot{\Psi}_{\hspace{-1pt}j})\Big]\nonumber\\
    &\leqslant\Vert K_j-\dot{K}_j\Vert_{\mathfrak{h},\kappa;\langle X\rangle} + \Vert K_j,\dot{K}_j\Vert_{\mathfrak{h},\kappa;\langle X\rangle}\mathbb{E}\big[\Vert{\Psi}_{\hspace{-1pt}j}-\dot{\Psi}_{\hspace{-1pt}j}\Vert_{\mathcal{C}^2(X)}\big]^{\!1/2}\mathbb{E}\Big[\sup_{t\in[0,1]}G_{\kappa/2}(X,(1-t){\Psi}_{\hspace{-1pt}j}+t\dot{\Psi}_{\hspace{-1pt}j})\Big]^{\!1/2}\nonumber\\
    &\lesssim_1 \Vert K_j-\dot{K}_j\Vert_{\mathfrak{h},\kappa;\langle X\rangle} + L^{-j}\vert\varepsilon-\dot\varepsilon\vert\Vert K_j,\dot{K}_j\Vert_{\mathfrak{h},\kappa;\langle X\rangle}.
\end{align}
This also immediately implies the estimate for $\delta\mathcal{E}_j$. For $J_j^{\mathfrak{B}\rightarrow X}-\dot{J}_j^{\mathfrak{B}\rightarrow X}$,
\begin{equation}
    \vert J_j^{\mathfrak{B}\rightarrow X}-\dot{J}_j^{\mathfrak{B}\rightarrow X}\vert \leqslant \sum_{\varnothing\neq\mathcal{A}\subset\mathcal{C}(X)}\prod_{Y\in\mathcal{A}}\vert J_j(Y,\mathfrak{B}_Y)-\dot{J}_j(Y,\mathfrak{B}_Y)\vert\prod_{Z\in\mathcal{A}^c}\vert\dot{J}_j(Z,\mathfrak{B}_Z)\vert
\end{equation}
and so it will be enough to control $J_j(Y,B)-\dot{J}_j(Y,B)$, which is easily done from its definition:
\begin{align}
    \vert J_j(Y,B)-\dot{J}_j(Y,B)\vert&\leqslant \bigg(\sum_{D\in\mathcal{B}_j(LB),X\in\mathcal{S}_j}^{D\subset X,\bar{X} = Y}+\,\mathbb{I}_{Y=B}\sum_{Y'\in\mathcal{S}_{j+1}}\sum_{D\in\mathcal{B}_j(LB),X\in\mathcal{S}_j}^{D\subset X,\bar{X} = Y'}\bigg)\vert Q_j(X)-\dot{Q}_j(X)\vert\nonumber\\
    &\lesssim_{1} A^{-1}L^2\big(\Vert K_j-\dot{K}_j\Vert_{\mathfrak{h},\kappa,A} + L^{-j}\vert\varepsilon-\dot\varepsilon\vert\Vert K_j,\dot{K}_j\Vert_{\mathfrak{h},\kappa,A}\big)\nonumber\\
    &\leqslant \mathcal{O}_A A^{-2\vert Y^*\vert}\big(\Vert K_j-\dot{K}_j\Vert_{\mathfrak{h},\kappa,A} + L^{-j}\vert\varepsilon-\dot\varepsilon\vert\Vert K_j,\dot{K}_j\Vert_{\mathfrak{h},\kappa,A}\big).
\end{align}
In going to the last line, we used that $\vert Y^*\vert\leqslant 49\vert Y\vert< 200$ because $Y\in\mathcal{S}$. Writing the inequality in this specific form comes in handy later.

Now let's consider $R_j^{(\cdot)}$. Expanding the definition,
\begin{align}
    \Vert R&^{(\cdot)}_j(\cdot,\psi)-\dot{R}_j^{(\cdot)}(\cdot,\dot{\psi})\Vert_{\mathfrak{h};\langle X,\phi\rangle}\nonumber\\
    &=\bigg\Vert\prod_{Y\in\mathcal{C}(X)}\bigg(\tilde{K}_j(Y,\cdot,\psi)-\sum_{B\in\mathcal{B}_{j+1}(Y)}J_j(Y,B)\bigg)-\prod_{Y\in\mathcal{C}(X)}\bigg(\dot{\tilde{K}}_j(Y,\cdot,\dot\psi)-\sum_{B\in\mathcal{B}_{j+1}(Y)}\dot J_j(Y,B)\bigg)\bigg\Vert_{\mathfrak{h};\langle\phi\rangle;\mathcal{C}^2(X)}\nonumber\\
    &\leqslant \sum_{\mathcal{A}\subset\mathcal{C}(X)}\bigg\Vert\prod_{Y\in\mathcal{A}}\tilde{K}_j(Y,\cdot,\psi)\sum_{\mathfrak{B}\rightarrow\bigcup\mathcal{A}^c}J_j^{\mathfrak{B}\rightarrow\bigcup\mathcal{A}^c}-\prod_{Y\in\mathcal{A}}\dot{\tilde{K}}_j(Y,\cdot,\dot\psi)\sum_{\mathfrak{B}\rightarrow\bigcup\mathcal{A}^c}\dot{J}_j^{\mathfrak{B}\rightarrow\bigcup\mathcal{A}^c}\bigg\Vert_{\mathfrak{h};\langle\phi\rangle;\mathcal{C}^2(X)}\nonumber\\
    &\leqslant\sum_{\varnothing\neq\mathcal{A}\subset\mathcal{C}(X)}\big\Vert\hat{K}_j^{\bigcup\mathcal{A}}(\cdot,\psi)-\dot{\hat{K}}_j^{\bigcup\mathcal{A}}(\cdot,\dot\psi)\big\Vert_{\mathfrak{h};\langle\phi\rangle;\mathcal{C}^2(\bigcup\mathcal{A})}\sum_{\mathfrak{B}\rightarrow\bigcup\mathcal{A}^c}\vert J_j^{\mathfrak{B}\rightarrow\bigcup\mathcal{A}^c}\vert+\nonumber\\
    &\hspace{45pt}\sum_{\mathcal{A}\subsetneq\mathcal{C}(X)}\big\Vert\dot{\hat{K}}_j^{\bigcup\mathcal{A}}(\cdot,\dot\psi)\big\Vert_{\mathfrak{h};\langle\phi\rangle;\mathcal{C}^2(\bigcup\mathcal{A})}\sum_{\mathfrak{B}\rightarrow\bigcup\mathcal{A}^c}\vert J_j^{\mathfrak{B}\rightarrow\bigcup\mathcal{A}^c}-\dot J_j^{\mathfrak{B}\rightarrow\bigcup\mathcal{A}^c}\vert.
\end{align}
Here we have provisionally defined
\begin{equation}
    \hat{K}_j^{X}(\phi,\psi) = \prod_{Y\in\mathcal{C}(X)}\tilde{K}_j(Y,\phi,\psi).
\end{equation}
The main issue is controlling the norm of this object. We need to get a handle on $\tilde{K}_j$ itself (first with $\psi = \dot\psi$):
\begin{align}
    \Vert\tilde{K}_j(\cdot&,\cdot,\psi)-\dot{\tilde{K}}_j(\cdot,\cdot,\psi)\Vert_{\mathfrak{h};\langle X,\phi\rangle}\nonumber\\
    &\leqslant \sum_{Y\in\mathcal{P}_j}^{\bar{Y}^L = X}\bigg\Vert e^{V_j(L X-Y,\cdot)}\prod_{X'\in\mathcal{C}(Y)}K_j(X',\cdot)-e^{\dot V_j(L X-Y,\cdot)}\prod_{X'\in\mathcal{C}(Y)}\dot K_j(X',\cdot)\bigg\Vert_{\mathfrak{h};\langle\mathrm{S}_L\phi+\psi\rangle;\mathcal
    C^2(L X)}\nonumber\\
    &\leqslant\sum_{Y\in\mathcal{P}_j}^{\bar{Y}^L = X}\Vert e^{V_j}-e^{\dot{V}_j}\Vert_{\mathfrak{h};\langle LX - Y,\mathrm{S}_L\phi+\psi\rangle}\prod_{X'\in\mathcal{C}(Y)}\Vert K_j\Vert_{\mathfrak{h};\langle X',\mathrm{S}_L\phi+\psi\rangle}+\nonumber\\
    &\hspace{20pt}\sum_{Y\in\mathcal{P}_j}^{\bar{Y}^L = X}\Vert e^{\dot{V}_j}\Vert_{\mathfrak{h};\langle LX - Y,\mathrm{S}_L\phi+\psi\rangle}\bigg\Vert \prod_{X'\in\mathcal{C}(Y)}K_j(X',\cdot)-\prod_{X'\in\mathcal{C}(Y)}\dot K_j(X',\cdot)\bigg\Vert_{\mathfrak{h};\langle\mathrm{S}_L\phi+\psi\rangle;\mathcal
    C^2(Y)}\nonumber\\
    &\leqslant\mathcal{O}_L^{\vert X\vert}G_\kappa(L X,\mathrm{S}_L\phi+\psi)\sum_{Y\in\mathcal{P}_j}^{\bar{Y}^L=X}A^{-\vert Y\vert}\big(\mathcal O_1\Vert K_j,\dot K_j\Vert_{\mathfrak{h},\kappa,A}\big)^{\!\vert\mathcal{C}(Y)\vert-1}\cdot\nonumber\\
    &\hspace{40pt}\big(\Vert K_j,\dot K_j\Vert_{\mathfrak{h},\kappa,A}\big(\Vert\zeta_{0;j}-\dot\zeta_{0,j}\Vert_{\wp}+L^{-j}\vert\varepsilon-\dot{\varepsilon}\vert \Vert\zeta_{0;j},\dot\zeta_{0,j}\Vert_{\wp}\big) + \Vert K_j-\dot K_j\Vert_{\mathfrak{h},\kappa,A}\big)\nonumber\\
    &\leqslant\mathcal{O}_A A^{-(1+\eta/3)\vert X\vert}G_\kappa(L X,\mathrm{S}_L\phi+\psi)\cdot\\
    &\hspace{40pt}\big(\Vert K_j,\dot K_j\Vert_{\mathfrak{h},\kappa,A}\big(\Vert\zeta_{0;j}-\dot\zeta_{0,j}\Vert_{\wp}+L^{-j}\vert\varepsilon-\dot{\varepsilon}\vert \Vert\zeta_{0;j},\dot\zeta_{0,j}\Vert_{\wp}\big) + \Vert K_j-\dot K_j\Vert_{\mathfrak{h},\kappa,A}\big)\nonumber.
\end{align}
In going to the last line, we used (\ref{disconnectedGrowth}) to get $A^{-\vert Y\vert}\leqslant A^{-(1+\eta/2)\vert X\vert + 9\vert\mathcal{C}(Y)\vert}$. Now for $\hat{K}_j$,
\begin{equation}
    \Vert \hat{K}_j^{(\cdot)}(\cdot,\psi) - \dot{\hat{K}}_j^{(\cdot)}(\cdot,\dot\psi)\Vert_{\mathfrak{h};\langle X,\phi\rangle} \leqslant \Vert \hat{K}_j^{(\cdot)}(\cdot,\psi) - {\hat{K}}_j^{(\cdot)}(\cdot,\dot\psi)\Vert_{\mathfrak{h};\langle X,\phi\rangle}+\Vert {\hat{K}}_j^{(\cdot)}(\cdot,\dot\psi) - \dot{\hat{K}}_j^{(\cdot)}(\cdot,\dot\psi)\Vert_{\mathfrak{h};\langle X,\phi\rangle}.
\end{equation}
The first term is more delicate. We use the fact that $\hat{K}_j$ is really only a function of $\mathrm{S}_L\phi+\psi$ to get
\begin{align}
   \hat{K}_j^X&(\phi,\psi)-\hat{K}_j^X(\phi,\dot\psi)\\
   &=\sum_{Y_0\in\mathcal{C}(X)}\int_0^1 \prod_{Y\in\mathcal{C}(X)-{Y_0}}\tilde{K}_j(Y,\phi,(1-t)\psi+t\dot\psi)\langle \mathrm{D}\vert_{\phi + t\mathrm{S}_{L^{-1}}(\dot\psi-\psi)}\tilde{K}_j(Y_0,\cdot,\dot\psi),\mathrm{S}_{L^{-1}}(\dot\psi-\psi)\rangle\,\dd t,\nonumber
\end{align}
so
\begin{align}
    &\Vert\hat{K}_j^{(\cdot)}(\cdot,\psi)-\hat{K}_j^{(\cdot)}(\cdot,\dot\psi)\Vert_{\mathfrak{h};\langle X,\phi\rangle}\\
    &\leqslant L^2\Vert\psi-\dot\psi\Vert_{\mathcal{C}^2(LX)}\sum_{Y_0\in\mathcal{C}(X)}\sup_{t\in[0,1]}\prod_{Y\in\mathcal{C}(X)-\lcurl Y_0\rcurl}\Vert \tilde{K}_j(\cdot,\cdot,(1-t)\psi+t\dot\psi)\Vert_{\mathfrak{h};\langle Y,\phi\rangle}\Vert \tilde{K}_j(\cdot,\cdot,\dot\psi)\Vert_{2\mathfrak{h};\langle Y_0,\phi + t\mathrm{S}_{L^{-1}}(\psi-\dot\psi)\rangle}\nonumber\\
    &\leqslant\mathcal{O}_A^{\vert\mathcal{C}(X)\vert} A^{-(1+\eta/3)\vert X\vert}\Vert\psi-\dot\psi\Vert_{\mathcal{C}^2(LX)}G_\kappa(L X,\mathrm{S}_L\phi+(1-t)\psi+t\dot \psi)\Vert K_j,\dot K_j\Vert_{\mathfrak{h},\kappa,A}^{\vert\mathcal{C}(X)\vert-1}\Vert K_j,\dot K_j\Vert_{2\mathfrak{h},\kappa,A}\nonumber.
\end{align}
For the second term,
\begin{align}
    \Vert {\hat{K}}_j^{(\cdot)}(\cdot,\dot\psi)& - \dot{\hat{K}}_j^{(\cdot)}(\cdot,\dot\psi)\Vert_{\mathfrak{h};\langle X,\phi\rangle}\leqslant \sum_{\varnothing\neq\mathcal{A}\subset\mathcal{C}(X)}\prod_{Y\in\mathcal{A}}\Vert \tilde{K}_j(Y,\cdot,\dot\psi)-\dot{\tilde{K}}_j(Y,\cdot,\dot\psi)\Vert_{\mathfrak{h};\langle Y,\phi\rangle}\prod_{Y\in\mathcal{A}^c}\Vert\dot{\tilde{K}}_j(Y,\cdot,\dot\psi)\Vert_{\mathfrak{h};\langle Y,\phi\rangle}\nonumber\\
    &\leqslant\mathcal{O}_A^{\vert\mathcal{C}(X)\vert} A^{-(1+\eta/3)\vert X\vert}G_\kappa(L X,\mathrm{S}_L\phi+\dot\psi)\Vert K_j,\dot K_j\Vert_{\mathfrak{h},\kappa,A}^{\vert\mathcal{C}(X)\vert-1}\cdot\\
    &\hspace{30pt}\big(\Vert K_j,\dot K_j\Vert_{\mathfrak{h},\kappa,A}\big(\Vert\zeta_{0;j}-\dot\zeta_{0,j}\Vert_{\wp}+L^{-j}\vert\varepsilon-\dot{\varepsilon}\vert \Vert\zeta_{0;j},\dot\zeta_{0,j}\Vert_{\wp}\big) + \Vert K_j-\dot K_j\Vert_{\mathfrak{h},\kappa,A}\big).\nonumber
\end{align}
By combining all this (we regroup some terms to isolate the ones which vanish when $\varepsilon = \dot\varepsilon$)
\begin{align}
    \Vert \hat{K}_j^{(\cdot)}(\cdot,\psi)& - \dot{\hat{K}}_j^{(\cdot)}(\cdot,\dot\psi)\Vert_{\mathfrak{h};\langle X,\phi\rangle}\nonumber\\
    &\leqslant\mathcal{O}_A^{\vert\mathcal{C}(X)\vert} A^{-(1+\eta/3)\vert X\vert}\Vert K_j,\dot K_j\Vert_{\mathfrak{h},\kappa,A}^{\vert\mathcal{C}(X)\vert-1}\sup_{t\in[0,1]}G_\kappa(L X,\mathrm{S}_L\phi+(1-t)\psi+t\dot\psi)\cdot\nonumber\\
    &\hspace{20pt}\Big(\big(\Vert K_j,\dot K_j\Vert_{\mathfrak{h},\kappa,A}\Vert\zeta_{0;j}-\dot\zeta_{0,j}\Vert_{\wp} + \Vert K_j-\dot K_j\Vert_{\mathfrak{h},\kappa,A}\big)\nonumber\\
    &\hspace{30pt}+\Vert K_j,\dot K_j\Vert_{2\mathfrak{h},\kappa,A}\big(\Vert\psi-\dot\psi\Vert_{\mathcal{C}^2(LX)}+L^{-j}\vert\varepsilon-\dot{\varepsilon}\vert \Vert\zeta_{0;j},\dot\zeta_{0,j}\Vert_{\wp}\big)\Big).
\end{align}
From here, the estimate on $R^{(\cdot)}_j$ follows immediately. Finally, we consider $P^{(\cdot)}_j$. For the case $\psi\neq\dot\psi$, a simple computation shows
\begin{align}
    P_j^X&(\phi,\psi)-P_j^X(\phi,\dot\psi)\\
    &=\sum_{B\in\mathcal{B}_{j+1}(X)}\int_0^1 P_j^{X-\mathrm{Int}(B)}(\phi,(1-t)\psi+t\dot\psi)e^{V_j(L B,\mathrm{S}_L\phi+(1-t)\psi+t\dot\psi)}\frac{\dd}{\dd t}V_j(L B,\mathrm{S}_L\phi+(1-t)\psi+t\dot\psi)\dd t\nonumber.
\end{align}
Taking norms,
\begin{align}
    \Vert P_j^X&(\cdot,\psi)-P_j^X(\cdot,\dot\psi)\Vert_{\mathfrak{h};\langle\phi\rangle;\mathcal{C}^2(X)}\nonumber\\
    &\leqslant \mathcal{O}_L\Vert\psi-\dot\psi\Vert_{\mathcal{C}^2(L X)}\big(\Vert\zeta_{0;j}-\dot\zeta_{0;j}\Vert_{\wp}+L^{-j}\vert\varepsilon-\dot\varepsilon\vert\Vert\zeta_{0;j},\dot\zeta_{0;j}\Vert_{\wp}\big)\cdot\\
    &\hspace{40pt}\sum_{B\in\mathcal{B}_{j+1}(X)}\Vert P_j^{(\cdot)}(\cdot,(1-t)\psi+t\dot\psi)\Vert_{\mathfrak{h};\langle X-\mathrm{Int}(B), \phi\rangle}.
\end{align}
Then for the difference,
\begin{align}
    &\Vert P_j^{(\cdot)}(\cdot,\psi)-\dot{P}_j^{(\cdot)}(\cdot,\psi)\Vert_{\mathfrak{h};\langle X, \phi\rangle}\nonumber\\
    &\leqslant\bigg\Vert \prod_{B\in\mathcal{B}_{j+1}(X)}\big(e^{V_j(L B,\mathrm{S}_L(\cdot)+\psi)}-e^{V_{j+1}(B,\cdot)+\delta\mathcal{E}_j(LB)}\big)\nonumber\\
    &\hspace{120pt}-\prod_{B\in\mathcal{B}_{j+1}(X)}\big(e^{\dot V_j(L B,\mathrm{S}_L(\cdot)+\psi)}-e^{\dot V_{j+1}(B,\cdot)+\delta\dot{\mathcal{E}}_j(L B)}\big)\bigg\Vert_{\mathfrak{h};\langle\phi\rangle;\mathcal{C}^2(X)}
    \nonumber\\
    &\leqslant\sum_{\varnothing\neq\mathcal{A}\subset\mathcal{B}_{j+1}(X)}\prod_{B\in\mathcal{A}^c}\Vert e^{\dot V_j(L B,\mathrm{S}_L(\cdot)+\psi)}-e^{\dot V_{j+1}(B,\cdot)+\delta\dot{\mathcal{E}}_j(L B)}\Vert_{\mathfrak{h};\langle\phi\rangle;\mathcal{C}^2(B)}\nonumber\cdot\\
    &\hspace{40pt}\prod_{B\in\mathcal{A}}\Vert e^{V_j(L B,\mathrm{S}_L(\cdot)+\psi)}-e^{V_{j+1}(B,\cdot)+\delta\mathcal{E}_j(LB)}-e^{\dot V_j(L B,\mathrm{S}_L(\cdot)+\psi)}+e^{\dot V_{j+1}(B,\cdot)+\delta\dot{\mathcal{E}}_j(L B)}\Vert_{\mathfrak{h};\langle\phi\rangle;\mathcal{C}^2(B)}\nonumber\\
    &\leqslant \mathcal{O}_L^{\vert X\vert}\big(\Vert\zeta_{0;j},\dot{\zeta}_j\Vert_{\wp} + \Vert K_j,\dot K_j\Vert_{\mathfrak{h},\kappa,A}\big)\cdot\\
    &\hspace{20pt}\Big(\Vert\zeta_{0;j}-\dot\zeta_{0;j}\Vert_{\wp}+\Vert K_j-\dot K_j\Vert_{\mathfrak{h},\kappa, A} + L^{-j}\vert\varepsilon-\dot\varepsilon\vert\big(\Vert\zeta_{0;j},\dot\zeta_{0;j}\Vert_{\wp}+\Vert K_j,\dot K_j\Vert_{\mathfrak{h},\kappa,A}\big)\Big).\nonumber
\end{align}

Getting the last inequality requires some computations which should be more or less standard by now. Finally, we remark that the factors $\mathcal{O}_L$ which are generated can be used to absorb $\varrho^{(2)}_{\varepsilon;j}$ in the special case $j = -\mathsf{N}-1, \varepsilon = \dot\varepsilon$.

\subsection{Proof of Proposition \ref{MainFlowEstimates}}
We decompose
\begin{equation}
    K_{j+1} - \dot{K}_{j+1} = (K_{j+1}[\varepsilon,\zeta]-K_{j+1}[\varepsilon,\dot\zeta])+(K_{j+1}[\varepsilon,\dot\zeta]-K_{j+1}[\dot\varepsilon,\dot\zeta]).
\end{equation}
For the first piece, we break the RG into the linear and nonlinear parts: $K_{j+1} = \mathcal{L}K_j + \mathcal{R}K_j$ where
\begin{align}
    \mathcal{L}K_j &= \mathcal{L}^{(1)}K_j+\mathcal{L}^{(2)}K_j+\mathcal{L}^{(3)}K_j,\nonumber\\
    \mathcal{R}K_j &= \mathcal{R}^{(1)}K_j+\mathcal{R}^{(2)}K_j+\mathcal{R}^{(3)}K_j+\mathcal{R}^{(4)}K_j+\mathcal{R}^{(5)}K_j+\mathcal{R}^{(6)}K_j;
\end{align}
and where
\begin{align}
    \mathcal{L}^{(1)}K_j &= \sum_{Y\in\mathcal{S}_j}^{\bar{Y}^L = X}\mathrm{Rem}_0\mathbb{E}\big[k_j^{(0)}(Y,\mathrm{S}_L\phi+\Psi_{\hspace{-1pt}j})\big],\nonumber\\
    \mathcal{L}^{(2)}K_j &= \sum_{q\neq 0}\sum_{Y\in\mathcal{S}_j}^{\bar{Y}^L = X}\mathbb{E}\big[k_j^{(q)}(Y,\mathrm{S}_L\phi+\Psi_{\hspace{-1pt}j})\big],\nonumber\\
    \mathcal{L}^{(3)}K_j &= \sum_{Y\in\mathcal{P}_{\mathrm{c};j}-\mathcal{S}_j}^{\bar{Y}^L=X}\mathbb{E}\big[K_j(Y,\mathrm{S}_L\phi+\Psi_{\hspace{-1pt}j})\big],\nonumber\\
    \mathcal{R}^{(1)}K_j &= \mathbb{I}_{\vert X\vert = 1}\big(\mathbb{E}\big[e^{V_j(L X-Y,\mathrm{S}_L\phi+\Psi_{\hspace{-1pt}j})}\big]-e^{V_{j+1}(X,\phi)+\delta\mathcal{E}_j(LX)}+\delta\mathcal{E}_j(LX)\big),\nonumber\\
    \mathcal{R}^{(2)}K_j &= \sum_{Y\in\mathcal{P}_{\mathrm{c};j}}^{\bar{Y}^L = X}\mathbb{E}\big[(e^{V_j(LX-Y,\mathrm{S}_L\phi+\Psi_{\hspace{-1pt}j})}-1)K_j(Y,\mathrm{S}_L\phi+\Psi_{\hspace{-1pt}j})\big],\nonumber\\
    \mathcal{R}^{(3)}K_j &= \sum_{Y\in\mathcal{P}_j}^{\bar{Y}^L = X,\vert \mathcal{C}(Y)\vert\geqslant 2}\mathbb{E}\bigg[e^{V_j(LX-Y,\mathrm{S}_L\phi+\Psi_{\hspace{-1pt}j})}\prod_{Y'\in\mathcal{C}(Y)}K_j(Y',\mathrm{S}_L\phi+\Psi_{\hspace{-1pt}j})\bigg],\nonumber\\
    \mathcal{R}^{(4)}K_j &= \mathbb{I}_{\vert X\vert\geqslant 2}\,\mathbb{E}\big[P_j^X(\phi,\Psi_{\hspace{-1pt}j})\big] + (e^{-\delta\mathcal{E}_j(LX)}-1)\mathbb{E}\big[P_j^X(\phi,\Psi_{\hspace{-1pt}j})\big],\nonumber\\
    \mathcal{R}^{(5)}K_j &= \sum_{(Z,Y_1,\mathfrak{B}\rightarrow Y_2)\rightarrow X}^{\vert\mathcal{C}(Y_1\cup Y_2)\vert\geqslant 1,\vert Z\vert + \vert\mathcal{C}(Y_1\cup Y_2)\vert\geqslant 2}\mathbb{E}\big[P_j^Z(\phi,\Psi_{\hspace{-1pt}j})R^{Y_1}_j(\phi,\Psi_{\hspace{-1pt}j})\big]J_j^{\mathfrak{B}\rightarrow Y_2},\\
    \mathcal{R}^{(6)}K_j &= \sum_{(Z,Y_1,\mathfrak{B}\rightarrow Y_2)\rightarrow X}^{\vert\mathcal{C}(Y_1\cup Y_2)\vert\geqslant 1}(e^{V_{j+1}(X-Z\cup Y_1\cup Y_2,\phi)-\delta\mathcal{E}_j(L(Z\cup Y_1\cup Y_2))}-1)\mathbb{E}\big[P_j^Z(\phi,\Psi_{\hspace{-1pt}j})R^{Y_1}_j(\phi,\Psi_{\hspace{-1pt}j})\big]J_j^{\mathfrak{B}\rightarrow Y_2}.\nonumber
\end{align}
We first control the linear parts. For $\mathcal{L}^{(1)}K_j$, compute
\begin{align}
    \Vert\mathcal{L}^{(1)}K_j[\varepsilon,\zeta]&-\mathcal{L}^{(1)}K_j[\varepsilon,\dot\zeta]\Vert_{\mathfrak{h};\langle X,\phi\rangle} \leqslant \sum_{Y\in\mathcal{S}}^{\bar{Y}^L = X}\big\Vert \mathrm{Rem}_0\mathbb{E}\big[(k_j^{(0)}[\varepsilon,\zeta]-k_j^{(0)}[\varepsilon,\dot\zeta])(Y,\mathrm{S}_L(\cdot)+\Psi_j)\big]\big\Vert_{\mathfrak{h};\langle\phi\rangle;\mathcal{C}^2(\bar{Y}^L)}\nonumber\\
    &\lesssim_1 L^{-2}\log(L)G_\kappa(X,\phi)\sum_{Y\in\mathcal{S}}^{\bar{Y}^L = X}\Vert K_j[\varepsilon,\zeta]-K_j[\varepsilon,\dot\zeta]\Vert_{\mathfrak{h},\kappa;\langle Y\rangle}\nonumber\\
    &\lesssim_1 A^{-\vert X\vert}\log(L)G_\kappa(X,\phi)\Vert K_j[\varepsilon,\zeta]-K_j[\varepsilon,\dot\zeta]\Vert_{\mathfrak{h},\kappa;\langle LX\rangle}
\end{align}
For $\mathcal{L}^{(2)}K_j$,
\begin{align}
    \Vert\mathcal{L}^{(2)}K_j[\varepsilon,\zeta]&-\mathcal{L}^{(2)}K_j[\varepsilon,\dot\zeta]\Vert_{\mathfrak{h};\langle X,\phi\rangle} \leqslant \sum_{q\neq 0}\sum_{Y\in\mathcal{S}}^{\bar{Y}^L = X}\big\Vert \mathbb{E}\big[(k_j^{(q)}[\varepsilon,\zeta]-k_j^{(q)}[\varepsilon,\dot\zeta])(Y,\mathrm{S}_L(\cdot)+\Psi_j)\big]\big\Vert_{\mathfrak{h};\langle\phi\rangle;\mathcal{C}^2(\bar{Y}^L)}\nonumber\\
    &\lesssim_1 L^2{e^{-\frac{1}{2}\beta\Gamma_{\hspace{-1pt}j}(0)}}A^{-\vert X\vert}G_\kappa(X,\phi)\Vert K_j[\varepsilon,\zeta]-K_j[\varepsilon,\dot\zeta]\Vert_{\mathfrak{h},\kappa;\langle LX\rangle}\sum_{n=0}^\infty e^{-(\sqrtbetae\Gamma_{\hspace{-1pt}j}(0)-\mathfrak{h})\sqrtbetae n}\nonumber\\
    &\lesssim_1 L^{2\sigma}A^{-\vert X\vert}G_\kappa(X,\phi)\Vert K_j[\varepsilon,\zeta]-K_j[\varepsilon,\dot\zeta]\Vert_{\mathfrak{h},\kappa;\langle LX\rangle}.
\end{align}
Finally, for $\mathcal{L}^{(3)}K_j$,

\begin{align}
    \Vert\mathcal{L}^{(3)}K_j[\varepsilon,\zeta]&-\mathcal{L}^{(2)}K_j[\varepsilon,\dot\zeta]\Vert_{\mathfrak{h};\langle X,\phi\rangle} \leqslant \sum_{Y\in\mathcal{P}_{\mathrm{c};j}-\mathcal{S}_j}^{\bar{Y}^L = X}\big\Vert \mathbb{E}\big[(K_j[\varepsilon,\zeta]-K_j[\varepsilon,\dot\zeta])(Y,\mathrm{S}_L(\cdot)+\Psi_j)\big]\big\Vert_{\mathfrak{h};\langle\phi\rangle;\mathcal{C}^2(\bar{Y}^L)}\nonumber\\
    &\leqslant \mathcal{O}_L^{\vert X\vert}A^{-(1+\eta)\vert X\vert}G_\kappa(X,\phi)\Vert K_j[\varepsilon,\zeta]-K_j[\varepsilon,\dot\zeta]\Vert_{\mathfrak{h},\kappa,A}.
\end{align}
For the nonlinear terms, there is a bit of computation. We only treat one of the more complicated terms, since the rest are very similar or even easier. Precisely, we focus on $\mathcal{R}^{(6)}$. First we expand ($W = Z \cup Y_1\cup Y_2$ for brevity)
\begin{align}
    (e&^{V_j[\varepsilon,\zeta](X-W,\phi)-\delta\mathcal{E}_j[\varepsilon,\zeta](LW)}-1)\mathbb{E}\big[P_j^Z[\varepsilon,\zeta](\phi,\Psi_{\hspace{-1pt}j})R_j^{Y_1}[\varepsilon,\zeta](\phi,\Psi_{\hspace{-1pt}j})\big]J_j^{\mathfrak{B}\rightarrow Y_2}[\varepsilon,\zeta]\nonumber\\
    &\hspace{5pt}-(e^{V_j[\varepsilon,\dot\zeta](X-W,\phi)-\delta\mathcal{E}_j[\varepsilon,\dot\zeta](LW)}-1)\mathbb{E}\big[P_j^Z[\varepsilon,\dot\zeta](\phi,\Psi_{\hspace{-1pt}j})R_j^{Y_1}[\varepsilon,\dot\zeta](\phi,\Psi_{\hspace{-1pt}j})\big]J_j^{\mathfrak{B}\rightarrow Y_2}[\varepsilon,\dot\zeta]\label{longIdentity}\\
    &=e^{V_j[\varepsilon,\zeta](X-W,\phi)}(e^{-\delta\mathcal{E}_j[\varepsilon,\zeta](LW)}-e^{-\delta\mathcal{E}_j[\varepsilon,\dot\zeta](LW)})\mathbb{E}\big[P_j^Z[\varepsilon,\zeta](\phi,\Psi_{\hspace{-1pt}j})R^{Y_1}_j[\varepsilon,\zeta](\phi,\Psi_{\hspace{-1pt}j})\big]J_j^{\mathfrak{B}\rightarrow Y_2}[\varepsilon,\zeta]+\nonumber\\
    &\hspace{15pt}(e^{V_j[\varepsilon,\zeta]}-e^{V_j[\varepsilon,\dot\zeta]})(X-W,\phi)e^{-\delta\mathcal{E}_j[\varepsilon,\dot\zeta](LW)}\mathbb{E}\big[P_j^Z[\varepsilon,\zeta](\phi,\Psi_{\hspace{-1pt}j})R^{Y_1}_j[\varepsilon,\zeta](\phi,\Psi_{\hspace{-1pt}j})\big]J_j^{\mathfrak{B}\rightarrow Y_2}[\varepsilon,\zeta]+\nonumber\\
    &\hspace{15pt}(e^{V_j[\varepsilon,\dot\zeta](X-W,\phi)-\delta\mathcal{E}_j[\varepsilon,\dot\zeta](LW)}-1)\mathbb{E}\big[(P_j^Z[\varepsilon,\zeta]-P_j^Z[\varepsilon,\dot\zeta])(\phi,\Psi_{\hspace{-1pt}j})R^{Y_1}_j[\varepsilon,\zeta](\phi,\Psi_{\hspace{-1pt}j})\big]J_j^{\mathfrak{B}\rightarrow Y_2}[\varepsilon,\zeta]+\nonumber\\
    &\hspace{15pt}(e^{V_j[\varepsilon,\dot\zeta](X-W,\phi)-\delta\mathcal{E}_j[\varepsilon,\dot\zeta](LW)}-1)\mathbb{E}\big[P_j^Z[\varepsilon,\dot\zeta](\phi,\Psi_{\hspace{-1pt}j})(R^{Y_1}_j[\varepsilon,\zeta]-R^{Y_1}_j[\varepsilon,\dot\zeta])(\phi,\Psi_{\hspace{-1pt}j})\big]J_j^{\mathfrak{B}\rightarrow Y_2}[\varepsilon,\zeta]+\nonumber\\
    &\hspace{15pt}(e^{V_j[\varepsilon,\dot\zeta](X-W,\phi)-\delta\mathcal{E}_j[\varepsilon,\dot\zeta](LW)}-1)\mathbb{E}\big[P_j^Z[\varepsilon,\dot\zeta](\phi,\Psi_{\hspace{-1pt}j})R^{Y_1}_j[\varepsilon,\dot\zeta](\phi,\Psi_{\hspace{-1pt}j})\big](J_j^{\mathfrak{B}\rightarrow Y_2}[\varepsilon,\zeta]-J_j^{\mathfrak{B}\rightarrow Y_2}[\varepsilon,\dot\zeta]).\nonumber
\end{align}
Then taking norms,
\begin{align}
    \Vert\!\cdot\!\Vert_{\mathfrak{h};\langle X,\phi\rangle} \leqslant \mathcal{O}_L^{\vert X\vert}&\mathcal{O}_A A^{-(1+\eta/4)\vert X\vert}G_\kappa(X,\phi)\cdot\nonumber\\
    &\Vert\zeta_{0;j},\dot{\zeta}_{0;j}\Vert_{\wp}\big(\Vert\zeta_{0;j}-\dot{\zeta}_{0;j}\Vert_{\wp} + \Vert K_j[\varepsilon,\zeta]-K_j[\varepsilon,\dot\zeta]\Vert_{\mathfrak{h},\kappa,A}\big).
\end{align}
Summing gives the kind of term required for the first part.

For the second part, we proceed similarly, but now there is no need to isolate the linear terms. Hence we take the basic definition of $K_{j+1}$, write down an identity like (\ref{longIdentity}), and find a bound by the quantity
\begin{equation}
    L^{-j}\vert \varepsilon-\dot\varepsilon\vert \mathcal{O}_L^{\vert X\vert}A^{-(1+\eta/3)\vert Y_1\cup Y_2^*\vert}G_{2\kappa}(Y_1,\phi)\Vert\zeta_{0;j},\dot{\zeta}_{0;j}\Vert_{\wp}^{\vert Z\vert}\big(\mathcal{O}_A\Vert K_j[\varepsilon,\dot\zeta],K_j[\dot\varepsilon,\dot\zeta]\Vert_{2\mathfrak{h},2\kappa,A}\big)^{\!\vert\mathcal{C}(Y_1\cup Y_2)\vert}.
\end{equation}
Ultimately, by summing over sets, the proposition follows.

A final remark as regards $j=-\mathsf{N}-1$: since
\begin{equation}
    K_{-\mathsf{N}}(X,\phi) = \mathbb{E}\big[P^X_{-\mathsf{N}-1}(\phi,\Psi_{\hspace{-1pt}-\mathsf{N}-1})\big],
\end{equation}
much simpler computations than above verify this special case.

\section{Laplace transforms} \label{AppLaplace}

We are considering probability measures $\mu$ on $\mathcal{D}'(\mathbb{T}^d)$ for which the Laplace transform
\begin{equation}
    \mathrm{Z}_\mu(f) = \int e^{\langle\phi,f\rangle}\mu(\dd\phi)
\end{equation}
belongs to $\mathfrak{C}$. In this case the complex Laplace transform
\begin{equation}
    \mathrm{Z}_\mu^{\mathbb{C}}(f) = \int e^{\langle\phi,\mathrm{Re}(f)\rangle + i\langle\phi,\mathrm{Im}(f)\rangle}\mu(\dd\phi)
\end{equation}
is still Fr\'echet-analytic on $\mathcal{C}^\infty(\mathbb{T}^d;\mathbb{C})$. We want to understand how uniform convergence of Laplace transforms as continuous functions gives us information about weak convergence of measures. As far as general notions of infinite-dimensional complex analysis go, \cite{Dineen1999} is a comprehensive reference.

\begin{prop}
    Suppose $\mu_n$ is a sequence of probability measures on $\mathcal{D}'(\mathbb{T}^d)$ with $\mathrm{Z}_{\mu_n}\in\mathfrak{C}$. Suppose further that $\mathrm{Z}_{\mu_n}$ converge in norm to some functional $\mathrm{Z}$. Then $\mathrm{Z}$ is the Laplace transform of a probability measure $\mu$ with Laplace transform in $\mathfrak{C}$, $\mu_n\rightarrow\mu$ weakly, and $\mathrm{Z}_\mu = \mathrm{Z}$.
\end{prop}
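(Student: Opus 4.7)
The plan is to combine a tightness argument with uniqueness of a probability measure given its Laplace transform, thereby identifying the norm limit $\mathrm{Z}$ as the Laplace transform of a limit measure $\mu$. The prerequisite is that boundedness of $\mathrm{Z}_{\mu_n}$ in $\mathfrak{C}$ yields uniform Gaussian control: there is $C > 0$ with $\int e^{\langle\phi,J\rangle}\,\mu_n(\dd\phi) \leqslant C\,e^{\gamma\Vert J\Vert_*^2}$ for all $n$ and all $J\in\mathcal{C}^\infty(\mathbb{T}^d)$. Applying this with $tJ$ and optimizing over $t\in\mathbb{R}$ gives a sub-Gaussian moment bound $\int \vert\langle\phi, J\rangle\vert^p\,\mu_n(\dd\phi) \lesssim_p \Vert J\Vert_*^p$ for every $p$, uniformly in $n$.

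For tightness, let $s$ be the Sobolev order appearing in $\Vert\cdot\Vert_*$ and fix $s' > s + d/2$, then $s'' > s'$. Using a Fourier basis $(e_k)_{k\in\mathbb{Z}^d}$ of $\mathrm{L}^2(\mathbb{T}^d)$ with $\Vert e_k\Vert_* \lesssim_1 (1+\vert k\vert)^s$, the moment bound and orthogonality give
\begin{equation}
\sup_n \int \Vert\phi\Vert_{H^{-s'}}^2\,\mu_n(\dd\phi) = \sup_n\sum_k (1+\vert k\vert^2)^{-s'}\int \vert\langle\phi, e_k\rangle\vert^2\,\mu_n(\dd\phi) \lesssim_1 \sum_k (1+\vert k\vert^2)^{s-s'} < \infty.
\end{equation}
Chebyshev's inequality together with the compact embedding $H^{-s'} \hookrightarrow H^{-s''}$ produces a tight family on the Polish space $H^{-s''}(\mathbb{T}^d)$, which clearly implies tightness on $\mathcal{D}'(\mathbb{T}^d)$.

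I would then invoke Prokhorov to extract a subsequence $\mu_{n_k}$ converging weakly on $H^{-s''}$ to some Borel probability measure $\mu$. For $J \in \mathcal{C}^\infty(\mathbb{T}^d)$ the functional $\phi \mapsto e^{\langle\phi,J\rangle}$ is continuous on $H^{-s''}$; although unbounded, the uniform bound $\sup_n \int e^{2\langle\phi,J\rangle}\,\mu_n(\dd\phi) < \infty$ supplies the uniform integrability needed to pass to the limit, so $\mathrm{Z}_\mu(J) = \mathrm{Z}(J)$. In particular $\mathrm{Z}$ lies in $\mathfrak{C}$, and is the Laplace transform of a probability measure, so $\mathrm{Z}\in\mathfrak{L}$.

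Uniqueness of $\mu$, and hence full sequential convergence $\mu_n \to \mu$ on $\mathcal{D}'(\mathbb{T}^d)$, follows because for any finite collection $J_1,\ldots,J_m \in \mathcal{C}^\infty$, the map $(t_1,\ldots,t_m) \mapsto \mathrm{Z}_\mu\big(\sum_i t_i J_i\big)$ extends to an entire function of exponential-quadratic growth on $\mathbb{C}^m$; its restriction to $i\mathbb{R}^m$ is the joint characteristic function of $(\langle\phi,J_1\rangle,\ldots,\langle\phi,J_m\rangle)$ under $\mu$, which therefore determines the finite-dimensional pushforwards. Since smooth test functions separate points and generate the Borel $\sigma$-algebra on $\mathcal{D}'(\mathbb{T}^d)$, $\mu$ is uniquely pinned down, so every subsequential weak limit equals $\mu$. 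The main obstacle is the passage to the limit in the Laplace transform under weak convergence, since the integrand is unbounded on $H^{-s''}$; this is precisely what the uniform sub-Gaussian moment bound inherited from $\mathrm{Z}_{\mu_n}\in\mathfrak{C}$ is designed to resolve.
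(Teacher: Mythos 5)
Your proof is correct, but it takes a genuinely different route from the paper's. The paper works through complex analysis: it slices the complex Laplace transforms $\mathrm{Z}^{\mathbb{C}}_{\mu_n}$ to finite-dimensional holomorphic functions on $\mathbb{C}^{2r}$, uses Montel's theorem (normal families from the uniform Gaussian bound) and the identity principle to upgrade convergence on the real slice to locally uniform convergence on all of $\mathbb{C}^{2r}$, and then invokes the L\'evy--Fernique continuity theorem for nuclear spaces to obtain weak convergence directly from pointwise convergence of characteristic functions. You instead bypass L\'evy--Fernique entirely: you extract uniform sub-Gaussian moments from $\mathrm{Z}_{\mu_n}\in\mathfrak{C}$, sum them against a Fourier basis to get a uniform $H^{-s'}$ second-moment bound, deduce tightness on the Polish space $H^{-s''}$ via the compact Sobolev embedding, apply Prokhorov, and pass to the limit in the Laplace transform by uniform integrability. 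The analyticity that does all the heavy lifting in the paper's argument enters yours only at the end, to deduce uniqueness of subsequential limits from finite-dimensional characteristic functions. The trade is roughly this: the paper's route is shorter once one accepts the nuclear-space L\'evy theorem as a black box, and it produces the holomorphic extension of $\mathrm{Z}^{\mathbb{C}}_\mu$ as a useful byproduct (which the paper reuses in the following lemma for Cauchy-type moment bounds); your route is more elementary and self-contained, requiring only Prokhorov on a Polish space and the compact embedding $H^{-s'}\hookrightarrow H^{-s''}$, at the price of a more explicit tightness computation. One small detail worth spelling out if you write this up: the $\mu_n$ are a priori measures on $\mathcal{D}'(\mathbb{T}^d)$, so you should note that the uniform $H^{-s'}$ second-moment bound forces $\mu_n(H^{-s'})=1$, which justifies regarding them as Borel probability measures on the Polish space $H^{-s''}$ before applying Prokhorov, and that weak convergence on $H^{-s''}$ pushes forward to weak convergence on $\mathcal{D}'$ under the continuous inclusion.
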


\begin{proof}
    Consider some vectors $f_1,\ldots,f_r,g_1,\ldots,g_r \in\mathcal{C}^\infty$. Define
    \begin{equation}
        F_n(z_1,\ldots,z_{2r}) = \mathrm{Z}^\mathbb{C}_{\mu_n}(z_1 f_1+\ldots+z_r f_r + z_{r+1} g_1+\ldots+z_{2r}g_r).
    \end{equation}
    These are holomorphic functions on $\mathbb{C}^{2r}$. On the real slice $\mathbb{R}^{2r}$, we know that $F_n$ converge locally uniformly to some $G$ defined by $\mathrm{Z}$. Now we want to show that every subsequence of $F_n$ contains a subsubsequence which converges to some $F$ (independent of the subsequence) and this will establish convergence.

    Any subsequence $F_{n_k}$ does contain a subsubsequence converging to some holomorphic $F$ on $\mathbb{C}^{2r}$ by compactness (since $F_n$ are locally bounded and hence a normal family). Suppose that some other subsubsequence converges to an $F'$ which might be different from $F$. But $F$ and $F'$ coincide on $\mathbb{R}^{2r}$ and hence by the identity principle $F = F'$.

    With this established, it is immediate that $\mathrm{Z}^{\mathbb{C}}_{\mu_n}$ converge pointwise to some function $\mathrm{Z}^{\mathbb{C}}$ bounded by $G_{2\kappa_*}$ which extends $\mathrm{Z}$. It is also clear that it is G\^ateaux-holomorphic:
    \begin{equation}
        \mathrm{Z}^{\mathbb{C}}(u + iv+z_1(f_1+i g_1)+\ldots+(f_r+ig_r)) = \lim_{n\rightarrow\infty}F_n(1,i,z_1,\ldots,z_r,z_1,\ldots,z_r),
    \end{equation}
    which is a uniform-on-compacts limit of restrictions of holomorphic functions to affine subspaces. Therefore the G\^ateaux derivatives indeed exist. But a locally bounded G\^ateaux-holomorphic function is Fr\'echet-holomorphic. In particular, this means that the characteristic functions $\mathrm{Z}^{\mathbb{C}}_{\mu_n}(i f)$ converge pointwise to a function continuous at $x = 0$, so by the Lévy-Fernique theorem for nuclear spaces (cf. e.g. \cite{LevyFernique}, Theorem 5.1 and references), $\mu_n$ converge weakly to a measure $\mu$ with characteristic function $\mathrm{Z}^\mathbb{C}(i f)$. Since this characteristic function has a holomorphic extension, Cauchy bounds lead to moment estimates which suffice to conclude that $\mu$ indeed has a Laplace transform. Again by an identity principle, $\mathrm{Z}_\mu = \mathrm{Z}$. For completeness, we provide the details of this last step in the next proposition.
\end{proof}

\begin{lem}
    Suppose the characteristic function $\Phi$ of a probability measure $\mu$ on $\mathcal{D}'(\mathbb{T}^d)$ has a G\^ateaux-holomorphic extension to $\mathcal{C}^\infty(\mathbb{T}^d;\mathbb{C})$. Then this is the (rotated) Laplace transform of $\mu$.
\end{lem}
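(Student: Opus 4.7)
The natural plan is to reduce this to the classical one-dimensional theorem that a scalar probability measure whose characteristic function extends to an entire function actually has all exponential moments, with the extension coinciding with the rotated Laplace transform. For each real test function $f \in \mathcal{C}^\infty(\mathbb{T}^d;\mathbb{R})$, let $\sigma_f$ denote the pushforward of $\mu$ under the continuous linear functional $\phi \mapsto \langle\phi,f\rangle$; this is a Borel probability measure on $\mathbb{R}$ whose characteristic function is $\hat\sigma_f(t) = \Phi(tf)$. Restricting the G\^ateaux-holomorphic extension of $\Phi$ to the complex line $\mathbb{C}f$ shows that $\hat\sigma_f$ extends to an entire function $F_f$ on $\mathbb{C}$ with $F_f(z) = \Phi(zf)$. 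Once the 1D statement is granted, $\Phi(-if) = F_f(-i) = \int e^x\,\sigma_f(\dd x) = \int e^{\langle\phi,f\rangle}\,\mu(\dd\phi)$, which is exactly what the proposition asserts.

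For the 1D step, the first task is to recover moments from analyticity at $0$. Since $F_f$ is entire, $\hat\sigma_f$ is real-analytic at the origin and hence has real derivatives of all orders there. The standard symmetric second-difference identity,
\[\frac{\hat\sigma_f(h) - 2\hat\sigma_f(0) + \hat\sigma_f(-h)}{h^2} = -\int \frac{2(1-\cos(hx))}{h^2}\,\sigma_f(\dd x),\]
combined with Fatou's lemma applied to the nonnegative integrand as $h\to 0$, forces $\int x^2\sigma_f(\dd x) < \infty$ with $\hat\sigma_f''(0) = -\int x^2\sigma_f(\dd x)$. Iterating to higher even orders yields $m_{2k} := \int x^{2k}\sigma_f(\dd x) < \infty$ and $\hat\sigma_f^{(2k)}(0) = (-1)^k m_{2k}$; odd moments are then controlled by Cauchy-Schwarz. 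Cauchy bounds on $F_f$ on a disk of radius $R$ give $m_{2k}/(2k)! \leq C_R R^{-2k}$ for every $R > 0$.

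With these moment bounds, the monotone convergence theorem supplies
\[\int \cosh(yx)\,\sigma_f(\dd x) = \sum_{k=0}^\infty \frac{y^{2k}}{(2k)!}m_{2k} < \infty\]
for every $y \in \mathbb{R}$, so all exponential moments of $\sigma_f$ are finite. A Fubini interchange (now absolutely convergent) gives $\int e^{-yx}\sigma_f(\dd x) = \sum_n (-y)^n m_n/n!$, while the Taylor series of the entire function is $F_f(z) = \sum_n i^n m_n z^n/n!$, which evaluated at $z = -i$ matches $\sum_n m_n/n! = \int e^x\sigma_f(\dd x)$. The main obstacle is the moment-recovery step: deducing finiteness of $\int x^{2k}\sigma_f(\dd x)$ from the bare existence of $\hat\sigma_f^{(2k)}(0)$ as a real derivative. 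Once this is in hand, the remainder is routine bookkeeping with Taylor series, Fubini, and monotone convergence.
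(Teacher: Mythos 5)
Your proposal is correct and follows essentially the same approach as the paper: push forward to one dimension, recover moments from the derivatives of the (entire) characteristic function at the origin, apply Cauchy bounds, and then close by a series/Fubini argument and the identity principle on finite-dimensional slices. The only minor difference is that the paper cites a textbook result (Klenke, Theorem 15.35) for the step recovering $\int x^{2k}\sigma_f(\dd x)$ from $\hat\sigma_f^{(2k)}(0)$, whereas you sketch its proof via the symmetric-difference/Fatou argument; these are the same lemma, so the arguments coincide in substance.
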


\begin{proof}
    The main point is to establish moment bounds. Let $\mathrm{ev}_{\! f}$ by the evaluation functional at $f\in\mathcal{C}^\infty$, which becomes a random variable under $\mu$. Its characteristic function $\varphi_{\hspace{-1pt}f}(t) = \mu(e^{i t\,\mathrm{ev}_{\! f}})$ satisfies
    \begin{equation}
        \varphi_{\hspace{-1pt}f}(t) = \Phi(t f).
    \end{equation}
    Then $\varphi_{\hspace{-1pt}f}$ admits a holomorphic extension to $\mathbb{C}$. A well-known result (\cite{Klenke2020}, Theorem 15.35) states that since $\varphi_{\hspace{-1pt}f}$ is smooth,
    \begin{equation}
        \mathbb{E}\big[\vert\mathrm{ev}_{\! f}\vert^{2n}\big] \leqslant 2n\vert\varphi_{\hspace{-1pt}f}^{(2n)}(0)\vert.
    \end{equation}
    For odd moments we use
    \begin{equation}
        \mathbb{E}\big[\vert\mathrm{ev}_{\! f}\vert^{2n+1}\big] \leqslant 1+\mathbb{E}\big[\vert\mathrm{ev}_{\! f}\vert^{2n+2}\big] \leqslant 1 + (2n+2)\vert\varphi_{\hspace{-1pt}f}^{(2n+2)}(0)\vert.
    \end{equation}
    Integrating over a circle of radius $r$ gives
    \begin{equation}
        \varphi_{\hspace{-1pt}f}^{(n)}(0) = \frac{n!}{2\pi i}\oint_{\vert z\vert = r} \frac{\varphi_{\hspace{-1pt}f}(z)}{z^{n+1}}\dd z,
    \end{equation}
    that is,
    \begin{equation}
        \vert\varphi_{\hspace{-1pt}f}^{(2n)}(0)\vert\leqslant (2n)! M_{f,r}r^{-2n}
    \end{equation}
    where $M_{f,r}$ is an upper bound on $\varphi_{\hspace{-1pt}f}$ on $\vert z\vert \leqslant r$. It follows that
    \begin{align}
        \mathbb{E}\big[\vert\mathrm{ev}_{\! f}\vert^{2n}\big]&\leqslant 2n(2n)! M_{f,r}r^{-2n},\nonumber\\
        \mathbb{E}\big[\vert\mathrm{ev}_{\! f}\vert^{2n+1}\big]&\leqslant 1+ (2n+2)(2n+2)! M_{f,r}r^{-2n-2}.
    \end{align}
    But we can take $r = 2$ and then the integral
    \begin{equation}
        \int e^{\vert\langle\phi,f\rangle\vert}\mu(\dd\phi) = \sum_{k=0}^\infty \frac{1}{k!}\mathbb{E}\big[\vert \mathrm{ev}_{\! f}\vert^k\big] \leqslant e + M_{f,2}\sum_{k=0}^\infty (k+1)^22^{-k}
    \end{equation}
    converges. Thus $\mu$ has a Laplace transform which has to be the given extension of $\Phi$ by the identity principle (applied, as before, to \enquote{finite-dimensional slices} of $\Phi$).
\end{proof}

Finally, we state a result about the tails of the measure given a Gaussian bound on the Laplace transform. The result below can probably be formulated more abstractly, but we provide a direct proof avoiding heavy machinery.

\begin{prop}
    Suppose $\mu$ is a measure on $\mathcal{D}'(\mathbb{T}^d)$ with Laplace transform bounded as
    \begin{equation}
        \mathrm{Z}_\mu(f) \leqslant c\,e^{\mathsf{p}(f)^\mathfrak{q}}
    \end{equation}
    where $\mathsf{p}$ is some continuous seminorm on $\mathcal{C}^\infty(\mathbb{T}^d)$, $c > 0$, and $\mathfrak{q}\in (1,\infty)$ with Hölder conjugate $\mathfrak{p} = \frac{\mathfrak{q}}{\mathfrak{q}-1}$. Then there are an $\mathrm{L}^\mathfrak{p}$-based Sobolev norm $\Vert\!\cdot\!\Vert_{\natural}$ and $\lambda > 0$ small such that
    \begin{equation}
        \mu\big(e^{\lambda\Vert\cdot\Vert^{\mathfrak{p}}_\natural}\big) < \infty.
    \end{equation}
\end{prop}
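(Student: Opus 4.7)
The plan is to dualize the Laplace bound into Chernoff tail estimates for individual linear functionals, and then assemble these into an integrated $\mathrm{L}^\mathfrak{p}$-norm of an appropriate Bessel-potential transform of $\phi$.

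First, for any $f\in\mathcal{C}^\infty(\mathbb{T}^d)$ and $t>0$, Markov's inequality combined with Young's inequality (via $\mathfrak{p}=\mathfrak{q}/(\mathfrak{q}-1)$) yields
\begin{equation*}
    \mu(\langle\phi,f\rangle>t) \leqslant c\inf_{s>0}e^{-st+s^\mathfrak{q}\mathsf{p}(f)^\mathfrak{q}} \leqslant c\,e^{-c_0 t^\mathfrak{p}/\mathsf{p}(f)^\mathfrak{p}},
\end{equation*}
with the same estimate for $-\langle\phi,f\rangle$. Integrating the tail directly gives, for absolute constants $c_0,c_1,C>0$,
\begin{equation*}
    \mu\big(e^{\lambda\vert\langle\phi,f\rangle\vert^\mathfrak{p}}\big) \leqslant C \quad\text{whenever}\quad \lambda\mathsf{p}(f)^\mathfrak{p}\leqslant c_1.
\end{equation*}

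Second, since $\mathsf{p}$ is continuous on $\mathcal{C}^\infty$, there exists $s_0$ with $\mathsf{p}(f)\lesssim_1\Vert f\Vert_{H^{s_0}}$. Fix $s>s_0+d/2$ and define
\begin{equation*}
    \Vert\phi\Vert_\natural \coloneqq \big\Vert(1-\Delta)^{-s/2}\phi\big\Vert_{\mathrm{L}^\mathfrak{p}(\mathbb{T}^d)} = \bigg(\int_{\mathbb{T}^d}\vert\langle\phi,K_x\rangle\vert^\mathfrak{p}\,\dd x\bigg)^{\!\!1/\mathfrak{p}},
\end{equation*}
where $K_x=(1-\Delta)^{-s/2}\delta_x$ is the Bessel kernel centred at $x$. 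By translation invariance $\Vert K_x\Vert_{H^{s_0}}=\Vert K_0\Vert_{H^{s_0}}<\infty$ (precisely because $s-s_0>d/2$), and after extending $\mathsf{p}$ by continuity to $H^{s_0}$ we obtain $K_\bullet\coloneqq\sup_{x\in\mathbb{T}^d}\mathsf{p}(K_x)<\infty$.

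With these in hand, apply Jensen's inequality to the probability measure $\dd x$ on the unit torus and then Fubini:
\begin{equation*}
    \mu\big(e^{\lambda\Vert\phi\Vert_\natural^\mathfrak{p}}\big) \leqslant \mu\bigg(\int_{\mathbb{T}^d}e^{\lambda\vert\langle\phi,K_x\rangle\vert^\mathfrak{p}}\,\dd x\bigg) = \int_{\mathbb{T}^d}\mu\big(e^{\lambda\vert\langle\phi,K_x\rangle\vert^\mathfrak{p}}\big)\,\dd x.
\end{equation*}
For $\lambda K_\bullet^\mathfrak{p}\leqslant c_1$ the integrand is bounded by $C$ uniformly in $x$ via step one, so $\mu(e^{\lambda\Vert\phi\Vert_\natural^\mathfrak{p}})<\infty$ for sufficiently small $\lambda>0$.

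The main technical obstacle is not the estimates themselves but the pairing $\langle\phi,K_x\rangle$: since $K_x\in H^{s_0}$ but not $\mathcal{C}^\infty$, it must be interpreted via density. The step-one bound shows that $f\mapsto\langle\phi,f\rangle$ extends by continuity from $\mathcal{C}^\infty$ to $H^{s_0}$ as a map into $\mathrm{L}^q(\mu)$ for every $q<\infty$; joint measurability of $(x,\phi)\mapsto\langle\phi,K_x\rangle$ then follows from continuity of $x\mapsto K_x$ in $H^{s_0}$, legitimizing the Fubini step.
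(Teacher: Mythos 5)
Your proof is correct, and it takes a genuinely different route from the paper's. Both arguments begin the same way, dualizing the Laplace bound into a sub-$\mathfrak{p}$-exponential tail for individual linear functionals $\langle\phi,f\rangle$, with the key normalization being that the tail bound is uniform once $\mathsf{p}(f)$ is controlled. From there the two proofs diverge. The paper passes to Fourier space via $\mathfrak{F}_*$, tests against the lattice delta functions $\delta_x^{\mathrm{Re}},\delta_x^{\mathrm{Im}}$ (which are genuinely smooth elements of the dual pair, so no density argument is needed), derives $n$-th moment bounds from the Chernoff tail, and resums the exponential series using a multi-index Hölder inequality; the resulting norm is the Fourier--Lebesgue norm $\Vert\mathfrak{F}(\cdot)\Vert_{-r,\mathfrak{p}}$. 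You instead stay in physical space, test against the Bessel kernels $K_x = (1-\Delta)^{-s/2}\delta_x$ (which forces an extension-by-density argument since $K_x\notin\mathcal{C}^\infty$), and replace the moment-resummation by a one-line Jensen-plus-Fubini step exploiting that $\mathrm{d}x$ is a probability measure on $\mathbb{T}^d$; your norm is the genuine Bessel-potential Sobolev norm $W^{-s,\mathfrak{p}}$. Your approach is shorter and arguably matches the stated claim (``$\mathrm{L}^\mathfrak{p}$-based Sobolev norm'') more literally, at the cost of the measure-theoretic bookkeeping you rightly flag at the end: one must check that $x\mapsto\langle\phi,K_x\rangle$ admits a jointly measurable version, which you do correctly via continuity of $x\mapsto K_x$ in $H^{s_0}$ and the $L^q(\mu)$-continuity of $f\mapsto\langle\cdot,f\rangle$. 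The paper's route is slightly more elementary (no Jensen, no density extension, everything is explicit) and would extend more directly to settings where the spatial domain does not carry a natural probability measure, but for the torus your argument is cleaner.
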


\begin{proof}
    Let $\mathfrak{F}$ be the Fourier transform, which induces an isomorphism of the dual pairs $(\mathcal{C}^\infty(\mathbb{T}^d),\mathcal{D}'(\mathbb{T}^d))$ and $(\mathcal{S}_{\mathrm{i}}^{\vphantom{a}}(\mathbb{Z}^d;\mathbb{C}),\mathcal{S}_{\mathrm{i}}'(\mathbb{Z}^d;\mathbb{C}))$. The second couple are just function spaces on $\mathbb{Z}^d$ with the natural polynomial growth conditions, where the subscript indicates that the functions in question satisfy $f(-x) = \overline{\hspace{-1pt}f(x)\hspace{-1pt}}\,$. The duality pairing is explicitly given by
    \begin{equation}
        \langle f,g\rangle_{\mathcal{S}^{\vphantom{a}}_\mathrm{i}\hspace{-0.6pt}(\mathbb{Z}^d),\mathcal{S}'_{\mathrm{i}}\hspace{-0.6pt}(\mathbb{Z}^d)} = \sum_{x\in\mathbb{Z}^d} \overline{\hspace{-1pt}f(x)\hspace{-1pt}}\,g(x).
    \end{equation}
    A natural system of norms is given by ($p\in[1,\infty],s\in\mathbb{R}$)
    \begin{equation}
    \Vert f\Vert_{s,p} = \Vert (1+\vert\cdot\vert_\infty)^s f\Vert_{\ell^p(\mathbb{Z}^d;\mathbb{C})}.
    \end{equation}
    If we consider the pushforward measure $\mu_* =\mathfrak{F}_*\mu$, we find
    \begin{equation}
        \mathrm{Z}_{\mu_*}\!(f) = \int e^{\langle f,\mathfrak{F}\phi\rangle}\mu(\dd\phi) = \int e^{\langle \mathfrak{F}^{-1}\!f,\phi\rangle}\mu(\dd\phi) = \mathrm{Z}_\mu(\mathfrak{F}^{-1}\hspace{-1pt}f).
    \end{equation}
    Thus, there are $s\in\mathbb{R}$, $\alpha > 0$ such that
    \begin{equation}
        \mathrm{Z}_{\mu_*}\!(f) \leqslant c\,e^{\alpha\Vert f\Vert_{s,\mathfrak{q}}^\mathfrak{q}/\mathfrak{q}}.
    \end{equation}
    If we define
    \begin{equation}
        \delta_{x}^{\mathrm{Re}} = \frac{\delta_x+\delta_{-x}}{2},\quad \delta_{x}^{\mathrm{Im}} = \frac{\delta_x-\delta_{-x}}{2i},
    \end{equation}
    then
    \begin{equation}
        \vert f(x)\vert^p \leqslant 2^{p-1}\big(\vert\langle \delta_{x}^{\mathrm{Re}},f\rangle\vert^p + \vert\langle \delta_{x}^{\mathrm{Im}},f\rangle\vert^p\big)
    \end{equation}
    and we can use Hölder's inequality to compute
    \begin{align}
        \mu_*&\hspace{-1pt}\big(e^{\lambda\Vert\cdot\Vert_{-r,\mathfrak{p}}^\mathfrak{p}}\hspace{-1pt}\big) \nonumber\\
        &\leqslant
        \sum_{n=0}^\infty \frac{1}{n!}\big(2^{\mathfrak{p}-1}\lambda\big)^{\! n}\sum_{x_1,\ldots,x_n\in\mathbb{Z}^d,u_1,\ldots,u_n\in\lcurl\mathrm{Re},\mathrm{Im}\rcurl}(1+\vert x_1\vert_\infty)^{-r\mathfrak{p}}\ldots (1+\vert x_n\vert_\infty)^{-r\mathfrak{p}}\mu_*\hspace{-1pt}\big(\vert\langle \delta^{u_1}_{x_1},\cdot\rangle\vert^\mathfrak{p}\ldots\vert\langle \delta^{u_n}_{x_n},\cdot\rangle\vert^\mathfrak{p}\big)\nonumber\\
        &\leqslant\sum_{n=0}^\infty \frac{1}{n!}\bigg(2^{\mathfrak{p}-1}\lambda\sum_{x\in\mathbb{Z}^d,u\in\lcurl\mathrm{Re},\mathrm{Im}\rcurl}(1+\vert x\vert_\infty)^{-r\mathfrak{p}}\mu_*\hspace{-1pt}\big(\vert\langle \delta^u_{x},\cdot\rangle\vert^{n\mathfrak{p}}\big)^{\!1/n}\bigg)^{\! n}.
    \end{align}
    Apply Markov's inequality with $t \geqslant 0$ arbitrary:
    \begin{align}
        \mu_*\hspace{-1pt}\big(\lcurl \vert\langle \delta^u_{x},\cdot\rangle\vert\geqslant t\rcurl\big) &\leqslant \inf_{y\in[0,\infty)}\frac{\mu_*\hspace{-1pt}\big(\!\cosh(y\hspace{1pt}\langle\delta^u_{x},\cdot\rangle)\big)}{\cosh(yt)}\nonumber\\
        &\leqslant \inf_{y\in[0,\infty)}e^{-yt}\big(\mathrm{Z}_{\mu_*}\!(t\delta^u_{x})+\mathrm{Z}_{\mu_*}\!(-t\delta^u_{x})\big)\nonumber\\
        &\leqslant 2c\inf_{y\in[0,\infty)} e^{2^{1-\mathfrak{q}}\alpha y^\mathfrak{q}(1+\vert x\vert_\infty)^{s\mathfrak{q}}/\mathfrak{q}-y t}\nonumber\\
        &= 2c\,e^{-2\mathfrak{p}^{-1} \alpha^{-\mathfrak{p}/{\mathfrak{q}}}(1+\vert x\vert_\infty)^{-s\mathfrak{p}}t^{\mathfrak{p}}},
    \end{align}
    and therefore if $n\geqslant 1$,
    \begin{equation}
        \mu_*\hspace{-1pt}\big(\vert\langle \delta^u_{x},\cdot\rangle\vert^{n\mathfrak{p}}\big) = \int_0^\infty \mu_*\hspace{-1pt}\big(\lcurl \vert\langle \delta^u_{x},\cdot\rangle\vert\geqslant t^{1/n\mathfrak{p}}\rcurl\big)\dd t
        \leqslant2c\,n!\big(2^{-1}\mathfrak{p} \alpha^{\mathfrak{p}/{\mathfrak{q}}}(1+\vert x\vert_\infty)^{s\mathfrak{p}}\big)^{\! n}.
    \end{equation}
    Using this, it follows that if $r > s+d/\mathfrak{p}$ and $\lambda$ small enough,
    \begin{align}
    \mu_*\hspace{-1pt}\big(e^{\lambda\Vert\cdot\Vert_{-r,\mathfrak{p}}^\mathfrak{p}}\hspace{-1pt}\big)
    \leqslant c + 2c\sum_{n=1}^\infty \bigg(2^{d}d\,\mathfrak{p} \alpha^{\mathfrak{p}/{\mathfrak{q}}}\lambda\sum_{k=1}^\infty k^{-(r-s)\mathfrak{p}+d-1}\bigg)^{\! n} <\infty.
    \end{align}
    $\Vert\!\cdot\!\Vert_\natural = \Vert\mathfrak{F}(\cdot)\Vert_{-r,\mathfrak{p}}$ is the Sobolev norm we were after.
\end{proof}

\section*{Acknowledgments}

I thank Roland Bauerschmidt, Jon Dimock, Massimiliano Gubinelli, Peter Paulovics, and Sarah-Jean Meyer for useful input. Special thanks go to David Brydges for his help and many enlightening conversations without which this paper might not have appeared. The work was supported by Oxford's Clarendon Fund.

\printbibliography

\end{document}